\newtheorem{remark}[theorem]{Remark}
\newcommand{\EE}{{\mathbb E}}
\newcommand{\PP}{{\mathbb P}}
\def \Fc{{\cal F}}
\def \g{\gamma}
\def \k{\kappa}
\def \l{\lambda}
\def \n{\nu}
\def \r{\rho}
\def \f{\varphi}
\def \o{\omega}
\def \e{\varepsilon}
\renewcommand{\thefootnote}{\fnsymbol{footnote}}
\title{Efficient approximations for utility-based pricing}
\author[2]{Laurence Carassus}
\author[1,2,\thanks{}]{Massinissa Ferhoune}
\affil[1]{L\'{e}onard de Vinci P\^ole Universitaire, Research Center, 92 916 Paris La D\'{e}fense, France}
\affil[2]{Laboratoire de Math\'{e}matiques de Reims, UMR9008 CNRS et Universit\'{e} de Reims Champagne-Ardenne, France}
\runningauthor{Laurence Carassus and Massinissa Ferhoune}
\begin{document}

\maketitle
\begin{abstract}
In a context of illiquidity, the reservation price  is a well-accepted alternative to the usual martingale approach which does not apply. However, this price is not available in closed form and requires numerical methods such as Monte Carlo or polynomial approximations to evaluate it. We show that these methods can be inaccurate and propose a deterministic decomposition of the reservation price using the Lambert function. This decomposition allows us to perform an improved Monte Carlo method, which we name Lambert Monte Carlo (LMC) and to give deterministic approximations of the reservation price and of the optimal strategies based on the Lambert function. We also give an answer to the problem of selecting a hedging asset that minimizes the reservation price and also the cash invested.
 Our theoretical results are illustrated by numerical simulations.\\
\vspace{6pt}
\textbf{Keywords} : Utility-based pricing, Utility-based hedging, Incomplete model, Monte Carlo method
\end{abstract}


\section{Introduction}
\let\thefootnote\relax\footnotetext{*Corresponding author, email : fmassinissa@free.fr}We consider a continuous-time financial model with time horizon $T$. An economic agent will receive, at time $T,$ $\lambda$ units of a derivative written on some non-traded stock $S$ and seek to evaluate the price she is willing to pay for that.  
This is the classical problem of pricing in mathematical finance, but as the investor is facing trading restriction, she wants to estimate the risk premium related to this lack of marketability. Indeed, in a complete market when $S$ can be dynamically traded (i.e. can be bought and sold at any time between time $0$ and time $T$), the derivative price is given by its discounted expected value under some martingale measure and the marketability risk premium is null. 
As this approach is not feasible when $S$ is not dynamically traded, we assume that the investor hedges her position on the derivative on $S$ by trading another \lq\lq{}similar\rq{}\rq{} liquid asset $P.$ The asset $P$ must be thought of as a related index or as another stock from a closed industry sector. We follow \cite{refHN} and use the reservation price concept in order to value the illiquidity of $S$. 
The reservation price (also known as indifference price) is the price $p$ which leaves the agent indifferent between paying $p$ at time $0$ and receiving $\l$ units of shares of the derivative on $S$ at time $T,$ or doing nothing. This is a preference-based bid price. In the case of a Black-Scholes-like market, when the preferences of the agent are represented by some exponential utility function with risk aversion coefficient $\gamma$, this problem has been well-studied in the literature: see \cite{ref2, ref1,ref3, H94, refHH} and \cite{HM15} and the references therein.
It is possible to obtain a semi-closed formula for the reservation price which involves the logarithm of the Laplace transform of some function of a lognormal random variable. The expectation in the Laplace transform may be computed with the Monte Carlo method. 
  Nevertheless, we observe that the Monte Carlo method leads to unreliable estimators suffering from very high variance (see Figures \ref{lambert_direct_stock} and \ref{lambert_direct_put}). Thus, our first motivation is to find an improved numerical method for the computation of the reservation price. 

An alternative method to Monte Carlo is to produce a polynomial approximation. This has been done for an expansion in terms of small $\l$ in \cite{ref1,ref3}, of the correlation $\rho$ between $P$ and $S$ in \cite{ref2} and of an aggregate of  $\gamma$, $\l$ and $\rho$ in \cite{ref4} and \cite{ref5}. Thus, we first derive polynomial approximations in $\rho$ and show (see Figure \ref{figure_dl}) that these polynomial approximations can be extremely bad. We take the opportunity to precise the error order of the expansions proposed by Monoyios in \cite{ref4} and \cite{ref5}, as its coefficients are defined using the minimal martingale measure and thus depend on the correlation. 

Based on these observations, we propose a new and improved numerical method. For this method, we adapt an idea from \cite{ref6} and \cite{ref12}, where a modified Laplace method is introduced in order to get a closed form approximation of the Laplace transform of a lognormal random variable. 
The Lambert function, which is the inverse of $x \mapsto x e^x,$ appears in this approximation. This function has been widely used and study in the last 30 years due to the advent of fast computational methods, see \cite{ref13}. Here, we adapt the idea of \citep{ref6} considering a non smooth function of a lognormal random variable.
This modified Laplace method allows splitting the reservation price into a deterministic part and a random part, the first one being preponderant in meaningful financial situations, while the second, which is close to one, will help to provide a low variance estimator for the price. 
This method also provides some multiplicative decomposition for the value function. We call this method the Lambert Monte Carlo (LMC) method. 
We show numerically that the LMC method is indeed an improvement of the Direct Monte Carlo (DMC) method in various market situations (see Figures \ref{lambert_direct_stock} and  \ref{lambert_direct_put}). We also compare the speed of both methods and find numerically that they are of same order.
But the number of simulations required for the DMC method to obtain a satisfactory confidence interval can be very large. This is not the case for the LMC method (see Table \ref{number_n}). We refer to \citep{ref6} for numerical comparison of the LMC method with methods other than polynomial approximation available in the literature as well as a survey on the numerical methods developed to approximate the Laplace transform. Here, we will focus on comparison with series expansion as this is the standard for the computation of the reservation price in the financial literature.

We then apply the preceding results to the case of an agent willing to cover a long position on the non-traded stock. This is a very important issue in management sciences which is often referred to as the restricted stock problem or marketability problem: see \cite{L95, KLF03, F12}, or for the real option problem, see \cite{ref3} and the references therein. 
 The second main result of this paper is to provide a lower and an upper deterministic approximation of the reservation price, the value function as well as an approximation of the optimal strategy and of some Greeks. They are based on our decomposition. 
We study the efficiency of our approximations in terms of $\lambda$, $\rho,$ $\gamma$ and $s_0$ and show that they are indeed reliable in various financial situations (see for example Figure \ref{lambert_direct_stock}). They thus provide proxy which need almost no computation time. 
 We also compute the argminimum $\rho^*$ of our lower approximation of the reservation price as a function of $\rho$. We show numerically that $\rho^*$ is in fact a good approximation of the argminimum of $p$.  This minimum correlation offers a new perspective for utility-based pricing and hedging. Indeed, $\rho^*$ may be the optimal choice
for an agent selecting between different hedging assets $P$ who is willing to choose the one that minimizes the reservation price while retaining the same level of rentability and for which almost no money is invested in the (optimal) hedging strategy. We then show that choosing an hedging asset of correlation $\rho^*$ and using the (optimal) hedging strategy does not cripple the superhedging probability. We also apply our result to the case of a short position on a put option and provide numerical results.


The paper is organized as follows. In Section \ref{model}, we present the financial model. In Section \ref{taylorapprox}, we derive a polynomial approximation in the case of a long stock position. Section \ref{secdecomp} provides the decomposition of the reservation price and of the value function using the Lambert function. Then, Section \ref{seclongstock} analyses the case of a long stock position, while Section \ref{secshorput} examines the short put case. Section \ref{conclu} concludes and provides some overview on the applicability of our method to other financial problems. Finally, Section \ref{secappendix} regroups the missing proofs of the paper. 

\section{The financial model}
\label{model}
We consider a continuous time financial model with time horizon $T$ and a filtered probability space $(\Omega,\Fc,(\Fc_t)_{0\leq t \leq T},\mathbb{P}),$ where the filtration $(\Fc_t)_{{0\leq t \leq T}}$ satisfies the usual conditions (right continuous, $\Fc_0$ contains all null sets of $\Fc_{T}=\Fc$). 
We denote by $S=(S_t)_{0 \leq t \leq T}$ (resp. $P=(P_t)_{0 \leq t \leq T}$) the observable price process of some non-traded (resp. dynamically traded) asset. We assume that
\begin{eqnarray}
dS_t=S_t \left( \nu dt + \eta dZ_t \right)\quad dP_t=P_t \left( \mu dt + \sigma dB_t \right), \label{mdl}
\end{eqnarray}
where $Z=(Z_t)_{0 \leq t \leq T}$ and $B=(B_t)_{0 \leq t \leq T}$ are correlated standard Brownian motions under the historical probability $\PP$. The correlation is constant and is denoted by $\r\in (-1,1)$ and, as usual, we will write that $dZ_t=\rho dB_t + \sqrt{1-\rho^2} dW_t$, where $W$ and $B$ are independent Brownian motions under $\mathbb{P}$. 
The constants $\nu$ and $\mu$ (resp. $\eta$ and $\sigma$) respectively represent the expected return (resp. the volatility) of the non-traded and of the traded assets. 
There is also a riskless bond, with dynamic given by $D_t=e^{rt},$ where $r$ is the constant riskless interest rate. We introduce the Sharpe ratio of the traded asset $$s_R:= \frac{\mu-r}{\sigma}$$ as the measure of excess return per unit of risk. All discounted values for any process $ Y =({Y}_t)_{0 \leq t \leq T}$ will be denoted by $\widetilde Y =(\widetilde{Y}_t)_{0 \leq t \leq T},$ where $\widetilde{Y}_t={Y_t}/{D_t}$. 

The agent's initial wealth is denoted by $x$ and her trading strategy is denoted by $\Pi=(\Pi_t)_{0 \leq t \leq T}$, where $\Pi$ is a progressively measurable process that denotes the cash invested in the traded-asset $P$ at time $t$. We assume that $\int_{0}^{t}{\Pi_s^2 ds}<+\infty$ a.s. for all $t\geq 0.$ The agent wealth, starting from an initial wealth $x$ and following a self-financed strategy $\Pi$ is denoted by $X^{x,\Pi}=(X_t^{x,\Pi})_{0 \leq t \leq T}$ and as the agent cannot invest in $S$, her discounted wealth 
follows the dynamics: $d\widetilde{X}^{x,\Pi}_t=\Pi_t  \frac{d\widetilde{P}_t}{P_t}.$\\
To value the amount the agent will accept to pay in order to receive the derivative at time $T$, we use a utility-based price. The risk preferences of the agent are modeled by an exponential utility function, which is very popular because of its separability properties. Indeed, in contrast to the power utility function, it gives an explicit representation for the reservation price. Let
\begin{eqnarray*}
U(x)=-\frac{1}{\gamma}\exp{[-\g x]},
\end{eqnarray*}
where $\g >0$ is the constant absolute risk aversion parameter of the agent.\\
Let $V(x_0,\l,h)$ be the value function at time $0$ for an agent that maximizes her expected utility of terminal wealth at time $T,$ if in addition she receives $\l>0$ units of the European derivative $H=h(S_T)$, where $h$ is a measurable function whose assumptions will be specified below :
\begin{eqnarray}
\label{valeur}
V(x_0,\l,h)=\sup_{\Pi } \EE U\left(X_T^{x_0,\Pi} + \l h(S_T)\right).
\label{def_value}
\end{eqnarray}
It is possible to find a semi-closed formula for the optimal trading strategy and for the value function using either the dual approach based on martingale techniques (see \cite{REK} or \cite{KS} and the references therein) or the primal approach.  
The primal approach (see \cite{ref1} or \cite{ref3}) leads to a non-linear Hamilton-Jacobi-Bellman equation that can be linearized using the Hopf-Cole transformation.
Please refer to \cite{ref8}, where this transformation is called the distortion method. 
With a slight adjustment to Section 5 of \cite{ref1} (see also \citep[Remark 5]{ref14}), we determine that :
\begin{eqnarray}
\label{valeur2_gen}
V(x_0,\l,h)=-\frac{1}{\gamma}e^{-\gamma x_0 e^{rT}-\frac{(\mu-r)^2}{2\sigma^2}T}\left(\mathbb{E}\left(\exp\left[-\lambda\gamma(1-\rho^2)h\left(s_0 e^{\left(\nu - \eta \rho \frac{\mu-r}{\sigma}-\frac{\eta^2}{2}\right)T}e^{\eta\sqrt{T}N}\right)\right]\right)\right)^{\frac{1}{1-\rho^2}},
\end{eqnarray}
where $N$ is a standard Gaussian law. Note that the parameters of $P$ appear only through the Sharpe ratio $s_R$ of $P$. 
If $h$ is bounded from below on $(0,\infty)$, the bid reservation price $p_{h}$ of $\l>0$ units of the derivative $H=h(S_T)$ is the amount, which leaves the agent indifferent between paying $p_{h}$ at time $0$ and receiving $\l$ units of $H$ at time $T,$ or doing nothing, i.e., $p_h$ is a solution of the following equation:
\begin{eqnarray}
\label{indiff_def_gen}
V(x_0-p_{h},\l,h)=V(x_0,0,h).
\end{eqnarray}
If $h$ is bounded from above on $(0,\infty)$, the ask reservation price of $\l>0$ units of the derivative $H=h(S_T)$ is the amount, which leaves the agent indifferent between receiving $p^{sell}_{h}$ at time $0$ and delivering $\l$ units of $H$ at time $T,$ or doing nothing :
\begin{eqnarray}
V\left(x_0+p^{sell}_{h},\l,-h\right)=V\left(x_0,0,-h\right).
\end{eqnarray}
Thus, if $h$ is bounded on $(0,\infty)$, we easily determine that :
\begin{eqnarray}
\label{indiffsell_gen}
p_{h}^{sell}&=&-p_{-h}.
\end{eqnarray}
Then, (\ref{valeur2_gen}) and  (\ref{indiff_def_gen}) lead to 
\begin{eqnarray}
p_{h} & = & -\frac{e^{-rT}}{\g(1-\r^2)} \ln \EE \exp{\left[ -\l\g(1-\r^2) h\left(s_0e^{\left(\nu-\eta \rho s_R -\frac{\eta^2}{2}\right) T}e^{\eta\sqrt{T}N}\right)\right]} \label{indiffdef_gen}\\
V(x_0,\lambda,h)&=&-\frac{1}{\gamma}\exp\left[-\gamma e^{rT}(x_0+p_{h})-\frac{s_R^2}{2}T\right].\label{value_pe}
\end{eqnarray}
Thus, we will first search for approximations for $p_{h}$ and then deduce the results on $V$ using \eqref{value_pe}. 

Let $K\in [0,+\infty]$ and $\zeta$  be a measurable function,  which is bounded from below on $[0,K].$
In the following, we will focus on the computation of the bid reservation price of a derivative of the form $H=\zeta(S_T) 1_{S_T \leq K}.$   This includes the bid price of the non-traded stock $H=S_T$ choosing $K=\infty$ and $\zeta=id,$ where $id(x)=x$ for all $x\in \mathbb{R}.$ 
We will also get the ask price of a put option  $H=(K-S_T)_+$ (where $x_+=\max(x,0)$) for some $K\in [0,+\infty)$, choosing $\zeta(x)=x-K.$ Indeed, $p_{(K-x)_+}^{sell}=-p_{-(K-x)_+}$. We will denote by $V_{\zeta,K}(x_0,\l)$ and $p_{\zeta,K}$ (resp. $V(x_0,\l)$ and $p$) the value function of $\l>0$ units of $\zeta(S_T) 1_{S_T \leq K}$ (resp. $S_T$) and the associated bid reservation price. Similarly,  $V^{put}(x_0,\l,K)$ and $p^{put}$ will be respectively the value function of $\l$ units of  $-(K-S_T)_+$ and the ask reservation price of the put.\\
In the sequel, we will consider several market conditions. Situation 1 of Table \ref{parameters1} presents a market situation relatively close to what is commonly used in the literature (see for example \cite{ref1, ref4, ref5, HM15}), and that will be taken as a reference throughout the article. We also introduce two other market situations to observe the effect of a low initial price for the non-traded asset (see situation 2 of Table \ref{parameters1}) and a long time horizon (see situation 3 of Table \ref{parameters1}).  

\begin{table}[H]
\caption{Different market situations considered through the paper}
\begin{tabular*}{\textwidth}{|c|@{\extracolsep{\fill}}cccccccc|}
  \hline
    & $r$& $\lambda$ & T & $s_0$ & $\n$ & $\eta$ & $\mu$& $\sigma$  \\
  \hline
   Situation 1 & 0.1\% & 2  & 0.25  & 100 & 20\% & 30\% & 10\% & 20\% \\
   Situation 2 & 0.1\% & 20  & 0.3  & 1 & 35\% & 40\% & 10\% & 20\% \\
   Situation 3 & 0.1\% & 10  & 10  & 100 & 30\% & 30\% & 5\% & 10\% \\
      \hline
\end{tabular*}
\label{parameters1}
\end{table}

\section{Polynomial approximation}
\label{taylorapprox}

In this section, we search for some polynomial approximation for the reservation price that will be relevant in general market situations. 
Such an approach was already studied by many authors. Davis \cite{ref2} achieved an expansion with respect to $\sqrt{1-\rho^2}$ using ideas from Malliavin calculus, while Henderson \cite{ref1} focused on an expansion on $\lambda$ 
and found a similar expression. Later, Monoyios \cite{ref4,ref5} gave a higher order expansion on a parameter involving correlation, risk aversion and the number of claims for the ask reservation price.  It is worth noting that the Monoyios\rq{}s expansions are not Taylor expansions in the sense that some dependence upon $\rho$ remains present in the coefficients, which are defined using the minimal martingale measure. As a consequence, we strongly believe that the error order provided by Monoyios (see Theorem 3 in \cite{ref5}) may have been misestimated. Nevertheless, his approximation has proven to be reliable for the computation of the ask reservation price of a put option for a very small risk aversion (see Table 1 in \cite{ref5}).
We show in Figure \ref{monoyios} that this approximation remains correct for the computation of the bid reservation price of a stock for very small values of the risk aversion or very high correlation, but that this is no more the case elsewhere.

\begin{figure}[H]
\captionsetup[subfigure]{justification=centering}
  \centering
  \centering
\resizebox{0.75\textwidth}{!}{
  \begin{subfigure}[b]{0.49\linewidth}
    \includegraphics[width=\linewidth, height=7cm-0.5cm]{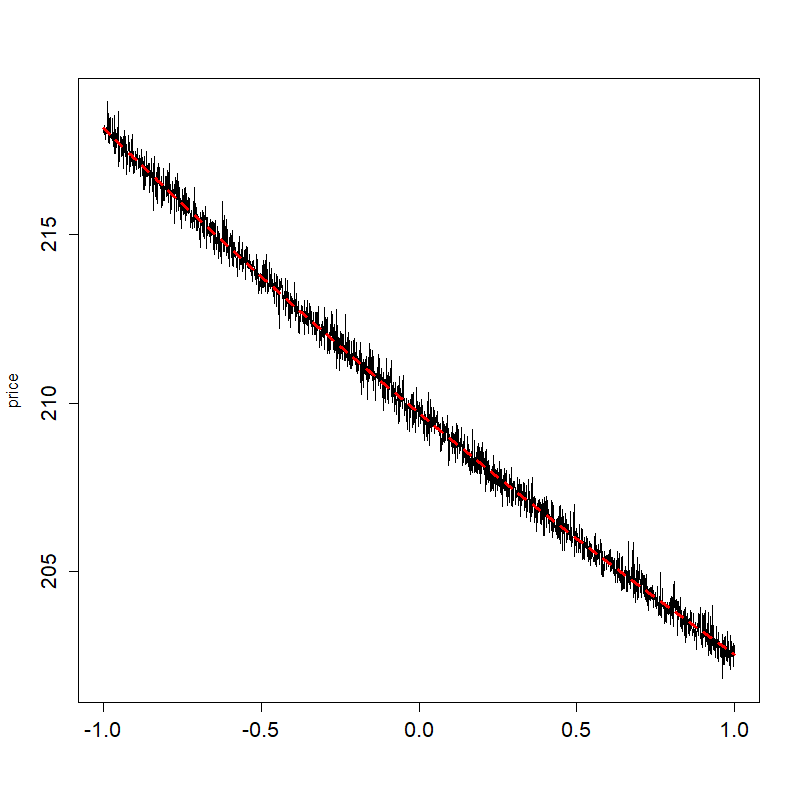}
     \caption*{$\gamma=0.001$}
  \end{subfigure}
  \begin{subfigure}[b]{0.49\linewidth}
    \includegraphics[width=\linewidth, height=7cm-0.5cm]{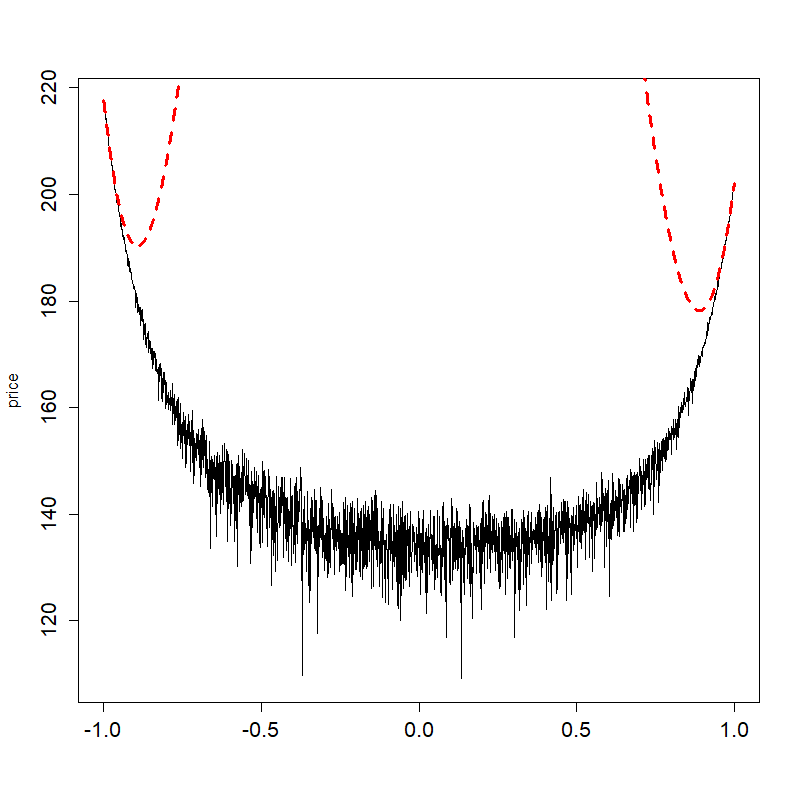}
    \caption*{$\gamma=0.5$}
  \end{subfigure}}
  \caption{In the market situation 1 specified in Table \ref{parameters1} with $\gamma=0.001$ and $\g=0.5,$ estimated bid reservation price of $\l S_T$ as function of the correlation  using the Monte Carlo method in black (solid line) and with the fourth-order approximation of Monoyios in red (dashed line).}
\label{monoyios}
\end{figure}
We propose in Appendix, a Taylor expansion of the bid reservation price of a non-traded stock, when the correlation is close to one, see Section \ref{apptaylor}. We show numerically in Figure \ref{figure_dl} that Taylor expansions remain an extremely bad approximation, when the correlation is not very close to one. 
\begin{figure}[H]
{\centering
\includegraphics[width=0.7\linewidth]{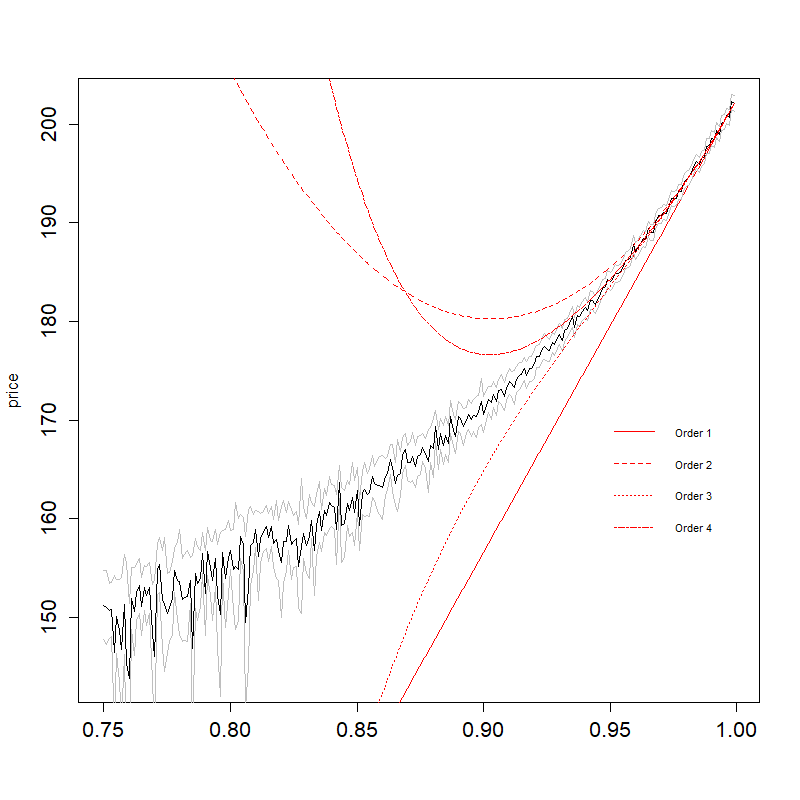}\par}
  \caption{In the market situation 1 of Table \ref{parameters1} for $\gamma=0.5$, the estimated bid reservation price of $\l S_T$ as a function of $\rho$ using the Monte Carlo method  is shown in black and Taylor expansions are displayed in red. 
The $99\%$ confidence intervals of the Monte Carlo estimate are in grey.}
\label{figure_dl}
\end{figure}
%
%
%
%
%
%

\section{Lambert approach}
\label{secdecomp}
The polynomial method described above only allows to obtain reliable results when the correlation is very close to $1$ (or $-1$) or when the risk aversion is very small. Moreover, there is no simple expression of the approximation error. Thus, we now provide another approach in the spirit of \cite{ref6}, where we split the reservation price into two parts. The first one, which we call the deterministic part for simplicity, can be computed without a Monte Carlo method. This is not the case for the second one, which we refer to as the random part.

The bid reservation price can be written as the logarithm of the Laplace transform of some function $f$ of a lognormal random variable and of an indicator function. Following \citep{ref6} and \citep{ref12}, we apply a modified version of the Laplace method (see \citep[Chapter 4]{ref11}) to compute the integral in the Laplace transform. In his thesis \citep{ref12}, Rojas-Nandayapa derived an approximation of the Laplace transform of a lognormal random variable $e^{\eta\sqrt{T}N}$ using a saddle point approximation. Later, Asmussen, Jensen and Rojas-Nandayapa derived rigorously in  \citep{ref6} the same approximation using a modified Laplace method. We generalize their approach to $f(e^{\eta\sqrt{T}N})1_{e^{\eta\sqrt{T}N}\leq \beta}$. We approximate $f(e^{\eta\sqrt{T}N})$ by a polynomial function and we get the maximum of $-k_\beta$  out of the integral, see \eqref{decompo_f} and \eqref{defkbeta}. This leads to a multiplicative decomposition of the Laplace transform and thus, an additive decomposition for the bid reservation price.
Based on this decomposition, we will propose in the next sections lower and an upper bounds for the reservation price and the value function which are not expressed using an expectation. We will see that those bounds are indeed accurate. We will provide interesting results on the lower bound as a function of $\r$ and $\gamma$. 
We will also propose an improved numerical method for the computation of the reservation price, based on an estimator with a lower variance, by performing the Monte Carlo method only on the random part, see Figures \ref{lambert_direct_stock} and \ref{lambert_direct_put}. 

We start with the decomposition of $L_{f,\beta},$ the Laplace transform of the random variable $f(e^{\eta\sqrt{T}N})1_{e^{\eta\sqrt{T}N} \leq \beta},$ where $\beta \in [0,+\infty]$ and $f$ is a measurable function that is bounded from below on $[0,\beta]$. Note that \cite{ref6} studies the case where $\beta=\infty$ and $f=id$. 
As $f$ is bounded from below on $[0,\beta]$, $L_{f,\beta}$ is well defined for all $\theta>0$ :
\begin{eqnarray*}
L_{f,\beta}(\theta)=\mathbb{E}\left[\exp\left(-\theta f\left(e^{\eta\sqrt{T}N}\right) 1_{e^{\eta\sqrt{T}N }\leq \beta} \right)\right].
\end{eqnarray*}
Let $K\in [0,+\infty]$ and $\zeta$ be a measurable function, that is bounded from below on $[0,K]$. Set 
$\hat{s}_0=s_0e^{(\nu-\eta\rho s_R-\frac{\eta^2}{2})T},$ $\hat{K}=K/\hat{s}_0$, $\hat{\theta}=\l\g(1-\r^2)$ and $\hat{\zeta}(x)=\zeta(\hat{s}_0 x)$ for all $x\geq 0$. There is a simple relation between $L_{\hat{\zeta},\hat{K}}$ and $p_{\zeta,K}$. Indeed, using \eqref{indiffdef_gen} with $h={\zeta}1_{[0, K]}$, we obtain that 
\begin{eqnarray}
p_{\zeta,K}&=& -\frac{e^{-rT}}{\gamma(1-\rho^2)}\ln\mathbb{E}\left[\exp\left(-\hat{\theta} \hat{\zeta}\left(e^{\eta\sqrt{T}N}\right)1_{e^{\eta\sqrt{T}N}\leq \hat{K}}\right) \right] 
 =  -\frac{e^{-rT}}{\gamma(1-\rho^2)}\ln 		L_{\hat{\zeta},\hat{K}}(\hat{\theta}).
\label{lien_PL}
\end{eqnarray}

We now propose a decomposition of $L_{f,\beta}(\theta)$. To do that, we employ a kind of Taylor expansion of $f$. At this stage, we do not specify $u, v \in \mathbb{R}$ with $v>0,$ as we use different values for $u$ and $v,$ whether $\beta=\infty$ or not. For all $\theta>0$, 
\begin{eqnarray}
L_{f,\beta}(\theta)  &= & \int_{\mathbb{R}}{\exp\left[-\theta\left(f\left(e^{\eta\sqrt{T}z}\right)-u-v e^{\eta\sqrt{T}z}\right)1_{z\leq \frac{\ln \beta }{\eta\sqrt{T}}} - k_{\beta}(z)\right]\frac{dz}{\sqrt{2\pi}}},
\label{decompo_f}
\end{eqnarray}
where for all $z\in \mathbb{R}$, we have set
$k(z)  =  \theta \left(u+v e^{\eta\sqrt{T}z}\right)+\frac{z^2}{2}$ and
\begin{eqnarray}
\label{defkbeta}
k_\beta(z)  & = & k(z)1_{z\leq \frac{\ln \beta}{\eta\sqrt{T}}}+\frac{z^2}{2}1_{z> \frac{\ln \beta}{\eta\sqrt{T}}}=\theta \left(u+v e^{\eta\sqrt{T}z}\right)1_{z\leq \frac{\ln \beta}{\eta\sqrt{T}}}+\frac{z^2}{2}. 
\end{eqnarray}
Note that $k_{+\infty}=k$. Moreover, $k_\beta$ is continuous if and only if $u=-v\beta$.

We seek to remove the integral in \eqref{decompo_f} of the maximum of $e^{-k_\beta}$. Thus, we aim to minimize $k_\beta$: to do that, we first study $k$. Obviously, $k$ is $C^{\infty}$, and $k'(z)=\theta v \eta \sqrt{T}e^{\eta\sqrt{T} z} +z$ for all $z\in\mathbb{R}$. As $v>0$, for all $z \geq 0$, we have $k'(z) \geq 0$. Assume now that $z<0$. Then, $k'(z) \ge 0$ if and only if $\theta v  \eta^2 T \ge - \eta\sqrt{T} z e^{-\eta\sqrt{T} z}.$
At this point, we need the inverse function of $x \mapsto xe^{x}$.  
\begin{definition}
The Lambert function $W:\left(-\frac{1}{e},+\infty\right) \to (-1,+\infty)$ is the inverse function of $x \mapsto xe^{x}.$ 
\end{definition}
We provide some properties of $W$ in the Lemma \ref{lemlamb} in the Appendix~:  in particular, $W>0$ on $(0,+\infty)$.
The computation of the minimum of $k_{\beta}$ is performed in Lemma \ref{variation_k} in the Appendix. This provides the following multiplicative decomposition of $L_{f,\beta}$, which extends  \cite[Proposition 2.1]{ref6}. 
\begin{theorem}
\label{propdecompo}
Let $\beta\in[0,+\infty]$, $\theta,v>0$ and $f$ be a measurable function that is bounded from below on $[0,\beta]$. Let $u$ be any real number if $\beta=+\infty$ and $u=-v\beta$ if $\beta<+\infty$. Moreover, suppose that 
\begin{eqnarray}
\frac{W(\theta v  \eta^2 T)}{\eta^{2}T}
+ \frac{W^2(\theta v  \eta^2 T)}{2\eta^{2} T}\leq \theta v\beta.
\label{domaine}
\end{eqnarray}
Then, 
$
L_{f,\beta}(\theta)={L}_\beta(\theta){I_{f,\beta}}(\theta),
$
where with the convention that $(-\infty)_+=0$ :
\begin{align*}
{L}_\beta(\theta) & =
\exp\left[-\left(\theta u +\frac{W(\theta v  \eta^2 T)}{\eta^{2}T}
+ \frac{W^2(\theta v  \eta^2 T)}{2\eta^{2} T}\right)\right] \\
I_{f,\beta}(\theta) & = 
\begin{multlined}[t][\textwidth-3cm]
 \mathbb{E}\left( \exp\left[-\theta\left(f\left(\frac{W(\theta v  \eta^2 T)}{\theta v\eta^2 T}e^{\eta\sqrt{T}N}\right)-u - \frac{W(\theta v \eta^2 T)}{\theta \eta^2 T}e^{\eta\sqrt{T}N}\right)1_{N\leq \frac{\ln(\beta)}{\eta\sqrt{T}}+\frac{W(\theta v \eta^2 T)}{\eta\sqrt{T}}}\right] \phi_\beta( N,\theta) \right) 
 \end{multlined}\\
\phi_\beta(y,\theta) & =  \exp\left[- \frac{W(\theta v  \eta^2 T)}{\eta^2 T}\left(e^{\eta\sqrt{T} y}-1-\eta\sqrt{T} y\right)\right]\exp\left[\left(\frac{W(\theta v  \eta^2 T)}{\eta^2 T}e^{\eta\sqrt{T}y}-\theta v \beta\right)_+\right],
\end{align*}
\end{theorem}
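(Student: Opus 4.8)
The plan is to begin from the integral form \eqref{decompo_f} of $L_{f,\beta}(\theta)$ and factor out the exponential of the minimum of $k_\beta$. Writing $a=\theta v\eta^2 T>0$ for brevity, the minimum of the smooth strictly convex function $k$ is found from $k'(z)=\theta v\eta\sqrt{T}e^{\eta\sqrt{T}z}+z=0$; for $z<0$ this reads $a=(-\eta\sqrt{T}z)e^{-\eta\sqrt{T}z}$, so by the characterization \eqref{eqlamb2} of the Lambert function the minimizer is $z^*=-W(a)/(\eta\sqrt{T})$. Using \eqref{eqlamb1} in the form $e^{-W(a)}=W(a)/a$ and \eqref{strictposi} to note $W(a)>0$, one computes $\theta v e^{\eta\sqrt{T}z^*}=W(a)/(\eta^2 T)$ and hence
\begin{equation*}
k(z^*)=\theta u+\frac{W(a)}{\eta^2 T}+\frac{W(a)^2}{2\eta^2 T}=-\ln L_\beta(\theta).
\end{equation*}
Via Lemma \ref{variation_k}, hypothesis \eqref{domaine} is exactly what upgrades this interior critical value to the global minimum of $k_\beta$: when $\beta<+\infty$ and $u=-v\beta$ it rewrites as $k(z^*)\le 0$, so on the region $\{z>\ln\beta/(\eta\sqrt{T})\}$ the active branch $z^2/2\ge 0\ge k(z^*)$ cannot undercut $k(z^*)$, while on the complementary region convexity of $k$ gives $k\ge k(z^*)$ (for $\beta=+\infty$ the constraint is vacuous and $k_\beta=k$ throughout).

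I would then recenter the Gaussian through the substitution $N=z-z^*$. The identity making the whole mechanism work is
\begin{equation*}
k(z)-k(z^*)=\frac{W(a)}{\eta^2 T}\left(e^{\eta\sqrt{T}N}-1-\eta\sqrt{T}N\right)+\frac{N^2}{2},
\end{equation*}
which follows by expanding $\theta v e^{\eta\sqrt{T}z}=\frac{W(a)}{\eta^2 T}e^{\eta\sqrt{T}N}$ and $z^2/2$ about $z^*$ and using $W(a)/(\eta\sqrt{T})=\eta\sqrt{T}\,W(a)/(\eta^2 T)$ to cancel the linear term. Factoring $L_\beta(\theta)=e^{-k(z^*)}$ out of \eqref{decompo_f}, the quadratic $N^2/2$ recombines with $dz/\sqrt{2\pi}$ into the standard normal density, converting the integral into an expectation over $N$; the surviving factor $\exp(-\frac{W(a)}{\eta^2 T}(e^{\eta\sqrt{T}N}-1-\eta\sqrt{T}N))$ is the first factor of $\phi_\beta(N,\theta)$, and the relations $e^{\eta\sqrt{T}z}=\frac{W(a)}{a}e^{\eta\sqrt{T}N}$ together with the shifted threshold $\ln\beta/(\eta\sqrt{T})\mapsto\ln\beta/(\eta\sqrt{T})+W(a)/(\eta\sqrt{T})$ reproduce precisely the argument of $f$, the linear term, and the indicator written in $I_{f,\beta}$.

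I expect the branch bookkeeping that assembles the single multiplier $\phi_\beta$ to be the main obstacle. On $\{e^{\eta\sqrt{T}z}\le\beta\}$ the indicator is on, $k_\beta=k$, and since $\frac{W(a)}{\eta^2 T}e^{\eta\sqrt{T}N}=\theta v e^{\eta\sqrt{T}z}\le\theta v\beta$ the positive-part term of $\phi_\beta$ vanishes, so the recentered integrand is just the $f$-term times the first factor of $\phi_\beta$, which is what the computation above already produced. On the complementary region (present only when $\beta<+\infty$) the indicator removes the $f$-term, $k_\beta$ reduces to the pure quadratic $z^2/2$, and the positive part activates to $\frac{W(a)}{\eta^2 T}e^{\eta\sqrt{T}N}-\theta v\beta$; here the delicate check is that $e^{-(k_\beta(z)-k(z^*))}$ coincides with $\phi_\beta(N)e^{-N^2/2}$, and this holds precisely because the normalization $u=-v\beta$ cancels the leftover constant $\theta(u+v\beta)$. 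The Lambert identities \eqref{eqlamb1}--\eqref{eqlamb2} and the positivity \eqref{strictposi} are what force the remaining cross terms to cancel cleanly, and once both regions are matched to the recentered integrand the factorization $L_{f,\beta}(\theta)=L_\beta(\theta)I_{f,\beta}(\theta)$ follows.
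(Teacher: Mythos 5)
Your proposal is correct and follows essentially the same route as the paper's proof: starting from \eqref{decompo_f}, you factor out $e^{-k_\beta(m(\beta))}=L_\beta(\theta)$ (identifying the minimizer $z^*=-W(\theta v\eta^2T)/(\eta\sqrt{T})$ via the Lambert identities and Lemma \ref{variation_k} under \eqref{domaine}), recenter the Gaussian at $z^*$, and check that the two branches — with the normalization $u=-v\beta$ turning the leftover term on the truncated branch into the positive part — reassemble into $\phi_\beta$ and the shifted indicator of $I_{f,\beta}$. Your observation that \eqref{domaine} is exactly $k(z^*)\le 0$, so the quadratic branch $z^2/2\ge 0$ cannot undercut the interior critical value, is a slightly more streamlined packaging of the case analysis in the paper's Lemma \ref{variation_k}, but the argument is the same in substance.
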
 
\begin{proof}[Proof of Theorem \ref{propdecompo}] See Section \ref{proof_decompo_annexe}. $\;\square$ \end{proof}
With Theorem \ref{propdecompo}, we can derive a decomposition of the reservation price and of the value function for a derivative with payoff $ \zeta(S_T)1_{S_T\leq K}$. 
\begin{theorem}
\label{prop_dec_gen}
Let $K\in [0,+\infty]$ and $\zeta$ be a measurable function that is bounded from below on $[0,K]$. Set $\hat{s}_0=s_0e^{(\nu-\eta\rho s_R-\frac{\eta^2}{2})T}$ and let $v>0$ and $u=-v K/\hat{s}_0$ if $K<+\infty$ or $u$ be any real number if $K=+\infty$.
Moreover, suppose that
\begin{eqnarray}
\frac{W(\l\g(1-\r^2) v  \eta^2 T)}{\eta^{2}T}
+ \frac{W^2(\l\g(1-\r^2) v  \eta^2 T)}{2\eta^{2} T}\leq \frac{K}{\hat{s}_0}\l\g(1-\r^2) v. \label{condition}
\end{eqnarray}
Then,
\begin{eqnarray}
p_{\zeta,K}&=&D_{\zeta,K}+A_{\zeta,K} \label{eqequalg}\\
D_{\zeta,K} & = &\l e^{-rT}u+\frac{e^{-rT}}{\gamma(1-\rho^{2})}
\left(\frac{W(\l\g(1-\r^2) v  \eta^2 T)}{\eta^{2}T}
+ \frac{W^2(\l\g(1-\r^2) v  \eta^2 T)}{2\eta^{2} T}\right) \label{eqDg} \\
A_{\zeta,K} & =&  -\frac{e^{-rT}}{\gamma(1-\rho^{2})}\ln\left(I_{\hat{\zeta},\hat{K}}(\lambda\g(1-\r^2))\right)\label{eqAg},
\end{eqnarray}
where $I_{\hat{\zeta},\hat{K}}$ is defined in Theorem \ref{propdecompo} with 
$\hat{K}= K/\hat{s}_0$ and $\hat{\zeta}(x)=\zeta(\hat{s}_0 x)$ for all $x\geq 0.$

We call $D_{\zeta,K}$ the deterministic part of $p_{\zeta,K}$ and $A_{\zeta,K}$ its random part. 
Moreover, 
\begin{eqnarray}
V_{\zeta,K}\left(x_0,\lambda, \zeta 1_{[0,K]}\right)&=&V_{D,\zeta,K}\left(x_0,\lambda,\zeta 1_{[0,K]}\right)V_{A,\zeta,K}\left(\lambda,\zeta 1_{[0,K]}\right)
\label{dec_value}\\
V_{D,\zeta,K}\left(x_0,\lambda,\zeta 1_{[0,K]}\right)& = &-\frac{1}{\gamma}\exp\left[{-\gamma e^{rT} (x_0+D_{\zeta,K})-\frac{s_R^2}{2}T}\right]\label{hat_V} \\
V_{A,\zeta,K}\left(\lambda,\zeta 1_{[0,K]}\right)& = & \exp\left[{-\gamma  e^{rT} A_{\zeta,K}}\right].\label{alV}
\end{eqnarray}
\end{theorem}

\begin{proof}[Proof of Theorem \ref{prop_dec_gen}]
The decomposition of $p_{\zeta,K}$ is a direct consequence of $(\ref{lien_PL})$ and Theorem \ref{propdecompo} applied to $\theta=\hat{\theta} = \lambda \gamma (1-\rho^2)$ and $\beta = K/\hat{s}_0$. Then, the decomposition of the value function is obtained using \eqref{value_pe}. $\;\square$ 
\end{proof}
\begin{remark}
The deterministic approximation $D_{\zeta,K}$ of $p_{\zeta,K}$ has a good economic interpretation as it can be understood as a bid reservation price. Indeed, if we assume that $V_{D,\zeta,K}(x_0,\lambda,\zeta 1_{[0,K]})$ is a value function, then (\ref{hat_V}) and  $D_{\zeta,K}|_{\l=0}=0$ (see \eqref{eqDg}) show that 
\begin{eqnarray*}
V_{D,\zeta,K}\bigg(x_0-D_{\zeta,K},\lambda,\zeta 1_{[0,K]}\bigg)&=&-\frac{1}{\gamma}\exp\left(-\gamma x_0 e^{rT}-\frac{s_R^2}{2}T\right)=V_{D,\zeta,K}\left(x_0,0,\zeta 1_{[0,K]}\right).
\end{eqnarray*} 
\end{remark}
When $K=+\infty$, \eqref{condition} is always satisfied and we can apply Theorem \ref{prop_dec_gen} to the long stock problem without any constraint on the parameters. When $K<+\infty$ and $u=-v K/\hat{s}_0$, \eqref{condition} is equivalent to $D_{\zeta,K}\leq 0$. We give some insight about equation \eqref{condition} in Section \ref{secshorput} in the case of a short Put position on $S$. With the decomposition provided by (\ref{eqequalg}) and (\ref{dec_value}), we perform a Monte Carlo method only on the random part of the reservation price or of the value function (see (\ref{eqAg}) and (\ref{alV})). 
So, we need to choose $u$ and $v$ such that $A_{\zeta,K}$ is as small as possible, i.e., such that $I_{\hat{\zeta},\hat{K}}(\l\g(1-\r^2))$ is close to $1.$ 
This is possible since differences between functions and their Taylor expansion of order one in zero appear. 
 We call this the Lambert Monte Carlo (LMC) method. We show below that it is indeed numerically efficient: see Figures \ref{lambert_direct_stock} and \ref{lambert_direct_put}. 

\begin{algorithm}[h]
\caption{LMC method algorithm for the computation of the bid reservation price}
\begin{algorithmic}
\Require $n > 0$
\State \textbf{Step 1} : Choose $u$ and $v$ such that the hypothesis of Theorem \ref{prop_dec_gen} are satisfied 
\State \textbf{Step 2} : Generate a vector $U$ of $n$ i.i.d random variables of law $\mathcal{N}(0,1)$
\State \textbf{Step 3} : Set $V := \exp\bigg[-\theta\bigg(\hat{\zeta}\left(\frac{W(\theta v  \eta^2 T)}{\theta v\eta^2 T}e^{\eta\sqrt{T}U}\right)-u - \frac{W(\theta v \eta^2 T)}{\theta \eta^2 T}e^{\eta\sqrt{T}U}\bigg) 1_{U\leq \frac{\ln(\hat{K})}{\eta\sqrt{T}}+\frac{W(\theta v \eta^2 T)}{\eta\sqrt{T}}}\bigg] \phi_{\hat{K}}( U,\theta)$ with the parameters $\hat{\zeta}$, $\hat{K}$ and the function $\phi_{\hat{K}}$ specified in Theorem \ref{prop_dec_gen} and $\theta:= \lambda\gamma (1-\rho^2)$. 
\State \textbf{Step 4} : Compute the mean $m$ of $V$
\State \textbf{Step 5} : Compute the approximation of the random part $M:=-\frac{e^{-rT}}{\gamma(1-\rho^{2})}\ln\left(m\right)$
\State \textbf{Step 6} : Return the approximation of the bid reservation price $D_{\zeta,K}+M$.
\end{algorithmic}
\end{algorithm}

\section{Long stock position}
\label{seclongstock}
In this part, we focus on the case of a long stock position on the non-traded stock, which is, as already mentioned, of great importance in management science. 
After giving the decomposition of the reservation price, we study the quality of the deterministic part as an approximation of the reservation price. We also give an approximation of the optimal strategy and of the Greeks. We provide numerical illustrations of all these quantities. We also examine the problem of selecting the right hedging asset $P$. Indeed, there may be several possible choices for $P$ : taking the example of an index, the difference stocks that constitute this index are possible hedging assets. We show that choosing an asset $P$ with the correlation $\rho^*$ (the minimum of the deterministic part as a function of $\rho$) with $S$ provides a minimum reservation price, as well as a minimum cash position invested in $P$.

\subsection{Decomposition of the reservation price and of the value function}
The next theorem is a direct consequence of Theorem \ref{prop_dec_gen}.
Recall that $p$ is the bid reservation price of $\l>0$ units of $S$ and $V$ is the associated value function. 
\begin{theorem}
\label{prop_dec}
Let $\bar w=W(s_0\lambda\gamma \eta^2 Te^{(\nu-\eta\rho s_R -\frac{\eta^2}{2})T}  (1-\rho^2)).$
 Then, $p=D+A$, where
\begin{eqnarray}
D & = & \frac{e^{-rT}}{\gamma(1-\rho^{2})}
\left( \frac{\bar w}{\eta^2 T}+\frac{\bar w^2}{2\eta^2 T}\right)\label{eqD1} \\
A & = & -\frac{e^{-rT}}{\gamma(1-\rho^{2})}\ln \mathbb{E}\left(\exp\left[- \frac{\bar w}{\eta ^2 T}\left(e^{\eta\sqrt{T}N}-1-\eta\sqrt{T}N\right) \right]\right)\label{eqA}.\\\nonumber 
\end{eqnarray}
Moreover, $V(x_0,\l)=V_D(x_0,\lambda)V_A(\lambda),$ where 
\begin{eqnarray}
V_D(x_0,\lambda)& = &-\frac{1}{\gamma}\exp\left[{-\gamma e^{rT} (x_0+D)-\frac{s_R^2}{2}T}\right] \label{hat_D_value}\\
\nonumber
V_A(\lambda)& = & \exp\left[{-\gamma  e^{rT} A}\right]. 
\end{eqnarray}
\end{theorem}
\begin{remark}
\label{lower_bound_D}
As $e^{x}\geq x+1$ and $\bar w > 0$, we get that $A\geq 0,$ $V_A \leq 1$ and the deterministic part $D$ (resp. $V_D$) is a lower bound for $p$ (resp. $V$). 
\end{remark}
\begin{proof}[Proof of Theorem \ref{prop_dec}]
Recall that $\hat{s}_0=s_0e^{(\nu-\eta\rho s_R-\frac{\eta^2}{2})T}$ and let $\hat{\theta}=\l\g(1-\r^2)$.
We choose in Theorem \ref{prop_dec_gen} $K=+\infty$, $\zeta=id,$ $u=0$ and $v=\hat{s}_0.$ Then, (\ref{condition}) is obviously true. Moreover,  
 $\hat{K}=+\infty$  and $\hat{\zeta}(x)=\hat{s}_0x,$ for all $x\geq 0$. Then, for all $y\in \mathbb{R}$, 
\begin{align*}
\phi_{\hat{K}}(y,\hat{\theta})&=\exp\left[-\frac{\bar w}{\eta^2 T}\left(e^{\eta\sqrt{T}y}-1-\eta\sqrt{T}y\right)\right]\\
I_{\hat{\zeta},\hat{K}}(\hat \theta) & = \mathbb{E}\left(\phi_{\hat{K}}\left(N,\hat{\theta}\right)\; \exp\left[-\hat{\theta}\left(\hat{\zeta}\left(\frac{\bar w}{\hat{\theta} \hat{s}_0\eta^2 T}e^{\eta\sqrt{T}N}\right)- \frac{\bar w}{\hat{\theta} \eta^2 T}e^{\eta\sqrt{T}N}\right)\right]\right) = \mathbb{E}\left(\phi_{\hat{K}}(N,\hat{\theta})\right). \quad\hfill \square
\end{align*}
\end{proof}

\subsection{Deterministic approximations of $p$}
\label{subappp}

We have seen that $D$ provides a deterministic lower bound for the reservation price (see Remark \ref{lower_bound_D}). Using the Lambert function, we also provide a deterministic upper bound for $p$ called $G$ and study the quality of both bounds.
Let 
\begin{eqnarray}
G& = & \frac{e^{-rT}}{\gamma(1-\rho^{2})}
\left( \frac{\bar w}{\eta^2 T}e^{\frac{\eta^2 T}{2}}+\frac{\bar w^2}{2\eta^2 T}\right)\label{Grandg}\\
V_G(x_0,\l)&=& -\frac{1}{\gamma}\exp\left[{-\gamma e^{rT} (x_0+G)-\frac{s_R^2}{2}T}\right].
\nonumber
\end{eqnarray}
\begin{theorem} 
\label{Evry}
We have that $ D\leq p \leq G$ and that  $V_D(x_0,\l)\leq V(x_0,\l) \leq V_G(x_0,\l).$ Moreover, 
\begin{eqnarray}
\label{ineq1}
\frac{D}{p} & \geq & e^{-\frac{\eta^2 T}{2}}  \geq  \frac{1+\frac{\bar w}{2}}{e^{\frac{\eta^2T}{2}}+\frac{\bar w}{2}} \\
\label{ineq2}
 \frac{G}{p} & \leq & 1+\frac{\bar w e_2}{2\eta^2 T+\eta^2 T\bar w}\leq 1+ \frac{e_2}{\eta^2 T},
\end{eqnarray}
where $e_2=e^{2\eta^2 T}-2(1+\eta^2  T)e^{\frac{\eta^2 T}{2}}+\eta^2 T +1$.
\end{theorem}
\begin{proof}[Proof of Theorem \ref{Evry}]  See Section \ref{sub_52}. $\;\square$ \end{proof}
The bounds in \eqref{ineq1} easily apply to $A/p$ and show that $A$ is small relative to $p$ : $$0 \leq  \frac{A}{p}  \leq \frac{e^{\frac{\eta^2T}{2}}-1}{e^{\frac{\eta^2T}{2}}+\frac{\bar w}{2}}  \leq  1-e^{-\frac{\eta^2 T}{2}}.$$
Having bounds that only depend on $\eta$ and $T$ legitimizes the use of $D$ and $G$ as uniform approximations of the reservation price in terms of $\rho \in (-1,1),$ 
$\lambda>0$ or $\gamma>0$. For example, if $\eta^2 T\leq 0.04$, then $0.98 p \leq D \leq p$, and if $\eta^2 T \leq 0.145 $, $p\leq G \leq 1.02 p$.
When $s_0$ and $\lambda$ go to infinity, $\bar w$ goes to infinity (see Lemma \ref{lemlamb} in the appendix) and \eqref{ineq1} implies that $D$ goes to $p$. 
We will see that numerically, $D$ provides a better approximation when $s_0$ and $\lambda$ are large enough : see Figure \ref{s3}. 
In contrast, when $s_0$ and $\lambda$ go to $0$ or $|\rho|$ goes to $1$, $\bar w$ goes to 0 (see Lemma \ref{lemlamb} in the appendix) and \eqref{ineq2} implies that $G$ goes to $p$. So,   
$G$ is sharp for small values of $s_0$  and $\lambda$ or high values of $|\rho|$ : see Figure \ref{s2}.
The limits in terms of $\rho$ and $\gamma$ are less straightforward. 
Let $d, \,g\; :\; \rho\in (-1,1) \to \mathbb{R}$ be such that $D=d(\rho)$ and $G=g(\rho).$ Let $\widetilde{d}\; :\; \gamma\in (0,\infty) \to \mathbb{R}$ be such that $D=\widetilde{d}(\gamma)$. 
\begin{proposition}
\label{propa}
The functions $d$, $\widetilde{d}$ and $g$ are $C^1$, positive and bounded. Moreover,
\begin{eqnarray}
\label{eqd0}
\underset{\rho\rightarrow 1^-}{\lim} d(\rho)  =  \l e^{-rT} s_0 e^{\left(\nu-\eta s_R -\frac{\eta^2}{2}\right)T} &  &  \underset{\rho\rightarrow -1^+}{\lim}d(\rho) =  \l e^{-rT} s_0 e^{\left(\nu+\eta s_R -\frac{\eta^2}{2}\right)T} \\
\label{eqg0}
\lim_{\rho\to 1^-}g(\rho) = \l e^{-r T}s_0 e^{\left(\nu-\eta s_R\right)T} &  &
\lim_{\rho\to -1^+}g(\rho)= \l e^{-r T}s_0 e^{\left(\nu+\eta s_R\right)T} \\
\lim_{\gamma\to 0^+}\widetilde d(\gamma) = \l e^{-rT} s_0 e^{\left(\nu-\eta\rho s_R-\frac{\eta^2}{2}\right)T}  &  &
\lim_{\gamma\to +\infty}\widetilde d(\gamma) = 0. \label{lim_inf_dg}
\end{eqnarray} 
\end{proposition} 
\begin{proof}[Proof of Proposition \ref{propa}]  See Section \ref{sub_52}. $\;\square$
\end{proof}

Note that $\hat{p} = \lambda e^{-rT} s_0e^{(\nu-\eta s_R)T}$ is an approximation at order $0$ of $p$ near $\rho=1^-$ (see \eqref{taylor}). 
Thus, unlike $d$, the upper bound $g$ is not biased for high positive correlation. 
Note that (\ref{eqd0}) shows that $\lim_{\rho \to 1^-} d(\rho)/\hat{p}= e^{-\frac{\eta^2}{2}T}.$ 
Thus, small values of $\eta$ and $T$ reduce the bias between $d$ and $p$ for high positive correlations.\\ 
When the risk aversion goes to $+\infty$, $D$ goes to zero, which is the subhedging price of $\l S_T.$ When the risk aversion goes to $0,$ $D$ goes to $\widetilde p e^{-\frac{\eta^2}{2}T}$, where $\widetilde p$ is the expectation of $\lambda S_T e^{-rT}$ under the minimal martingale measure, i.e. the minimal variance price of the claim $\lambda S_T$, see \citep{DR91} and \citep{S92}. Recall that the minimal variance price is the initial wealth that minimizes the quadratic hedging error under
the historical probability.\\

We compute now some Greeks like quantities for the lower bound $d$. We are particularly interested in the Cega $\Delta_{\rho}$ i.e. the sensitivity of the long stock position to a change in the correlation. We also compute the sensitivity $\Delta_{\g}$ with respect to $\gamma$. 


\begin{proposition}
We have that
\label{sensi}
\begin{eqnarray*}
\Delta_\rho &=&\frac{\partial d}{\partial \rho}  =  
-\frac{ e^{-rT}}{\eta^2 T \gamma(1-\rho^2)}\left(  \eta s_R T  \bar{w} - \frac{\rho}{1-\rho^2} \bar{w}^2\right)\\
\Delta_\gamma &=&\frac{\partial d}{\partial \gamma}
  =  -\frac{ e^{-rT}}{2 \eta^2 T\gamma^2(1-\rho^2)}\bar{w}^2.
\end{eqnarray*}
 \end{proposition}
\begin{proof}[Proof of Proposition \ref{sensi}]  See Section \ref{sub_52}. $\;\square$ \end{proof}

Note that the dynamic version of the (true) sensitivity of the long stock position to a change in the correlation is present in the expression of the optimal strategy, see (\ref{sup_strat}).

{\subsection{Numerical applications for the reservation price}}
\label{secnum}
We now provide numerical simulations to illustrate the performance of the Lambert Monte Carlo (LMC) method in comparison with the Direct Monte Carlo (DMC) method and also the quality of our deterministic approximations $D$ and $G$ of the reservation price $p$. We also plot the correlation $\rho^*$ that minimizes $D$, which will be studied in Section \ref{subvard} below.
\begin{figure}[h]
\captionsetup[subfigure]{justification=centering}
  \centering
  \begin{subfigure}[b]{0.49\linewidth}
    \includegraphics[width=\linewidth]{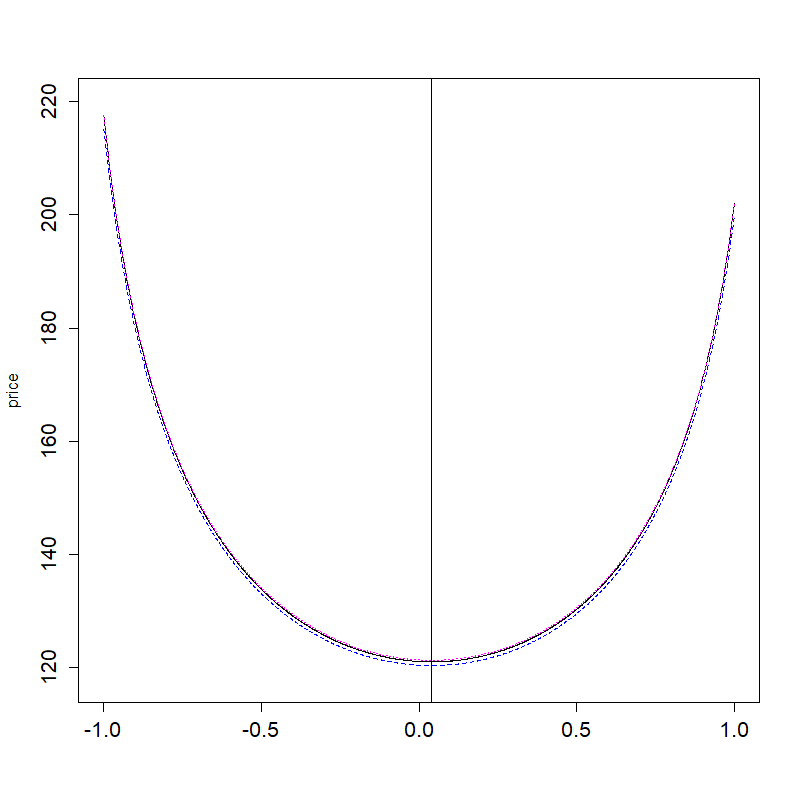}
     \caption*{Lambert Monte Carlo method}
  \end{subfigure}
  \begin{subfigure}[b]{0.49\linewidth}
    \includegraphics[width=\linewidth]{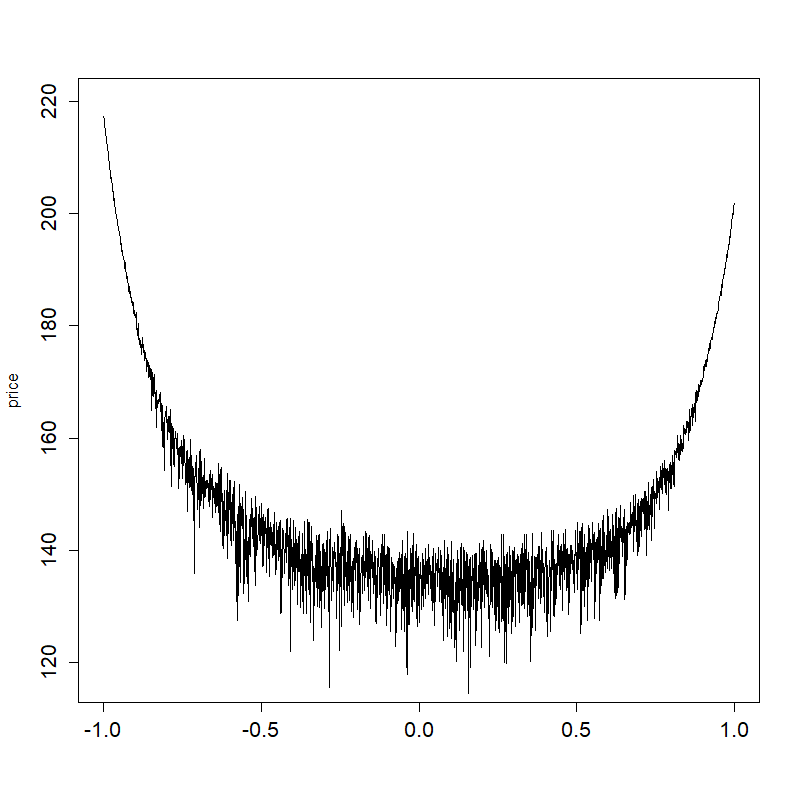}
    \caption*{Direct Monte Carlo method}
  \end{subfigure}
  \caption{In the market situation 1 of Table \ref{parameters1} with $\gamma=0.5$, estimated bid reservation price of $\l S_T$ as a function of $\rho$ in black, lower deterministic approximation $d$ in blue (dashed line) and upper deterministic approximation $g$ in purple (dotted line) with $10^4$ simulations per point. The vertical lines indicate the position of the minimum of $D$.}
\label{lambert_direct_stock}
\end{figure}

\begin{table}[h]
\centering

\begin{tabular*}{0.71\textwidth}{
| c | c | c | c | c |}
  \hline
     \multicolumn{1}{|c|}{\textbf{$\lambda$}} & \multicolumn{1}{|c|}{\textbf{LMC method}} & \multicolumn{1}{|c|}{\textbf{DMC method}} & \multicolumn{1}{|c|}{\textbf{Lower bound $D$}}  & \multicolumn{1}{|c|}{\textbf{Upper bound $G$}}   \\
  \hline
   $0.01$\; & $101.8$ & $101.8$ & $100.7$  & $101.8$ \\
   $0.1$ & $100.0$ & $100.0$ & $98.9$ &$100.0$ \\
   $1$ & $86.6$ & $86.6$ & $85.8$ &$86.7$ \\
   $10$ & $48.4$ & $56.6$ &$48.1$& $48.5$\\
   $20$ & $36.3$ & $53.7$ &$36.2$& $36.4$\\
   \hline
\end{tabular*}
\caption{In the market situation 1 of Table \ref{parameters1} with $\gamma=0.5$ and $\rho = 0.8$, estimations of the reservation price of $\lambda$ units of stock divided by $\lambda$ (per unit price) using both methods with $10^6$ simulations for several values of $\lambda$. The fourth and fifth columns are the deterministic bounds $D$ and $G$ of the (true) per unit price.}
\label{estimation_lbd}
\end{table}

We see in Figure \ref{lambert_direct_stock} that the LMC estimator of the bid reservation price of $\l S_T$ does not suffer from high variance, as is the case for the DMC estimator. For example, in situation 1 of Table \ref{parameters1} with $\gamma=0.5$ and $\rho=0$, the variance of the DMC (for $10^4$ simulations per point) is 5.66 and that of the LMC method is 0.017. We see that $D$ and $G$ are indeed good approximations of $p$. Moreover, the quality of the estimates given with the LMC method is much less sensitive to parameter variation than the one given by the DMC method (for a fixed number of simulations) as showed in Table \ref{estimation_lbd}. Indeed, for large values of $\lambda,$ the DMC method provides wrong estimates, while the estimations performed with the LMC method remain correct. 
 This can be explained by the fact that the DMC estimator needs more than $10^6$ simulations to be pertinent for higher value of $\lambda$. 
On the other hand, the quality of the LMC estimator is relatively stable for all values of $\lambda$ for $10^6$ simulations and this estimator is very close to the real price. Note that this abnormality can also be observed when working with high risk aversion $\gamma$ or with high initial price $s_0$.\\ 

\begin{figure}[h]
\captionsetup[subfigure]{justification=centering}
  \centering
  \begin{subfigure}[b]{0.35\linewidth}
    \includegraphics[width=\linewidth]{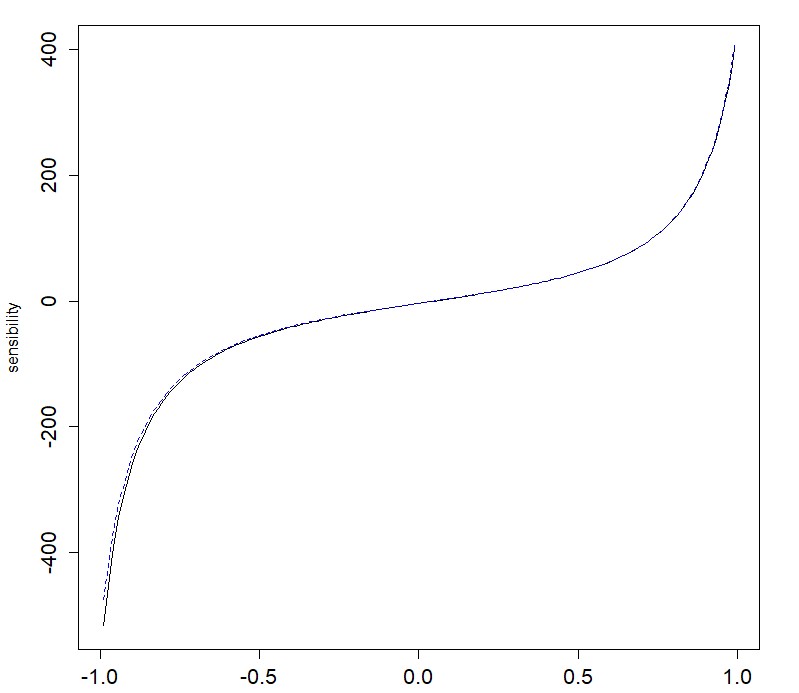}
    \caption*{Sensitivity on $\rho$ : $\frac{\partial p}{\partial \rho}$ and $\Delta_\rho$}
  \end{subfigure}
  \centering
  \begin{subfigure}[b]{0.35\linewidth}
    \includegraphics[width=\linewidth]{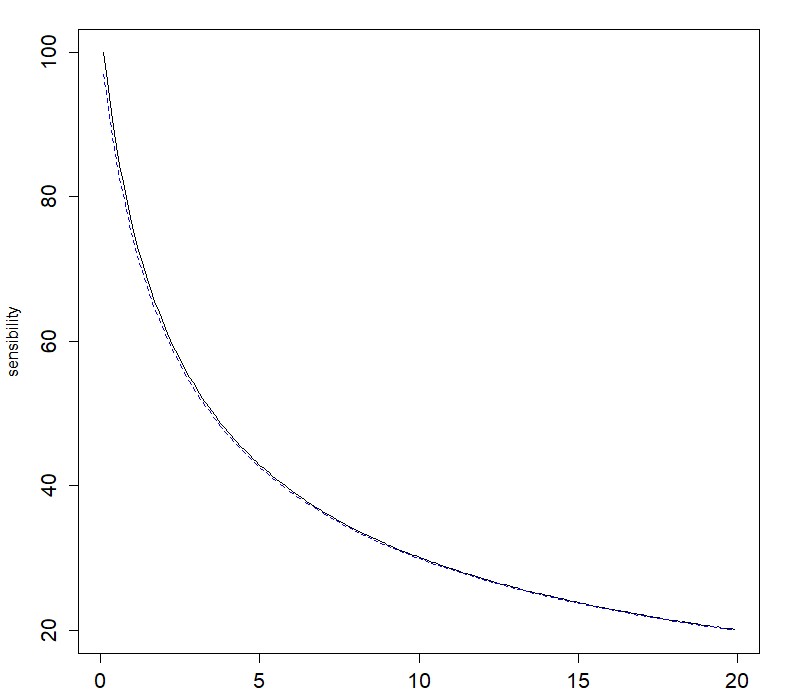}
     \caption*{Sensitivity on $\lambda$ : $\frac{\partial p}{\partial \lambda}$ and $\Delta_\lambda$}
  \end{subfigure}
  \caption{In the market situation 1 of Table \ref{parameters1} with $\gamma=0.5$ and $\rho=0.8$, (estimated) sensitivity of the price $p$ in black and (exact) sensitivity of the lower deterministic approximation $d$ in blue, with respect to $\rho$ and $\lambda$.}
  \label{sensi_fig}
\end{figure}


We now discuss the sensitivity of the price $p$ with respect to $\rho$ and $\lambda$. We show numerically that the sensitivity of $d$ can be very close to the one of $p$ (with respect to $\rho$ and $\l$), see Figure \ref{sensi_fig}. The two curves are indeed very close even when $d$ is not necessarily a good approximation of $p$. For example, when $\rho=-1$, the relative error between $\frac{\partial p}{\partial \rho}$ and $\Delta_\rho$ is equal to $7\%$ and when $\l=0$, the relative error between $\frac{\partial p}{\partial \lambda}$ and $\Delta_\lambda$ is equal to $3\%$.\\

We now propose other simulations in the situations 2 and 3 of Table \ref{parameters1} in order to illustrate the quality of $D$ and $G$. 
We see in Figure \ref{s2} that the upper deterministic approximation $G$ performs better when $\l s_0$ and $T$ are small (situation 2), while when $T$ and $\l s_0$ are large (situation 3), the lower bound $D$ is clearly more accurate. For the numerical values of situation 3 of Table \ref{parameters1}, the DMC does not work because of rounding issues. Moreover, note that the minimizer of the reservation price in $\rho$ is indeed close to the argminimum $\rho^*$ of the deterministic part $d$, even in situation 3, where $D$ has no reason to be a good approximation for $p$.\\ 
\begin{figure}[h]
\captionsetup[subfigure]{justification=centering}
  \centering
  \begin{subfigure}[b]{0.33\linewidth}
    \includegraphics[width=\linewidth]{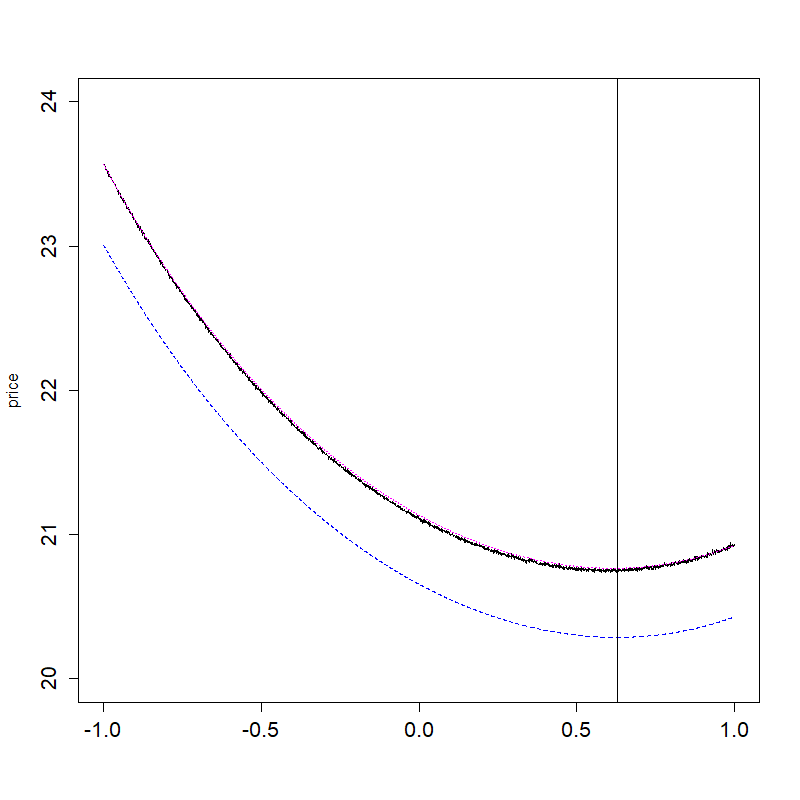}
    \caption*{LMC method,\\ Market Situation 2, $\gamma=0.1$}
  \end{subfigure}
  \centering
  \begin{subfigure}[b]{0.33\linewidth}
    \includegraphics[width=\linewidth]{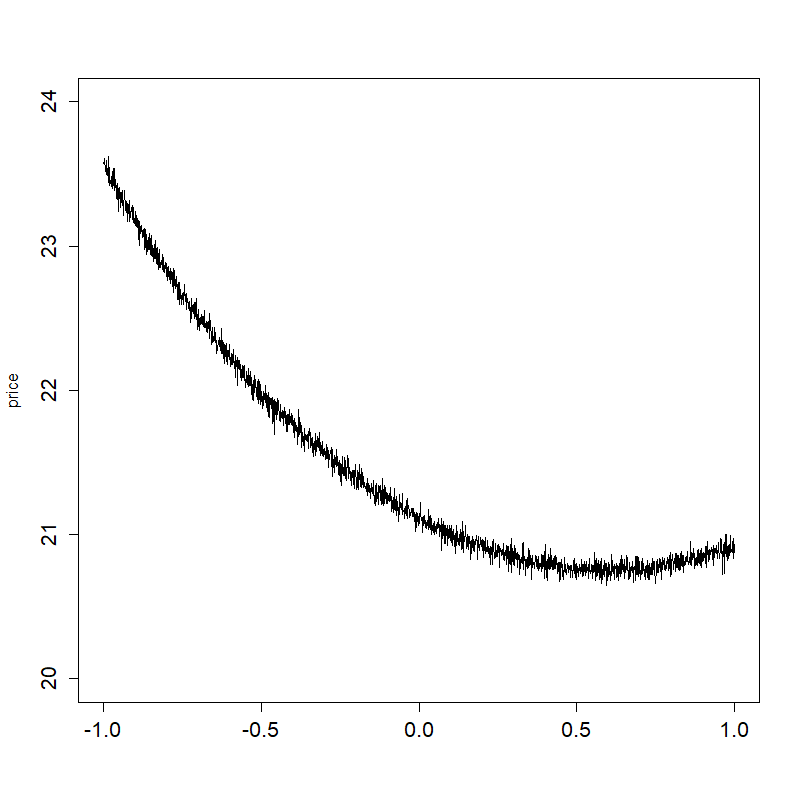}
     \caption*{DMC method,\\ Market Situation 2, $\gamma=0.1$}
  \end{subfigure}
    \begin{subfigure}[b]{0.33\linewidth}
    \includegraphics[width=\linewidth]{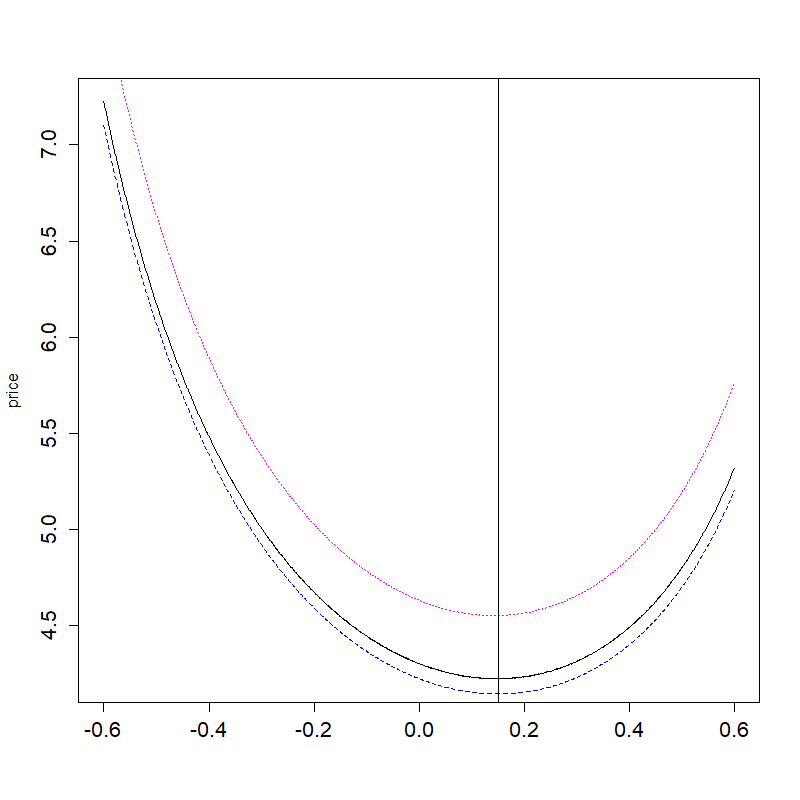}
     \caption*{LMC method,\\ Market Situation 3, $\gamma=15$}
  \end{subfigure}
  \caption{Estimated bid reservation price of $\l S_T$ as a function of $\rho$ in black,  lower deterministic approximation $d$ in blue (dashed line) and upper deterministic approximation $g$ in purple (dotted line) with $10^4$ simulations per point. The vertical lines indicate the position of the minimum of $D$.}
  \label{s2}
  \label{s3}
\captionsetup[subfigure]{justification=centering}
  \centering

\end{figure}

We now comment on the computational time of both the LMC and the DMC methods. In the market situation 1 of Table \ref{parameters1} with $\rho=0.4$ and $\gamma=0.5$, the execution time\footnote{CPU specification : Intel Core i5 - 7300U 2.60GHz, RAM specification : 8GB DDR3} of the LMC (resp. DMC) method for the computation of the reservation price and the Confidence Interval (CI) with $10^4$ simulations is 2.72 (resp. 2.10) milliseconds. For $10^6$ simulations, it is 229.14 (resp. 214.91) milliseconds. So, both methods take roughly the same time to produce a given price and CI. Indeed, most of the execution time of the LMC and the DMC methods comes from the simulation of random variables and the part  dedicated to the computation of the Lambert function is negligible.\\ 

In contrast to the LMC method, the number of simulations required for the DMC method to obtain a satisfactory CI can be very large. This phenomenon is illustrated in Table \ref{number_n}. 
We see that for some values of the correlation, the DMC method needs $10^8$ simulations to deliver an appropriate CI  while $10^2$ simulations are enough for the LMC method.  
Note that in the situation 1 of Table \ref{parameters1} and with our computer, the DMC method fails to deliver  a CI of  length smaller than $1$ for $\rho \in \{-0.6, -0.4, -0.2, 0.2,0.4, 0.6 \}$ as it needs more that $10^8$ simulations.

\begin{table}[H]
\centering
\centering
\resizebox{0.64\textwidth}{!}{
\begin{tabular*}{0.71\textwidth}{
| c | c | c | c | c | c | c |}
\hline
\multicolumn{1}{|c|}{} &
  \multicolumn{2}{|c|}{\textbf{LMC method}} & \multicolumn{2}{|c|}{\textbf{DMC method}} & \multicolumn{2}{|c|}{\textbf{Bounds}} \\
  \hline
     \multicolumn{1}{|c|}{\textbf{$\rho$}} & \multicolumn{1}{|c|}{\textbf{$n$}} & \multicolumn{1}{|c|}{\textbf{Time}} & \multicolumn{1}{|c|}{\textbf{$n$}} & \multicolumn{1}{|c|}{\textbf{Time}} & \multicolumn{1}{|c|}{\textbf{Lower bound $D$}} & \multicolumn{1}{|c|}{\textbf{Upper bound $G$}} \\
  \hline
   $-0.9$\; & $2$ & $0.13$ & $6$ & $179.23$ & 179.97 & 181.72\\
   $-0.8$ & $2$ & $0.16$   & $7$ & $1795.59$ & 160.67 & 162.11 \\
   $-0.6$ & $2$ & $0.16$ & $5$ & $16.70$ & 139.42 & 140.57 \\
   $-0.4$ & $2$ & 0.15 & $8$ & 18020.96 & 128.39 & 129.40 \\
   $-0.2$ & $2$ & 0.15 & $6$& 170.80 & 122.62 & 123.57\\
   $0$ & $2$ & 0.16 & $6$ & 172.85 & 120.44 & 121.37\\
   $0.2$ & $2$ & 0.16 & $8$ & 16974.27 & 121.37 & 122.32\\
   $0.4$ & $2$ & 0.17 & $6$ & 170.00 & 125.73 & 126.73 \\
   $0.6$ & $2$ & 0.15 & $8$ & 18476.06 & 134.93 & 136.06 \\
   $0.8$ & $2$ & 0.19 & $6$ & 172.26 & 153.23 & 154.62\\
   $0.9$ & $2$ & 0.15 & $5$ & 17.38 & 169.91 & 171.57\\
   \hline
\end{tabular*}}
\caption{Numbers of simulations ($10^n$) needed to obtain a CI of length smaller than $5$ for the reservation price of 2 units of stock and associated computational times  in milliseconds. This was computed for several value of $\rho$ in the market situation 1 of Table \ref{parameters1}. The last two columns are the (deterministic) bounds of the (true) price. CPU specification : Intel(R) Core(TM) i5-4690K 3.5GHz, RAM specification : 8GB DDR3.}
\label{number_n}
\end{table}  

We now compute the deterministic bounds for $D/p$ and $G/p$ of Theorem  \ref{Evry} in order to quantify the precision of our deterministic approximations of $p$ and to determine how sharp they are. We start with the lower bounds of $D/p$. In the market situation 1 of Table \ref{parameters1}, we have that $\exp{(-{\eta^2 T}/{2})}=0.9888$. The other bound in (\ref{ineq1}) 
is sharper (for $\gamma \in \{0.5,4,15\}$ and $\rho \in \{-0.8,-0.4,0,0.4,0.8\}$, it is between 0.9910 and 0.9956). 
Thus, the reservation price is well-explained by $D$. So in this situation, the two bounds for $D/p$ are indeed very close. This is not the case in the situation 3 of Table \ref{parameters1}. 
Indeed, we have $\exp{(-{\eta^2 T}/{2})}=0.6376,$ while the other bound in (\ref{ineq1}) is much tighter, as seen in Table \ref{sdev}. 
\begin{table}[H]
\begin{tabular*}{\textwidth}{|l@{\extracolsep{\fill}}|ccccc|}
  \hline
    & $\rho=-0.8$ & $\rho=-0.4$ & $\rho=0$ & $\rho=0.4$ & $\rho=0.8$ \\
  \hline
   $\gamma=0.5$ &0.8735 &0.8735 &0.8688 &0.8567 &0.8301 \\
   $\gamma=4$ & 0.9936 &  0.9944 & 0.9946  &0.9944 &0.9935\\
   $\gamma=15$ &0.9949 &0.9956 &0.9956 &0.9955 &0.9948 \\
   \hline
\end{tabular*}
\caption{Lower bound in (\ref{ineq1}) in the market situation 3 of Table \ref{parameters1}.}
\label{sdev}
\end{table}
We now compute the upper bounds of $G/p$ in \eqref{ineq2}. In the market situation 1 of Table \ref{parameters1}, we have that $1+e_2/\eta^2 T=1.017$. The other bound is between 1.0035 and 1.0106 for $\gamma \in \{0.5,4,15\}$ and $\rho \in \{-0.8,-0.4,0,0.4,0.8\}.$ 
The bounds are again very close, and the reservation price is also well explained by $G$. 
In the market situation 3 of Table \ref{parameters1}, we find that $1+e_2/\eta^2 T=3.2112$ and the values of the other upper bound in \eqref{ineq2} are greater than $2,$ even though $G$ is still a correct approximation of $p$ (see Figure \ref{s3}).

As a conclusion to this section, we have seen that $D$ and $G$ are reliable approximations of the  reservation price and are obtained without any computation time.


\subsection{Optimal strategy and value function}
\label{subsec_strategie}

We focus now on the optimal strategy $(\Pi^{*,\l}(t,S_t))_{0\leq t\leq T}$ for (\ref{valeur}), i.e., the strategy that maximizes the expected utility in (\ref{valeur}), when $h= id$. As we consider a bid reservation price instead of an ask reservation price (see (\ref{indiffsell_gen})), we need to slightly adapt the arguments of Monoyios (see (29) in \cite{ref5}). For all $t\in [0,T]$ and $s\geq 0$, we obtain that 
\begin{eqnarray}
\Pi^{*,\lambda}(t,s)&=&e^{-r(T-t)}\frac{s_R}{\gamma\sigma}-\frac{\eta\rho}{\sigma}s\frac{\partial p_t}{\partial s}(\rho,s),
\label{sup_strat}
\end{eqnarray} 
where $p_t$ is the dynamic version of $p$, i.e., $V_t(x-p_t,\l,id)=V_t(x,0,id),$ where $V_t(x,\l,id)$ is the value function at $t$ if $S_t=s$ and $X_t=x.$ For all $t\in [0,T]$ and $s\geq 0$, as for (\ref{indiffdef_gen}), we determine that 
\begin{eqnarray}
\label{dynamical_price}
p_t(\rho, s)=-\frac{e^{-r(T-t)}}{\gamma(1-\rho^2)}\ln\mathbb{E}\left(\exp\left[-\lambda\gamma(1-\rho^2)se^{\left(\nu-\eta \rho s_R-\frac{\eta^2}{2}\right)(T-t)+\eta\sqrt{T-t}N}\right]\right).
\end{eqnarray}
The computation of the optimal strategy   $\Pi^{*,\lambda}$ involves the partial derivative of $p_t.$ Therefore, numerical issues may appear because the DMC method performs badly even for $p$: see Figure \ref{lambert_direct_stock}. Monoyios (see Corollary 1 in \cite{ref4} or Section 4.1.1 in \cite{ref5}) addresses this issue by computing the series expansion of the partial derivative in (\ref{sup_strat}). However, we have seen in Section \ref{taylorapprox} that the series expansion can be extremely imprecise for $p,$ especially when $\gamma$ is not very small and $|\rho|$ is not near 1. Thus, we propose to use in \eqref{sup_strat} a dynamic version of our decomposition of the reservation price: 
\begin{eqnarray*}
\Pi^{*,\lambda}(t,s)&=&e^{-r(T-t)}\frac{s_R}{\gamma\sigma}-\frac{\eta\rho}{\sigma}s\frac{\partial d_t}{\partial s}(\rho,s)-\frac{\eta\rho}{\sigma}s\frac{\partial a_t}{\partial s}(\rho,s),
\end{eqnarray*}
where for all $s>0$ and $\rho\in (-1,1)$, $w_t(\rho,s)  =  W(s \lambda \gamma\eta^2 (T-t)e^{(\nu -\eta \rho s_R-\frac{\eta^2}{2})(T-t)} (1-\rho^2))$ and
\begin{align}
d_t(\rho,s)& = 
 \frac{  e^{-r(T-t)}}{\gamma\eta^2 (T-t)(1-\rho^2)} \left( w_t (\r,s)+\frac{w_t^2 (\r,s)}{2}\right) \label{dynamical_d}\\
a_t(\rho,s)& =   -\frac{ e^{-r(T-t)}}{\gamma(1-\rho^2)}\ln \mathbb{E}\left(\exp\left(- \frac{w_t(\rho,s)}{\eta ^2 (T-t)}\left(e^{\eta\sqrt{T-t}N}-1-\eta\sqrt{T-t}N\right) \right)\right)\nonumber.
\end{align}
Using (\ref{eqlambderiv}) in the Appendix, we determine that
\begin{eqnarray}
\frac{\partial {w}_t}{\partial s}(\rho,s)&=&\frac{{w}_t(\rho,s)}{s(1+{w}_t(\rho,s))}\nonumber\\
\frac{\partial d_t}{\partial s}(\rho,s)&=&  \frac{ e^{-r(T-t)}}{\gamma(1-\rho^2)\eta^2 (T-t)}\frac{\partial {w}_t}{\partial s}(\rho,s)(1+{w}_t(\rho,s)) 
= \frac{e^{-r(T-t)}}{\gamma(1-\rho^2)\eta^2 (T-t)s}{w}_t(\rho,s).
\label{delta_d_t}
\end{eqnarray}
Thus, we propose to approximate $({\Pi}^{*,\lambda}(t,S_t))_{0\leq t\leq T}$ with the deterministic strategy $({\Pi}^{D,\lambda}(t,S_t))_{0\leq t\leq T}$ defined for all $t\in [0,T]$ and $s>0$ by
\begin{eqnarray}
{\Pi}^{D,\lambda}(t,s)&=&e^{-r(T-t)}\frac{s_R}{\gamma\sigma}-\frac{\eta\rho}{\sigma}s\frac{\partial d_t}{\partial s}(\rho,s)= e^{-r(T-t)}\left(\frac{s_R}{\gamma\sigma}-\frac{\rho\; {w}_t(\r,s)}{\sigma\eta\g (1-\r^2)(T-t)}\right).
\label{new_strat}
\end{eqnarray}  
This strategy avoids taking the partial derivative and the related numerical issues. We do not provide any theoretical results regarding the quality of this approximation, but we show numerically in Figure \ref{strategie} that both strategies are indeed close. Note that in the market situation 1 of Table \ref{parameters1}, when 
 $x_0=80$ and $\gamma=0.5$, the value function $V$ is very small as $V\leq V_G=-2.42.10^{-45}.$ So, providing numerical approximations for $V$ is not meaningful. 
 We thus turn to the market situation 2 of Table \ref{parameters1} with $\gamma=0.1$ and $x_0=0.$ 
\begin{figure}[h]
\captionsetup[subfigure]{justification=centering}
  \centering
  \begin{subfigure}[b]{0.45\linewidth}
    \includegraphics[width=\linewidth]{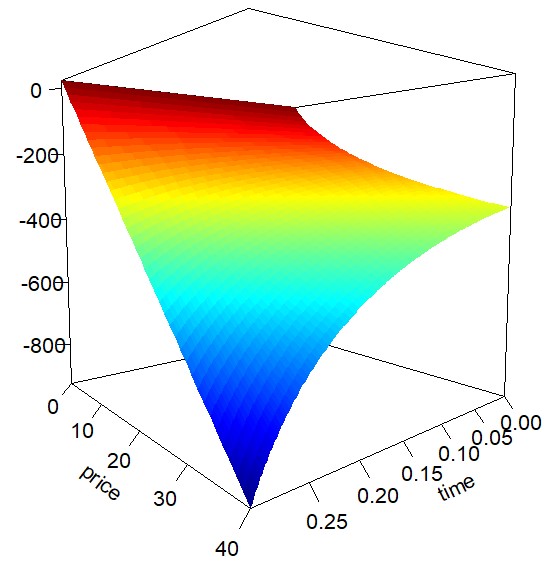}
     \caption*{Estimated optimal strategy}
  \end{subfigure}
 \begin{subfigure}[b]{0.45\linewidth}
    \includegraphics[width=\linewidth]{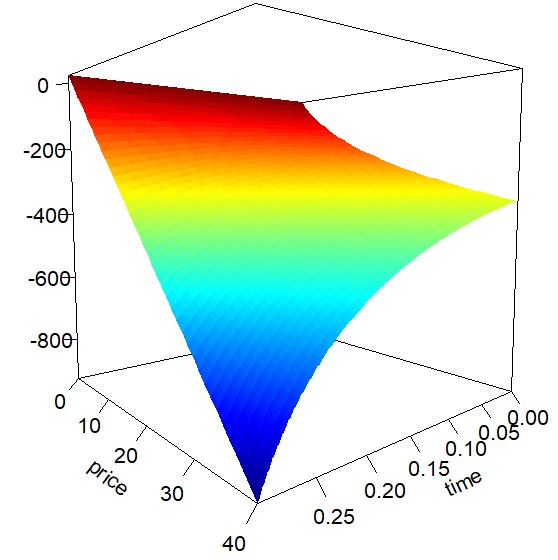}
    \caption*{Deterministic strategy}
  \end{subfigure}
  \caption{Graph of both strategies in the market situation 2 in Table \ref{parameters1} with $\gamma=0.1$, $x_0=0$ and $\rho=0.6$. For the computation of the estimated optimal strategy, we use the LMC method with $10^4$ simulations per point.}
 \label{strategie}
\end{figure}
It is also interesting to compare  $E_D=\mathbb{E}(U(X_T^{x_0,{\Pi}^{D,\lambda}}+\l S_T))$ with the value function $V=V(x_0,\lambda,id)$ defined in \eqref{def_value}. 


\begin{table}[H]
\centering
\resizebox{1\textwidth}{!}{
\begin{tabular*}{1.1\textwidth}{|l@{\extracolsep{\fill}}|cccccc|}
  \hline
    & $\rho=-0.8$ & $\rho=-0.5$ & $\rho=-0.2$ & $\rho=0.2$ & $\rho=0.5$ & $\rho=0.8$ \\
  \hline
     $V$ & $-0.983$ &$-1.069$ &$-1.134$ &$-1.189$ &$-1.208$ & $-1.204$\\
     $CI$ & $[-0.983,-0.982]$ & $[-0.107,-1.068]$ & $[-1.134, -0.135]$ & $[-1.190,-1.189]$ & $[-1.209,-1.207]$ & $[-1.205,-1.203]$ \\

          $V_D$ & $-1.035$ &$-1.122$ &$-1.188$ &$-1.245$ &$-1.207$ & $ -1.263$ \\
               $V_G$ & $-0.982$ &$-1.067$ &$-1.132$ &$-1.189$ &$-1.265$ & $ -1.203$ \\
  $ E_D$ &$-0.985 $ &$-1.070$ &$-1.140$ & $-1.187$ &$-1.206$ &$-1.206$\\
   Sd  &$0.411$  &$0.547$& $0.625$  & $0.626$ &$0.595$ &$0.475$\\   \hline
\end{tabular*}}
\caption{Numerical values for $V,$ $V_D$ and $V_G$ in the market situation 2 of Table \ref{parameters1} when $\gamma=0.1$ and $x_0=0$. $V$ is computed using the LMC method. CI is the confidence interval of $V$ at 99\%.   
$E_D$ is the expected utility with the strategy (\ref{new_strat}), computed using the DMC method with $10^4$ simulations per point and a time step of $T/200$ in the Euler scheme (i.e., 200 rebalances of the strategy). The last line is the standard deviation of $E_D$.}
\label{VhatVbarV}
\end{table}
 
We see that the deterministic approximations of $V$ are reliable. Moreover, looking at Table \ref{VhatVbarV}, $V_G$ is much more precise than $V_D$. Indeed, in Figure \ref{s2}, $G$ is a better approximation of $p$ than $D$. 
However, $E_D$, the expected utility computed with the strategy (\ref{new_strat}), provides an even better approximation. We have seen numerically that this is true in other market situations, which confirms the relevance of the approximated optimal strategy.

\subsection{Selecting a hedging asset}
\label{subvard}

One may examine the problem of pricing and hedging from a different point of view. Assume that an agent has the choice between different hedging assets $P$. This is for example the case for an index, where the different stocks that constitute the index may be hedging assets. She may choose the one that minimizes the reservation price with the same level of rentability and of risk i.e. $\mu$, $\sigma$ and $r$ are fixed (and thus also the Sharp ratio), and she minimizes $p$ in $\rho$. Indeed, as $p$ is a bid price, the agent may want to pay as little as possible. 
Now, we want to minimize in $\rho$ the reservation price. Unfortunately, the variation and the minimum of the reservation price (as a function of $\rho$) are very difficult to study. However, this can be done for the deterministic part $d$, as stated in the next proposition. Moreover, the numerical applications of Section \ref{secnum} suggest that $\rho^*,$ the argminimum of $d$, is in fact a good approximation of the real argminimum of the reservation price. Furthermore, as the agent select $P$ for hedging purpose, she may wish not to invest too much money. We will see numerically that choosing an hedging asset with correlation $\rho^*$ provides a very small hedging position. We finish this section with an empirical application for the determination of the risk aversion of an agent investing in the stocks composing the CAC40 index. It is worth noting that even if $g$ and $d$ appear to be similar (see \eqref{eqpetitd} and \eqref{g} in the Appendix), the study of the variations of $g$ is much more difficult. In particular, we were unable to find the minimum of $g$.
\begin{proposition}
\label{variation}
Let \begin{equation}
\label{minimum}
\rho^* = \frac{\eta T}{W\left(\l \g  s_0 \eta^2  Te^{\left(\nu  -\frac{\eta^2}{2}\right)T}  \right)}s_R.
\end{equation}
If $\rho^*\leq -1$, $d$ is increasing on $(-1,1)$. If $\rho^*\ge1$, 
 $d$ is decreasing on $(-1,1)$. \\
 Now, assume that $-1<\rho^*<1$. Then, 
$d$ is decreasing on $(-1,\rho^*)$ and increasing elsewhere and
\begin{eqnarray}
d(\rho^*) 
 =  d(0)-\frac{e^{-rT}}{2\gamma}s_R^2 T  \quad \mbox{ and }
\label{drstar1}  \quad 
g(\rho^*)
 = g(0)-\frac{e^{-rT}}{2\gamma} s_R^2 T .\nonumber
\end{eqnarray}
Moreover,
\begin{eqnarray}
\label{unifp}
\inf_{\rho\in (-1,1)}p & \geq & \left\{
    \begin{array}{ll}
        d(0)-\frac{e^{-rT}}{2\gamma}s_R^2 T  & \mbox{if } -1<\rho^*<1 \\
        \lambda e^{-rT}s_0 e^{\left(\nu-\eta s_R -\frac{\eta^2}{2}\right)T} & \mbox{if } \rho^*\geq 1 \\
        \lambda e^{-rT}s_0 e^{\left(\nu+\eta s_R -\frac{\eta^2}{2}\right)T} & \mbox{otherwise.}
    \end{array}
\right.\\
\label{unifv}
\inf_{\rho\in (-1,1)}V(x_0,\lambda)& \geq & -\frac{1}{\gamma}\exp\left(-\gamma e^{rT}\left(x_0+d(0)\right)\right) \;\;\mbox{if } -1<\rho^*<1.
\end{eqnarray}
\end{proposition}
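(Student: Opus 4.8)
\textbf{Proof proposal for Proposition \ref{variation}.}

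The plan is to study the sign of $d'(\rho)$ directly, exploiting the explicit formula \eqref{eqpetitd} together with the derivative rule \eqref{eqlambderiv} for the Lambert function. First I would simplify $d(\rho)$ by substituting $\theta(\rho)=\lambda\gamma(1-\rho^2)$ into \eqref{eqpetitd}; since $w(\rho)=W\!\left(s_0\eta^2 T e^{(\nu-\eta\rho\frac{\mu-r}{\sigma}-\frac{\eta^2}{2})T}\theta(\rho)\right)$, the factor $\theta(\rho)$ cancels against the one in the denominator and $d$ becomes a clean function of $w(\rho)$ alone, namely a positive constant times $\frac{w(\rho)}{\eta^2 T}\bigl(1+\frac{w(\rho)}{2}\bigr)\big/\theta(\rho)$; the key observation is that $x\mapsto x(1+x/2)$ is increasing for $x>0$, so by \eqref{strictposi} the monotonicity of $d$ is governed by the monotonicity of the argument inside $W$. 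I would then differentiate the argument $A(\rho):=s_0\eta^2 T e^{(\nu-\eta\rho\frac{\mu-r}{\sigma}-\frac{\eta^2}{2})T}\lambda\gamma(1-\rho^2)$ with respect to $\rho$; setting $A'(\rho)=0$ and using that the exponential factor is strictly positive, the critical-point equation reduces to a relation linking $\rho$, $w(\rho)$ and the Sharpe ratio $\frac{\mu-r}{\sigma}$, which after using $W(A)=w$ and the defining identity \eqref{eqlamb1} collapses to exactly the expression \eqref{minimum} for $\rho^*$.

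Once the unique critical point $\rho^*$ is identified, the trichotomy ($\rho^*\le -1$, $\rho^*\ge 1$, or $-1<\rho^*<1$) follows from a sign analysis of $A'(\rho)$ on $(-1,1)$: I expect $A'$ to be negative then positive (or monotone of one sign) so that $d$ decreases on $(-1,\rho^*)$ and increases on $(\rho^*,1)$, with the boundary cases handled by whether the interior minimizer lies inside the interval. For the value formula $d(\rho^*)=d(0)-e^{-rT}\frac{(\mu-r)^2 T}{2\gamma\sigma^2}$, I would evaluate $w(\rho^*)$ and $w(0)$ explicitly: plugging $\rho^*$ from \eqref{minimum} into the argument of $W$ and using \eqref{eqlamb2} should yield a telescoping simplification in which the quadratic-in-$\rho$ terms produced by expanding $(\nu-\eta\rho\frac{\mu-r}{\sigma}-\frac{\eta^2}{2})T$ and $(1-\rho^2)$ recombine into the stated constant $\frac{(\mu-r)^2 T}{2\gamma\sigma^2}$. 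The analogous identity for $g$ follows from $g=d+b$ once one checks that $b(\rho^*)=b(0)$, which again reduces to showing $w(\rho^*)/\theta(\rho^*)=w(0)/\theta(0)$ after the substitution.

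Finally, the uniform lower bounds \eqref{unifp} and \eqref{unifv} follow by combining the previous pieces: from Remark \ref{lower_bound_D} (or Theorem \ref{Evry}) we have $p\ge D=d(\rho)$ pointwise, hence $\inf_\rho p\ge\inf_\rho d=d(\rho^*)$, and substituting the computed value of $d(\rho^*)$ gives \eqref{unifp}; the bound \eqref{unifv} then comes from \eqref{value_pe} (with $h=id$), since $V$ is a strictly decreasing function of $p$, so the minimizer of $p$ maximizes $-V$ and one reads off the stated exponential, noting that the $-\frac{(\mu-r)^2}{2\sigma^2}T$ term in \eqref{value_pe} is exactly absorbed by the correction term in $d(\rho^*)$ after multiplying by $\gamma e^{rT}$. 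The main obstacle I anticipate is the algebraic verification that the substitution of $\rho^*$ into the $W$-argument produces the clean cancellation yielding \eqref{drstar1}; this requires careful bookkeeping of the exponent, using \eqref{eqlamb1}--\eqref{eqlamb2} to eliminate $W$ in favor of its argument, and is the step where a sign or factor error is most likely to creep in.
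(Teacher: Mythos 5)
Your plan for the second half of the proposition is essentially the paper's: you evaluate $w(\rho^*)$ and $w(0)$ via \eqref{eqlamb1}--\eqref{eqlamb2} to get \eqref{drstar1}, you reduce $g(\rho^*)=g(0)-e^{-rT}\frac{(\mu-r)^2T}{2\gamma\sigma^2}$ to $b(\rho^*)=b(0)$, and you obtain \eqref{unifp}--\eqref{unifv} from $p\geq D=d(\rho)\geq d(\rho^*)$ together with \eqref{value_pe}, where the $-\frac{(\mu-r)^2}{2\sigma^2}T$ term is indeed absorbed exactly as you say. However, the core of the proposition --- the variations of $d$ and the identification of $\rho^*$ --- rests on a step that is false. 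You claim that after cancellation "$d$ becomes a clean function of $w(\rho)$ alone'', so that, $x\mapsto x(1+x/2)$ being increasing, the monotonicity of $d$ is governed by that of the argument $A(\rho):=s_0\eta^2 Te^{(\nu-\eta\rho\frac{\mu-r}{\sigma}-\frac{\eta^2}{2})T}\lambda\gamma(1-\rho^2)$ inside $W$. This is not true: by \eqref{eqpetitd}, $d(\rho)$ is a positive constant times $\frac{1}{1-\rho^2}\,F(w(\rho))$ with $F(x)=x(1+x/2)$, and the prefactor $\frac{1}{1-\rho^2}$ is itself a non-monotone function of $\rho$ which blows up at $\rho=\pm1$ precisely where $w(\rho)\to0$ (since $\theta(\pm1)=0$). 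The monotonicity of $d$ comes from the competition between these two factors --- this is why $d$ has finite nonzero limits at $\pm1$ (Lemma \ref{propa}) --- and cannot be read off from $A(\rho)$ alone.

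Concretely, your reduction produces the wrong answer. The equation $A'(\rho)=0$ reads $2\rho+\eta\frac{\mu-r}{\sigma}T(1-\rho^2)=0$: it involves no Lambert function, so it cannot "collapse to exactly'' \eqref{minimum}, whose denominator is $W\bigl(\lambda\gamma s_0\eta^2 Te^{(\nu-\frac{\eta^2}{2})T}\bigr)$; moreover its root in $(-1,1)$ has the sign of $-(\mu-r)$, whereas $\rho^*$ has the sign of $\mu-r$. Worse, $A$ is positive on $(-1,1)$ and vanishes at both endpoints, so $A$ increases then decreases; if $d$ inherited the monotonicity of $A$, it would have an interior \emph{maximum}, the opposite of the claimed (and true) behaviour that $d$ decreases on $(-1,\rho^*)$ and increases after. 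The correct route, which is the paper's, is to differentiate $d$ keeping the prefactor: using \eqref{eqlambderiv} one finds
\begin{equation*}
d'(\rho)=-\frac{\lambda e^{-rT}\,w(\rho)}{\eta^2 T\,\theta(\rho)}\left(\eta\frac{\mu-r}{\sigma}T-\frac{\rho}{1-\rho^2}\,w(\rho)\right),
\end{equation*}
so the sign of $d'$ is governed by the \emph{implicit} inequality $\frac{\rho}{1-\rho^2}w(\rho)\geq\eta\frac{\mu-r}{\sigma}T$, in which $w(\rho)$ still depends on $\rho$. Resolving it requires applying the inverse $l$ of $W$ (see \eqref{eqf}) and then $W$ again via \eqref{eqlamb1}, which eliminates the $\rho$-dependence inside $W$ and yields the equivalence $d'(\rho)\geq0\iff\rho\geq\rho^*$, with a separate sign discussion according to $\mu\geq r$ or $\mu\leq r$ and $\rho\lessgtr0$. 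Without this computation your argument does not establish the variations of $d$, and hence neither the trichotomy nor the fact that $\rho^*$ is the minimizer used in \eqref{unifp}--\eqref{unifv}.
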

\begin{proof}[Proof of Proportion \ref{variation}]  See Section \ref{proof_3}. $\;\square$ \end{proof}
The correlation $\rho^*$ has another interpretation related to the deterministic strategy ${\Pi}^{D,\lambda}_\rho(0,s_0)$, where ${\Pi}^{D,\lambda}_\rho$ denotes the deterministic strategy introduced in (\ref{new_strat}) for the correlation $\rho$ for all $\rho\in(-1,1)$, as we see in the following proposition.
\begin{proposition}
\label{rho_star_strat}
Assume $-1<\rho^*<1$. Then, ${\Pi}^{D,\lambda}_{\rho^*}(0,s_0)=0.$
\end{proposition}
\begin{proof}[Proof of Proportion \ref{rho_star_strat}] See Section \ref{proof_3}.  $\;\square$
\end{proof} 

We know from Section \ref{subsec_strategie} that the deterministic strategy ${\Pi}^{D,\lambda}$ is a good approximation of the optimal strategy. Thus, given an agent who buys $\lambda$ units of $S$ at time $0$ and is faced with the choice of hedging asset, Proposition \ref{rho_star_strat} shows that $\rho^*$ can be interpreted as the correlation between $P$ and $S$ for which an investor would have no share in the hedging asset $P$ at time $0$. This is relevant because the money borrowed for the hedging strategy can be very high, see Figure \ref{strategie}. We now provide similar results for $\Pi^{*,\l}_\rho$, which denotes the optimal strategy \eqref{sup_strat} for the correlation $\rho$. We first give a theoretical answer when $T$ is close to $0$ and then show numerically that choosing the correlation $\rho^*$ can provide the desired effect even when $T$ is not small.

\begin{proposition}
\label{short_mat_strat}
Let $\rho\in(-1,1)$, we have that $\lim_{T\to 0^+} {\Pi}^{*,\lambda}_\rho(0,s_0)= \frac{s_R}{\gamma\sigma}-\frac{\eta\rho}{\sigma}\lambda s_0.$ Thus,
\begin{eqnarray}
\lim_{T\to 0^+} {\Pi}^{*,\lambda}_\rho(0,s_0)=0 \iff \rho= \frac{1}{\eta s_0 \lambda \gamma}s_R=\lim_{T\to 0^+}\rho^*.
\label{short_mat_rho}
\end{eqnarray}
\end{proposition}
\begin{proof}[Proof of Proportion \ref{short_mat_strat}] See Section \ref{proof_3}. $\;\square$
\end{proof} 


Assuming that the agent holds the claims $\lambda S_T$ for a very short period of time, Proposition \ref{short_mat_strat} shows that an agent who chooses a hedging asset whose correlation is close to $s_R/(\eta s_0 \lambda \gamma)$ should not invest too much money in the hedging asset. This can also be seen numerically in the left-hand side of Figure \ref{fig_size_1}. We see numerically in the right-hand side of Figure \ref{fig_size_1} that choosing a hedging asset $P$ such that the correlation between $P$ and $S$ is equal to $\rho^*$ (or its limit in $0$) is indeed a good choice for an agent who wants to invest as little as possible on the hedging strategy, even when $T$ is not very small.

\begin{figure}[H]
  \centering
  \begin{subfigure}[b]{0.4\linewidth}
    \includegraphics[width=\linewidth]{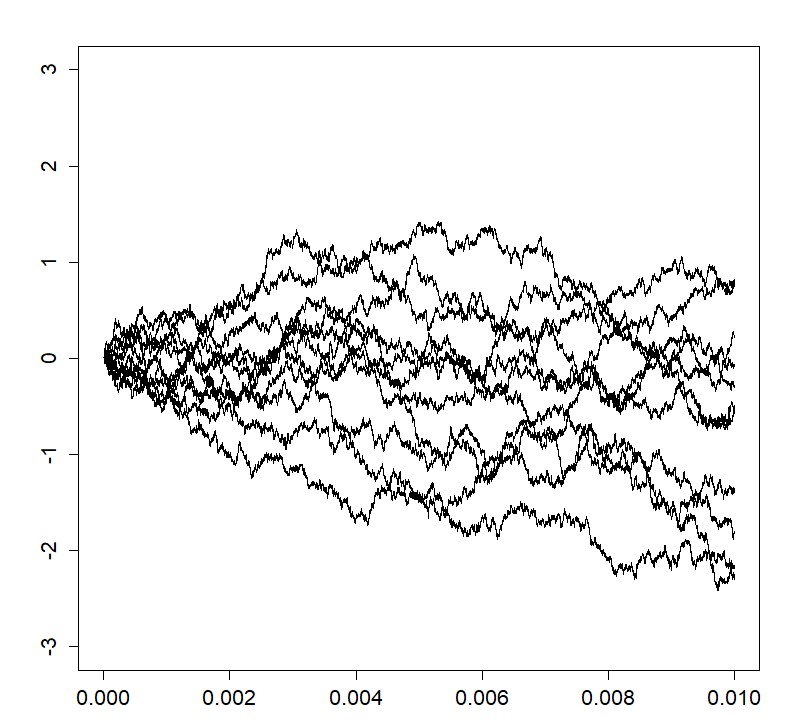}
    \caption*{Market situation 2 of Table \ref{parameters1} with $\gamma=0.1$, $T=0.01$,\\$\rho=\frac{1}{\eta s_0 \lambda \gamma}s_R=0.61875$  ($\rho^*=0.61906$).}
  \end{subfigure}
  \centering
  \begin{subfigure}[b]{0.4\linewidth}
    \includegraphics[width=\linewidth]{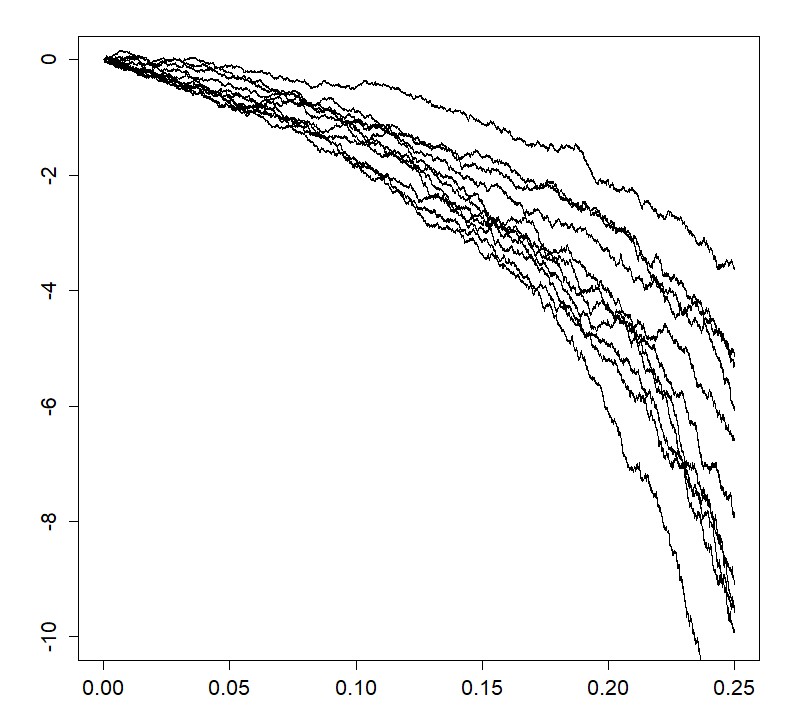}
     \caption*{Market situation 1 of Table \ref{parameters1} with $\gamma=0.5$, $T=0.25$, \\ $\rho=\rho^*=0.04$.}
  \end{subfigure}
  \caption{Several simulations of $t\mapsto \Pi_{\rho}^{D,\l}(t,S_t)$ when $\rho=\frac{1}{\eta s_0 \lambda \gamma}s_R$ or $\rho=\rho^*$.}
\label{fig_size_1}
\end{figure}


We then show numerically that even if the agent chooses to hedge herself with the asset with correlation $\rho^*$, it does not cripple the superhedging probability $\mathbb{P}(X_T^{-p(\rho^*),\Pi^{D,\lambda}}+\lambda S_T\geq 0)$ for ${\Pi}^{D,\lambda}$ defined in (\ref{new_strat}). To do that, we compute a Monte Carlo estimation (with $10^4$ simulations per point and a time step of $T/200$ in the Euler scheme, i.e., 200 rebalances of the strategy) of this probability in the situations 1 and 2 of Table \ref{parameters1} for different values of $\rho$. 

\begin{table}[H]
\resizebox{\textwidth}{!}{
\begin{tabular*}{1.13\textwidth}{|c|ccccccc|}
  \hline
   & & $\rho=-0.8$ & $\rho=-0.4$ & $\rho=0$ & $\rho=\rho^*=0.04$ & $\rho=0.4$ & $\rho=0.8$ \\
   \hline
   Situation 1 & Probability & $0.999$ &$0.999$ &$1.000$ &$0.998$ &$0.999$ & $0.999$\\
   with $\gamma=0.5$ & CI& $[0.998,1.000]$ &$[0.999,1.000]$ &$[0.999,1.000]$ &$[0.997,0.999]$ &$[0.999,1.000]$ & $[0.998,1.000]$\\
	\hline   
   Situation 2 & Probability & $0.619$ &  $0.615$ & $0.612$  &$0.600$ &$0.589$ & $0.593$\\
   with $\gamma=0.1$ &CI & $[0.606,0.632]$ &$[0.603,0.628]$ &$[0.599,0.624]$ &$[0.587,0.612]$ &$[0.576,0.602]$ & $[0.580,0.605]$\\
  \hline
\end{tabular*}}
\caption{Numerical approximations of the superhedging probability}
\end{table}

This suggest that the superhedging probability does not change much when varying the correlation and justifies the choice of a hedging asset $P$ with correlation $\rho^*$.\\
%
%
%
%

We finish with an approximation of $\rho^*$ for small and large values of $T$.
\begin{lemma}
\label{lim_rho_star}
We determine that 
\begin{eqnarray}
\rho^* & = &s_R\left(\frac{1}{\eta s_0 \lambda \gamma}+\left(\eta  - \frac{\nu-\frac{\eta^2}{2}}{\eta s_0 \lambda\gamma}\right)T\right)+T\epsilon(T),
\label{t_app}
\end{eqnarray}
where $\lim_{T\to 0^+}\epsilon(T)=0$.
\label{closed}
Assume that $\nu > \eta^2/2.$ Then, $\lim_{T\to +\infty}\rho^* = \frac{\eta}{\nu-\frac{\eta^2}{2}}s_R$. 
\end{lemma}
\begin{proof}[Proof of Lemma \ref{closed}]  See Section \ref{proof_3}. $\;\square$ \end{proof}
Table \ref{tab_relative_error} shows the efficiency of the first-order approximation in (\ref{t_app}).
\begin{table}[H]
\begin{tabular*}{\textwidth}{|l@{\extracolsep{\fill}}|ccccc|}
  \hline
    & $T=0.01$ & $T=0.1$ & $T=0.5$ & $T=0.8$ & $T=1$  \\
  \hline
    $\rho^*$ & $0.310$ &$0.321$ &$0.357$ &$0.379$ &$0.391$\\
   Relative error & $<0.01\%$ &  $0.15\%$ & $2.76\%$  &$5.87\%$ &$8.23\%$  \\
   \hline
\end{tabular*}
\caption{Relative error between $\rho^*$ and the first-order approximation in (\ref{t_app}) in the market situation 2 of Table \ref{parameters1} when $\gamma=0.2$.}
\label{tab_relative_error}
\end{table}

We may also use \eqref{minimum} in order to estimate the risk aversion $\gamma$ of the investor. Consider the CAC40 index to be the non-traded asset and assume that the agent invests on some stocks of the CAC40 index. The calibration of the data to the model \eqref{mdl} for each stock composing the CAC40 index was carried out  with the daily closing prices of the stocks starting from date 2021-01-01 until date 2022-01-01 and with $r=-0.005$. We have used the K-means algorithm in order to identify a cluster of 6 stocks having a similar Sharp ratio (and belonging to a cluster with centroid $\hat{s}_R=2.26$). Those stocks belong to the following companies : Teleperformance, Hermès, Axa, Pernod Ricard, Veolia and Dassault Systèmes. We have also estimated the parameters of the CAC40 index, $\nu=0.14$ and $\eta=0.26$. Table \ref{table_gamma_sr} gives the estimated correlation $\hat{\rho}$ of those stocks with the CAC40 index as well as their estimated Sharp ratios $\hat{s}_R$. If we assume that the agent wants to cover a long position on the CAC40 index and that to do so, select optimally her hedging asset, then inverting \eqref{minimum} will provide the value of her risk aversion (see Table \ref{tablegamma}). 

\begin{table}[H]
\centering
\resizebox{1\textwidth}{!}{
\begin{tabular*}{1.1\textwidth}{|l@{\extracolsep{\fill}}|cccccc|}
  \hline
    & Teleperformance & Hermès
 & Axa & Pernod Ricard & Veolia & Dassault Systèmes \\
  \hline
     $\hat{\rho}$ & $0.27$ & $0.62$ & $0.72$ & $0.55$ & $0.43$ & $0.34$ \\
     $\hat{s}_R$ & $1.79$ & $2.47$ & $1.97$ & $1.94$ & $2.19$ & $2.14$\\
     \hline 
\end{tabular*}}
\caption{For each stock, estimated Sharp ratio $\hat{s}_R$ and estimated values for its correlation $\hat{\rho}$ with the CAC40 index.}
\label{table_gamma_sr}
\end{table}

\begin{table}[H]
\centering
\resizebox{1\textwidth}{!}{
\begin{tabular*}{1.1\textwidth}{|l@{\extracolsep{\fill}}|cccccc|}
  \hline
    & Teleperformance & Hermès
 & Axa & Pernod Ricard & Veolia & Dassault Systèmes \\
  \hline
     $\gamma$ & $0.020$ & $0.005$ & $0.004$ & $0.006$  & $0.009$ & $0.013$\\
     \hline 
\end{tabular*}}
\caption{For each stock, empirical values for the risk aversion $\gamma$ that satisfy $\rho^*(\gamma)=\hat{\rho}$ with $s_0=5589$ and $s_R=\hat{s}_R$.}
\label{tablegamma}
\end{table}

\section{Short put position}
\label{secshorput}
We now study the case of a short position on a put option with payoff $(K-S_T)_+$, where $K\in(0,\infty)$. 
Recall that $V^{put}(x_0,\l,K)$ and $p^{put}$ are respectively the value function of a short position on $\l$ units of a put option and its ask reservation price, i.e. see \eqref{indiffsell_gen}, $$V^{put}(x_0,\l,K)=V(x_0,\l,-(K-x)_+)\;\; \mbox{and} \;\;  p^{put}=p^{sell}_{(K-x)_+}.$$  
\begin{theorem}
\label{prop_decput}
Let $\bar w=W\left(s_0 e^{(\nu-\eta\rho s_R -\frac{\eta^2}{2})T} \eta^2 T \lambda\gamma(1-\rho^2) \right)$ and suppose that
\begin{eqnarray}
K\geq \frac{1}{\lambda\gamma(1-\rho^2)}\left(\frac{\bar w}{\eta^2 T}+\frac{\bar w^2}{2\eta^2 T}\right).
\label{condition_put}
\end{eqnarray} 
Then, $p^{put}=D^{put}+A^{put},$
where
\begin{eqnarray*}
D^{put} & = & \lambda e^{-rT}K -\frac{e^{-rT}}{\gamma(1-\rho^2)}\left(\frac{\bar w}{\eta^2 T}+\frac{\bar w^2}{2\eta^2 T} \right)  \mbox{ and }
A^{put} =\frac{e^{-rT}}{\gamma(1-\rho^2)}\ln \mathbb{E}\left(\psi_K(N)\right) \\
\psi_K(y) & = & \exp\left[- \frac{\bar w}{\eta^{2}T}\left(e^{\eta\sqrt{T} y}-1-\eta\sqrt{T} y\right)+\left(\frac{\bar w}{\eta^2T}e^{\eta\sqrt{T} y}-\l\g(1-\r^2) K\right)_+\right].
\end{eqnarray*}
Moreover, $V^{put}(x_0,\l,K)=V_D^{put}(x_0,\lambda,K)V_A^{put}(\lambda,K),$
where
\begin{eqnarray*}
V_D^{put}(x_0,\lambda,K) &=& -\frac{1}{\gamma}\exp\left[{-\gamma e^{rT} \left(x_0-D^{put}\right)-\frac{s_R^2}{2}T}\right]\\ 
V_A^{put}(\lambda,K) &=&  \exp \left[\gamma  e^{rT} A^{put}\right].
\end{eqnarray*}
\end{theorem}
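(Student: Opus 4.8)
The plan is to reduce everything to Theorem~\ref{prop_dec_gen}. Since $p^{put}=p^{sell}_{(K-x)_+}=-p_{-(K-x)_+}$ by \eqref{indiffsell_gen}, and since $-(K-x)_+=(x-K)1_{x\leq K}=\zeta(x)1_{[0,K]}(x)$ with $\zeta(x)=x-K$, I would identify $p^{put}=-p_{\zeta,K}$ for this affine $\zeta$. I would then apply Theorem~\ref{prop_dec_gen} with $K$ finite, $\zeta(x)=x-K$, and the natural choice suggested by Remark~\ref{close}, namely $v=\hat{s}_0$ where $\hat{s}_0=s_0e^{(\delta-\frac{\eta^2}{2})T}$, and $u=\hat\zeta(0)=-K$. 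One checks immediately that this $u$ satisfies the constraint $u=-vK/\hat{s}_0$ required for $K<+\infty$. With these values $\l\g(1-\r^2)v\eta^2 T$ is exactly the argument of $\bar w$, so $W(\l\g(1-\r^2)v\eta^2 T)=\bar w$.

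First I would verify the hypothesis. Substituting $v=\hat{s}_0$ into \eqref{condition}, its right-hand side becomes $\frac{K}{s_0}\l\g(1-\r^2)\hat{s}_0e^{-(\delta-\frac{\eta^2}{2})T}=K\l\g(1-\r^2)$, so \eqref{condition} reads $\frac{\bar w}{\eta^2 T}+\frac{\bar w^2}{2\eta^2 T}\leq K\l\g(1-\r^2)$, which is precisely \eqref{condition_put}. Next, reading off $D_{\zeta,K}$ from \eqref{eqDg} with $u=-K$ gives $D_{\zeta,K}=-\l e^{-rT}K+\frac{e^{-rT}}{\gamma(1-\rho^2)}\left(\frac{\bar w}{\eta^2 T}+\frac{\bar w^2}{2\eta^2 T}\right)$, so $D^{put}=-D_{\zeta,K}$ matches the stated expression.

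The crux of the computation is the random part. Because $\hat\zeta$ is affine, $\hat\zeta(x)=\hat{s}_0 x-K=vx+u$, the correction $\hat\zeta(x)-u-vx$ vanishes identically, not merely to first order as in Remark~\ref{close}. In the expression for $I_{\hat\zeta,\hat K}$ from Theorem~\ref{propdecompo} the indicator multiplies the bracket $\hat\zeta\left(\frac{\bar w}{\hat\theta v\eta^2 T}e^{\eta\sqrt{T}N}\right)-u-v\cdot\frac{\bar w}{\hat\theta v\eta^2 T}e^{\eta\sqrt{T}N}$, where $\hat\theta=\l\g(1-\r^2)$; this is exactly $\hat\zeta$ evaluated at its argument minus $u$ minus $v$ times that argument, hence $0$. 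Whether the indicator is $0$ or $1$, the first exponential in $I_{\hat\zeta,\hat K}$ therefore collapses to $1$, leaving $I_{\hat\zeta,\hat K}(\hat\theta)=\EE(\phi_{\hat K}(N,\hat\theta))$. Evaluating $\phi_{\hat K}$ with $\theta v\beta=\hat\theta\hat{s}_0\cdot(K/\hat{s}_0)=\l\g(1-\r^2)K$ and $W(\cdot)=\bar w$ reproduces $\psi_K$, so $I_{\hat\zeta,\hat K}(\hat\theta)=\EE(\psi_K(N))$ and $A^{put}=-A_{\zeta,K}=\frac{e^{-rT}}{\gamma(1-\rho^2)}\ln\EE(\psi_K(N))$, as claimed. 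I expect this collapse of the first exponential to be the single step worth spelling out carefully, since it is exactly what makes the affine payoff tractable and distinguishes the put case from the generic $I_{f,\beta}\approx 1$ heuristic.

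Finally, the value-function decomposition follows from \eqref{value_pe}: with $h=-(K-x)_+$ one has $V^{put}(x_0,s_0,\l,K)=V(x_0,s_0,\l,h)=-\frac{1}{\gamma}\exp\left(-\gamma e^{rT}(x_0+p_h)-\frac{(\mu-r)^2}{2\sigma^2}T\right)$ together with $p_h=-p^{put}=-(D^{put}+A^{put})$. Splitting $\exp\left(-\gamma e^{rT}(x_0-D^{put}-A^{put})\right)$ into a factor in $x_0-D^{put}$ and a factor $\exp(\gamma e^{rT}A^{put})$ yields exactly $V_D^{put}(x_0,s_0,\lambda,K)\,V_A^{put}(s_0,\lambda,K)$ with the stated definitions, which completes the proof.
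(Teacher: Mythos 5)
Your proposal is correct and follows essentially the same route as the paper's proof: reduce via $p^{put}=-p_{-(K-x)_+}$ to Theorem \ref{prop_dec_gen} with $\zeta(x)=x-K$, $v=\hat{s}_0$, $u=-K$, check that \eqref{condition} reduces to \eqref{condition_put}, observe that the affine payoff makes the correction term in $I_{\hat{\zeta},\hat{K}}$ vanish identically so that $I_{\hat{\zeta},\hat{K}}(\hat\theta)=\EE(\psi_K(N))$, and conclude the value-function factorization from \eqref{indiffsell_gen} and \eqref{value_pe}. The step you single out as the crux (the collapse of the first exponential) is exactly the one-line identity the paper records, so there is nothing to add.
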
  
\begin{proof}[Proof]
Recall that $\hat{s}_0=s_0e^{(\nu-\eta\rho s_R-\frac{\eta^2}{2})T}$ and $\hat{\theta}=\l\g(1-\r^2)$. As $p^{put}=-p_{-(K-x)_+}$, we want to use Theorem \ref{prop_dec_gen} with $K<+\infty$, $\zeta(x)=x-K,$  $v=\hat{s}_0$ and $u=-vK/\hat{s}_0=-K$. Then, $\hat{\zeta}(x)=\hat{s}_0x-K$. 
As (\ref{condition}) is equivalent to 
$\frac{\bar w}{\eta^{2}T}
+ \frac{\bar w^2}{2\eta^{2} T}\leq {K}\l\g(1-\r^2)$, it is true by \eqref{condition_put}. Thus, it remains to prove that 
$I_{\hat{\zeta},\hat{K}}(\hat{\theta}) = \mathbb{E}\left(\psi_K(N)\right).$ This follows from 
\begin{eqnarray*}
&&\phi_{\hat{K}}(y,\hat{\theta})  =\exp\left(- \frac{\bar w}{\eta^2 T}\left(e^{\eta\sqrt{T} y}-1-\eta\sqrt{T} y\right)+\left(\frac{\bar w}{\eta^2 T}e^{\eta\sqrt{T}y}-\hat{\theta} K\right)_+ \right)\\
&&\hat{\zeta}\left(\frac{W(\hat{\theta} v \eta^2 T)}{\hat{\theta} v\eta^2 T} e^{\eta\sqrt{T}N}\right)-u-v \frac{W(\hat{\theta} v \eta^2 T)}{\hat{\theta} v\eta^2 T}e^{\eta\sqrt{T}N}=0.
\end{eqnarray*}
For the decomposition of the value function, \eqref{indiffsell_gen} and \eqref{value_pe} imply that 
\begin{eqnarray*}
V^{put}(x_0,\lambda,K)&=&-\frac{1}{\gamma}\exp\left[-\gamma e^{rT}\left(x_0-p^{put}\right)-\frac{s_R^2}{2}T\right]. \quad\square
\end{eqnarray*}
\end{proof}
Condition \eqref{condition_put} is essentially asking that the option is in the money. Indeed, \eqref{condition_put} is equivalent to $Ke^{-rT}\geq \frac{D}{\lambda}$ where $D$ is defined in \eqref{eqD1} and $\frac{D}{\lambda}$ is a reliable approximation of the per unit reservation price of a long position of $\lambda$ units of the non-traded asset $S$. We provide in Table \ref{value_p_borne}, the minimal value for the strike $K$. We see that this minimal value depends on the market situation. 

\begin{table}[H]
\begin{tabular*}{1\textwidth}{|l@{\extracolsep{\fill}}|ccccc|}
  \hline
    & $\rho=-0.5$ & $\rho=-0.25$ & $\rho=0$ & $\rho=0.25$ & $\rho=0.5$ \\
  \hline
   Situation 1 & $66.55$ &$61.86$ &$60.23$ &$61.10$ &$64.80$ \\
   Situation 2 & $0.95$ &$0.91$ &$0.89$ &$0.89$ &$0.90$ \\
   Situation 3 & $9.67$ &$7.49$ &$6.57$ &$6.33$ &$6.84$\\
   \hline
\end{tabular*}
\caption{Values of $\frac{1}{\lambda\gamma(1-\rho^2)}\left(\frac{\bar w}{\eta^2 T}+\frac{\bar w^2}{2\eta^2 T}\right)$ in the market situations of Table \ref{parameters1} for several values of $\rho$ when $\gamma=0.5$.}   
\label{value_p_borne}
\end{table}

We provide in Figure \ref{lambert_direct_put} numerical simulations for the reservation price $p^{put}$. 
\begin{figure}[h]
\captionsetup[subfigure]{justification=centering}
  \centering
  \begin{subfigure}[b]{0.4\linewidth}
    \includegraphics[width=\linewidth]{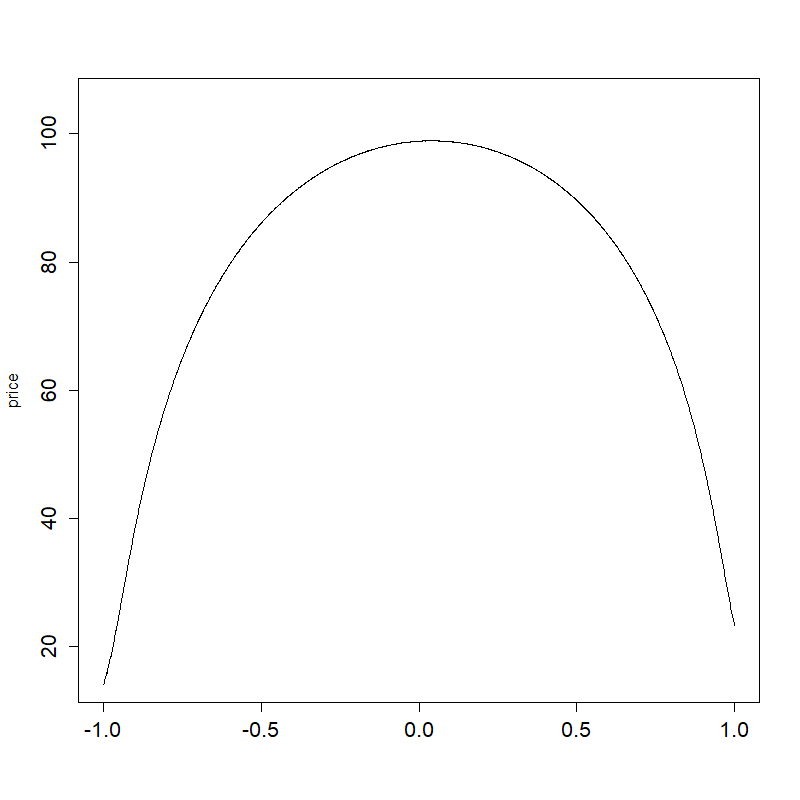}
     \caption*{Lambert Monte Carlo method}
  \end{subfigure}
  \begin{subfigure}[b]{0.4\linewidth}
    \includegraphics[width=\linewidth]{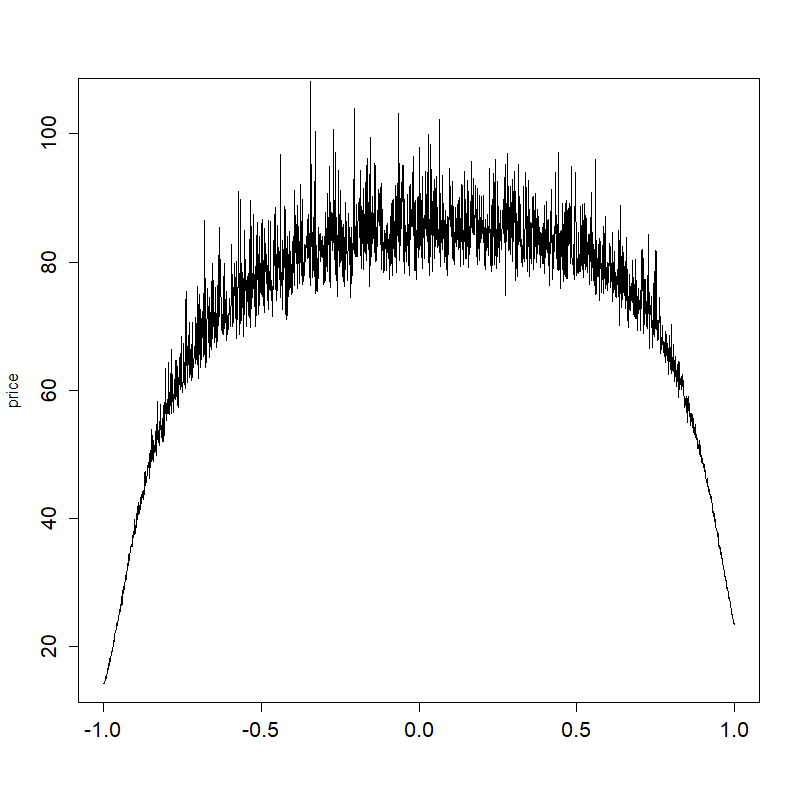}
    \caption*{Direct Monte Carlo method}
  \end{subfigure}
  \caption{Computation of the ask reservation price $p^{put}$ as a function of  $\rho$ in the market situation 1 of Table \ref{parameters1} with $\gamma=0.5$ and $K=110$ with $10^4$ simulations per point.}
  \label{lambert_direct_put}
\end{figure}

As in the long stock position case, the DMC estimator provides an estimator with high variance. Conversely, the LMC estimator exhibits low variance. For example, in the market situation 1 of Table \ref{parameters1}, with $\gamma=0.5$ and $\rho=0$, the variance is 4.745 for the DMC method (for $10^4$ simulations per point) while the variance of the LMC method is 0.007.  
The results are similar in the market situation 2 of Table \ref{parameters1} with $\gamma=5$ and $K=s_0$ with $10^4$ simulations per point.

We now propose in Table \ref{table_m1_put},  numerical simulations for the value function. In the market situation 1 of Table \ref{parameters1}, when $x_0=75$, $\gamma=0.5$ and $K=s_0=100$, $V^{put}$ and $V_D^{put}$ are large enough to be relevant for correlation $\rho$ such that $|\rho|\leq 0.5$. Indeed, for higher values of $\rho$, the values of $V^{put}$ and $V_D^{put}$ are very small and the numerical estimations are not relevant (for example, if $\rho=0.7$, $V\simeq -1.9.10^{-4}$ and $V^{put}_D=-1.8.10^{-4}$). 
\begin{table}[H]

\begin{tabular*}{1\textwidth}{|l@{\extracolsep{\fill}}|ccccc|}
  \hline
    & $\rho=-0.5$ & $\rho=-0.25$ & $\rho=0$ & $\rho=0.25$ & $\rho=0.5$ \\
  \hline
   $V^{put}$ & $-0.023$ &$-2.611$ &$-13.336$ &$-5.718$ &$-0.130$ \\
   $\mbox{CI}$ & $[-0.023,-0.022]$ &$[-2.585, -2.367]$ &$[-13.466,-13.208]$ &$[-5.777,-5.661]$ &$[-0.193, -0.132]$ \\
   $V_D^{put}$ & $-0.034$ &  $-3.654$ & $-18.531$  &$-8.035$ &$-0.193$\\
   \hline
\end{tabular*}
\caption{$V^{put}$ and $V_D^{put}$ in the market situation 1 of Table \ref{parameters1} when $\gamma=0.5$ and $x_0=75$. CI is a 99\% confidence interval of $V^{put}$. }
\label{table_m1_put}
\end{table}
We see that the deterministic approximation of $V^{put}$ is still reliable even if it is less faithful than the one for the long stock position. 
 Note that in the market situation 2 of Table \ref{parameters1} with $\gamma=0.1$ and $x_0=0$, (\ref{condition_put}) does not hold true.\\ 

\section{Conclusion and discussion}
\label{conclu}
We have presented a methodology to obtain an improved Monte Carlo  estimate of the reservation price of a derivative in a Black-Scholes market. 
The reservation price is a utility based price and is useful in an incomplete market where the underlying asset cannot be traded dynamically, hence the pricing by the martingale approach is not feasible. An alternative asset which can be traded dynamically in the same market is then used to hedge the position of the investor. The (bid) reservation price  is the amount which leaves the  trader indifferent to have the derivative or not. The reservation price can be written as the logarithm of the Laplace transform of a function of a log-normal random variable, that we have decomposed in two parts using the Lambert function. The first one is deterministic and the second one can be seen as an expected value which is closed to one. 
This decomposition provides an improved numerical method called the "LMC method". In the case of a long position on the non-traded asset, it allows the construction of upper and lower deterministic bounds for the reservation price and the value function, as well as an approximation of the optimal strategy and of some Greeks. The lower bound is a reliable approximation of the (bid) reservation price, which can be studied analytically and we obtained information about its behavior as well as its argminimum $\rho^*$. We have showed that this minimum may be the optimal choice of correlation for an agent who want to select an hedging asset $P$ that minimizes the reservation price and also the cash invested in $P$ retaining the same level of rentability.  We have shown numerically that choosing such a $P$ does not cripple the superhedging probability.
We were able to control the error between the bounds and the reservation price. Results for a short position on a put option have also been provided. One may ask if the same conclusions apply to the case of a long position on a call option. 
The decomposition of the bid reservation price of a long position on a call option can be obtained using similar methods (the details are not provided in the paper for sake of brevity). 
We introduce an adapted notion of  in- and out-of-the-money call options. An in-the-money (resp. out-of- or at-) call option of strike $K$ is an option, where $K>K^*$ (resp. $K<K_*$ or $K_* \leq K\leq K^*$), where 
$$K_*=\frac{\bar{w}}{\l\g(1-\r^2)\eta^2 T}\quad \mbox{ and } \quad K^*=s_0e^{\left(\nu-\eta\rho s_R-\frac{\eta^2}{2}\right)T},$$
see Theorem \ref{prop_dec} for the definition of $\bar{w}$. We easily see that $K_*\leq K^*.$
The LMC method provides better results for  in- and out-of-the-money call options, but for at-the-money call options, it is not clearly better than when using the DMC method.  Moreover, if $K< K_*$ or $K> K_*,$ the deterministic part of the decomposition of the reservation price is a good approximation of the price. This is illustrated in Figure \ref{scll} below.\\ 
\begin{figure}[H]
\captionsetup[subfigure]{justification=centering}
  \centering
  \centering
\resizebox{0.66\textwidth}{!}{
  \begin{subfigure}[b]{0.37\linewidth}
    \includegraphics[width=\linewidth]{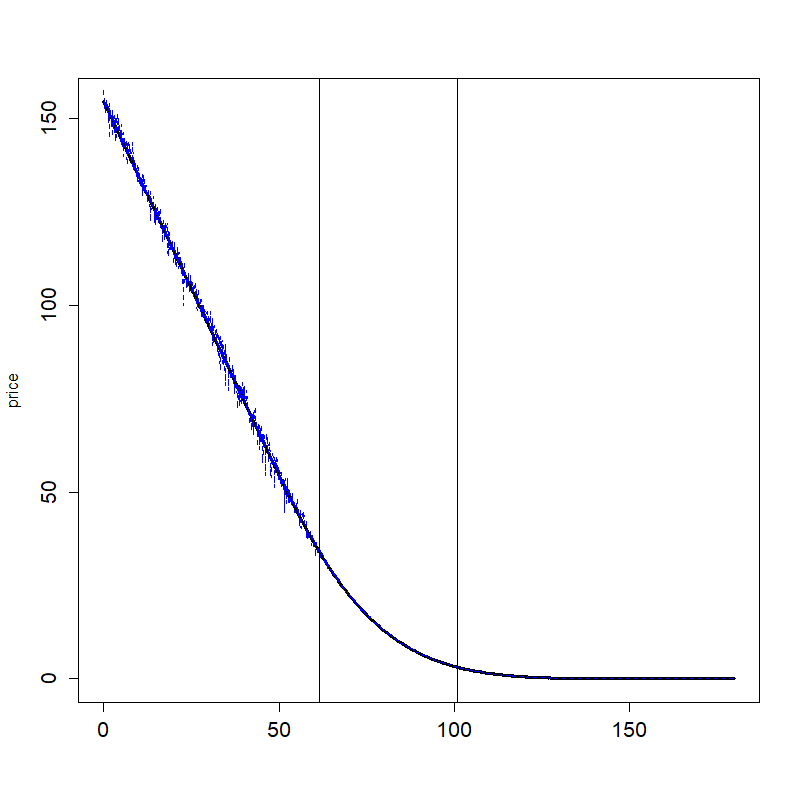}
    \caption*{LMC method and DMC method}
  \end{subfigure}
  \centering
  \begin{subfigure}[b]{0.37\linewidth}
    \includegraphics[width=\linewidth]{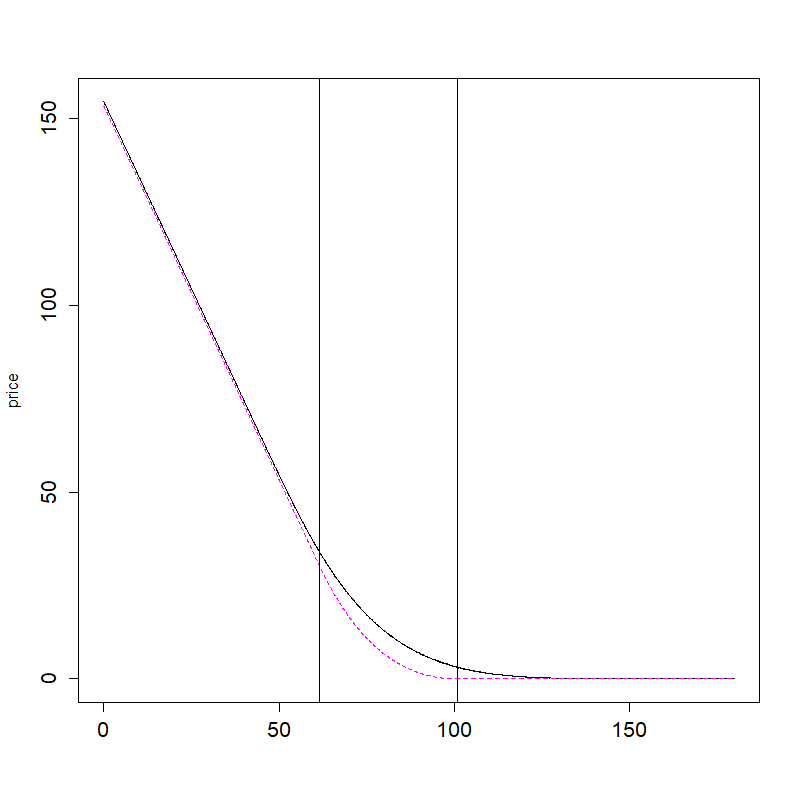}
     \caption*{Deterministic approximation}
  \end{subfigure}}
  \caption{In the market situation 1 of Table \ref{parameters1} with $\gamma=0.5$ and $\rho=0.8$, estimated bid reservation price of $\l (S_T-K)_+$ as a function of $K$,  LMC method in black, DMC method in blue (dashed line), and deterministic approximation in purple (dashed line). The price is simulated with $10^4$ simulations per point. The first vertical line is $K_*$ and the second one is $K^*$.
}
  \label{scll}
\end{figure}
One may ask to what extend one can go beyond the Black-Scholes model. One possible generalization would be to have several tradable assets to cover $S$.  To the best of our knowledge, no semi-closed formula like  \eqref{indiffdef_gen} has been derived in this case. Another generalization could be to consider that the parameters of $S$ and $P$ are time dependent. In this case, it is still possible to obtain a formula like \eqref{indiffdef_gen} and to perform our method.
Now if the coefficients are also stochastic, Tehranchi showed in \citep[Remark 5]{ref14} that the reservation price can always be written in a semi-closed formula though the resulting expression can be very different from \eqref{indiffdef_gen} and does not necesseraly imply the Laplace transform of a function of a lognormal random variable. One last possible generalization would be to consider a random drift coefficient $\nu$ (at time $t=0$) with a gaussian prior. This is useful in the context of drift erroneous estimate. In that situation, Monoyios provides in \cite{ref4} a semi-closed expression for the reservation price depending on the minimal martingale measure but still very similar to (\ref{indiffdef_gen}). Moreover, one can easily show that the value of the non-traded asset at time $T$ still follow a lognormal distribution under the minimal martingale measure. This means that our method would also be appropriate in this case.

\section{Appendix}
\label{secappendix}
\subsection{Taylor approximation of the reservation price}
\label{apptaylor}
As announced in Section \ref{taylorapprox}, we provide the Taylor expansion of the reservation price when the correlation is close to $1$. Let  $G =  e^{\eta\sqrt{T}N-\frac{\eta^2 T}{2}}$ and $g_c$ be the cumulant-generating function of $G,$ i.e., $g_c(t)=\ln(\mathbb{E}(e^{tG}))$. Since $G$ is a log-normal random variable, $g_c$  exists only for negative values of $t$.
Furthermore, as $G$ has moments of any orders, the function $g_c$ is $C^\infty$ on $(-\infty,0),$ with limits at $0^-,$ for all derivatives.
So, the function $g_c$ admits a Taylor expansion up to an arbitrary order $n>0$ at $0^-$ and has the following decomposition :
\begin{eqnarray}
\forall t\in (-\infty,0) \quad g_c(t)=\sum_{k=1}^{n}{\chi_k\frac{t^k}{k!}}+t^n\epsilon(t)\quad \mbox{with} \quad \underset{t\rightarrow0^-}{\lim}\epsilon(t)=0.\label{dl}
\end{eqnarray}
The coefficients $\chi_k$ are called the cumulants of the distribution of $G$ and have the following expressions :
\begin{eqnarray*}
\chi_1 & = & 1 \quad \chi_2  =  e^{\eta^2T}-1\\
\chi_3 & = &e^{3\eta^2T}-3e^{\eta^2T}+2 \quad
\chi_4  = e^{6\eta^2T}-4e^{3\eta^2T}-3e^{2\eta^2T}+12e^{\eta^2T}-6\\
\chi_5 & = &e^{10\eta^2T}-5e^{6\eta^2T}-10e^{4\eta^2T}+20e^{3\eta^2T}+30e^{2\eta^2T}-60e^{\eta^2T}+24.
\end{eqnarray*}
As a function of $\rho$, the reservation price can be written as follows (see, (\ref{indiffdef_gen}) with $h=id$):
\begin{eqnarray}
p(\rho)&=&{ -\frac{e^{-rT}}{\gamma(1-\rho^2)}g_c\left(-\lambda\gamma (1-\rho^2) s_0e^{\left(\nu-\eta\rho s_R \right) T}\right)} = \sum_{k=0}^{n}c_k(1-\rho)^k+(1-\rho)^n \e(\rho),
\label{taylor}
\end{eqnarray} 
with $\lim_{\rho\rightarrow1^-}\e(\rho)=0$ and where the coefficients $(c_k)_{k>0}$ can be expressed in terms of the sequence $(\chi_k)_{k>0},$ for any arbitrary order $n\geq 0$. 
Set $\alpha  =  \eta s_R T$, $\o=1-\rho$, $\overline{p}=\lambda s_0 e^{(\nu-\eta s_R)T}$ and $\hat{p}=e^{-rT}\overline{p}$. Note that $\hat{p}$ corresponds to the expectation of $\lambda e^{-rT}S_T$ computed under the unique probability measure which turns the discounted value of the traded asset $P$ into a martingale in the complete case, i.e., when $\rho=1$. Indeed,
\begin{eqnarray*}
\frac{d\widetilde{S}_t}{\widetilde{S}_t}= \left(\nu-r-\eta s_R\right)dt+\frac{\eta}{\sigma}\frac{d\widetilde{P}_t}{\widetilde{P}_t}.
\end{eqnarray*}
We also introduce $\f(\o)= \overline{p}\gamma \o (\o-2) e^{\alpha \o}.$ 
The reservation price $p$ takes the following form as a function of $\o$
\begin{eqnarray}
p(\o)= 
-\frac{e^{-rT}}{\g \o(2-\o)}\ln\left(\mathbb{E}\left(e^{\f(\o)G}\right)\right).
\label{new_writting_pe}
\end{eqnarray}
We first compute  the Taylor expansion of order $5$   at $0^+$ of  $\ln (\mathbb{E}(e^{\f G}))$. 
For that we use \eqref{dl} for $n=5$ and as $\lim_{\o \to 0^+}\f(\o)/ \o =0,$ we obtain that 
\begin{multline}
\ln \left(\mathbb{E}\left(e^{\f(\o)G}\right)\right) = \chi_1 \f(\o)+\frac{\chi_2}{2}\f^2(\o)+\frac{\chi_3}{6}\f^3(\o)+  \frac{\chi_4}{24}\f^4(\o)+\frac{\chi_5}{120}\f^5(\o) + \o^5 \varepsilon(\o),
\label{taylor_cumulant}
\end{multline} 
where $\varepsilon$ will denote in the sequel a function such that  $\lim_{\o \to 0^+} \varepsilon(\o)=0$. Now, we compute the Taylor expansion of order $5$ of the powers of $\f$. 
\begin{align*}
\f(\o) 
&= \overline{p}\gamma (\o^2-2\o) \left(1+\alpha \o + \frac{\alpha^2}{2}\o^2+\frac{\alpha^3}{6}\o^3+\frac{\alpha^4}{24}\o^4+\o^4 \varepsilon(\o)\right)\\
&= \overline{p}\g \left(-2\o + (1-2\alpha)\o^2+\alpha(1-\alpha)\o^3+ \alpha^2\left(\frac{1}{2}-\frac{\alpha}{3}\right)\o^4+ \frac{\alpha^3}{6}\left(1-\frac{\alpha}{2}\right)\o^5 \right)+\o^5 \varepsilon(\o))\\
\f^2(\o) &=  (\overline{p}\g)^2 \left(4\o^2+4(2\alpha-1)\o^3+ \left(8\alpha^2-8\alpha+1\right)\o^4+ \alpha\left(\frac{16}{3}\alpha^2-8\alpha+2\right)\o^5 \right)+\o^5 \varepsilon(\o)\\
\f^3(\o) &= \begin{multlined}[t] (\overline{p}\g)^3 \left(-8\o^3+12\left(1-2\alpha\right)\o^4+6\left(-6\alpha^2+6\alpha-1\right)\o^5 \right)+\o^5 \varepsilon(\o)
\end{multlined}\\
\f^4(\o) &= \begin{multlined}[t] (\overline{p}\g)^4 \left(16\o^4+32\left(2\alpha-1\right)\o^5 \right)+\o^5 \varepsilon(\o)\end{multlined}\\
\f^5(\o) &= \begin{multlined}[t] -32 (\overline{p}\g)^5\o^5+\o^5 \varepsilon(\o).\end{multlined}
\end{align*}
Plugging the previous expressions in (\ref{taylor_cumulant}), we find that 
$
\ln (\mathbb{E}(e^{\f(\o)G})) = \sum_{i=1}^5 a_i\o^i + \o^5 \varepsilon(\o),
$
where,
\begin{align*}
a_1 &= -2\chi_1  \overline{p}\gamma \quad a_2 = \chi_1 (1-2\alpha)\overline{p}\gamma+2\chi_2 (\overline{p}\gamma)^2\\
a_3 &= \chi_1\alpha(1-\alpha)\overline{p}\gamma+2\chi_2 (2\alpha-1)(\overline{p}\gamma)^2-  \frac{4}{3}\chi_3(\overline{p}\gamma)^3\\
a_4 &= \frac{1}{6}\chi_1\alpha^2(3-2\alpha)\overline{p}\gamma+\frac{1}{2}\chi_2 (8\alpha^2-8\alpha+1)(\overline{p}\gamma)^2+ 2\chi_3 (1-2\alpha)(\overline{p}\gamma)^3+\frac{2}{3}\chi_4  (\overline{p}\gamma)^4\\
a_5 &= 
\frac{1}{12}\chi_1\alpha^3(2-\alpha)\overline{p}\gamma+\frac{1}{3}\chi_2 \alpha(8\alpha^2-12\alpha+3)(\overline{p}\gamma)^2+ \chi_3  (-6\alpha^2+6\alpha-1)(\overline{p}\gamma)^3+\frac{4}{3}\chi_4  (2\alpha-1)(\overline{p}\gamma)^4 -\frac{4}{15}\chi_5(\overline{p}\gamma)^5. 
\end{align*}
We now proceed to the expansion of $p$ in \eqref{new_writting_pe}. 
\begin{align*}
p(\o)&=-\frac{e^{-rT}}{2\gamma}\frac{1}{\left(1-\frac{\o}{2}\right) \o}\ln\left(\mathbb{E}\left(e^{\f(\o)G}\right)\right) = -\frac{e^{-rT}}{2\gamma}\left(1+\frac{\o}{2}+\frac{\o^2}{4}+\frac{\o^3}{8}+\frac{\o^4}{16}+\o^4\varepsilon(\o)\right) \frac{1}{\o} \left( \sum_{i=1}^5 a_i \o^i+\o^5 \varepsilon(\o) \right) \\
& = \begin{multlined}[t]
-\frac{e^{-rT}}{2\gamma} \bigg(a_1+\left(\frac{a_1}2+a_2\right) \o^1+ \left(\frac{a_1}4+\frac{a_2}2+a_3\right) \o^2 
+ \left(\frac{a_1}8+\frac{a_2}4+ 
 \frac{a_3}2+a_4\right) \o^3  \\ +  \left(\frac{a_1}{16}+\frac{a_2}8+\frac{a_3}4+\frac{a_4}2+a_5\right) \o^4 \bigg) +\o^4 \varepsilon(\o). \end{multlined}
\end{align*}
Recalling that $\o=1-\rho$ and noting that for all $1 \leq k\leq 4$ , $-2\gamma e^{r T} c_{k}=a_{k+1}-\gamma e^{r T}c_{k-1}$,  we determine that
\begin{align*}
-2\gamma e^{rT}c_0 &= -2  \overline{p}\gamma\\
-2\gamma e^{rT}c_1 &=  (1-2\alpha)\overline{p}\gamma+2\chi_2 (\overline{p}\gamma)^2 -   \overline{p}\gamma = -2   \alpha\overline{p}\gamma +2\chi_2(\overline{p}\gamma)^2  \\
-2\gamma e^{rT}c_2 &= \alpha(1-\alpha)\overline{p}\gamma+2\chi_2 (2\alpha-1)(\overline{p}\gamma)^2-  \frac{4}{3}\chi_3(\overline{p}\gamma)^3 + \chi_2(\overline{p}\gamma)^2 - \alpha \overline{p}\gamma  = - \alpha^2 \overline{p}\gamma + \chi_2  (4\alpha-1)(\overline{p}\gamma)^2-\frac{4}{3}\chi_3(\overline{p}\gamma)^3\\
-2\gamma e^{rT}c_3 &= \frac{1}{6}\alpha^2(3-2\alpha)\overline{p}\gamma+\frac{1}{2}\chi_2 (8\alpha^2-8\alpha+1)(\overline{p}\gamma)^2+ 2\chi_3 (1-2\alpha)(\overline{p}\gamma)^3+\frac{2}{3}\chi_4(\overline{p}\gamma)^4  -\frac{1}{2}\alpha^2\overline{p}\gamma  \\ &\;\;\;\; + \frac{1}{2}\chi_2  (4\alpha-1)(\overline{p}\gamma)^2-\frac{2}{3}\chi_3(\overline{p}\gamma)^3 = -\frac{1}{3}\alpha^3\overline{p}\gamma+2\chi_2  \alpha(2\alpha-1)(\overline{p}\gamma)^2 +\frac{4}{3}\chi_3 (1-3\alpha)(\overline{p}\gamma)^3+\frac{2}{3}\chi_4(\overline{p}\gamma)^4\\
-2\gamma e^{rT}c_4 &=  \frac{1}{12}\alpha^3(2-\alpha)\overline{p}\gamma+\frac{1}{3}\chi_2 \alpha(8\alpha^2-12\alpha+3)(\overline{p}\gamma)^2+ \chi_3 (-6\alpha^2+6\alpha-1)(\overline{p}\gamma)^3+\frac{4}{3}\chi_4 (2\alpha-1)(\overline{p}\gamma)^4 -\frac{4}{15}\chi_5(\overline{p}\gamma)^5 \\
&\;\;\;\; -\frac{1}{6}\alpha^3\overline{p}\gamma+\chi_2  \alpha(2\alpha-1)(\overline{p}\gamma)^2 + \frac{2}{3}\chi_3 (1-3\alpha)(\overline{p}\gamma)^3+\frac{1}{3}\chi_4(\overline{p}\gamma)^4\\ &= -\frac{1}{12}\alpha^4\overline{p}\gamma + \frac{2}{3}\chi_2 \alpha^2 (4\alpha-3)(\overline{p}\gamma)^2+\chi_3 \left(-6\alpha^2+4\alpha-\frac{1}{3}\right) (\overline{p}\gamma)^3+\frac{1}{3}\chi_4(8\alpha-3)(\overline{p}\gamma)^4-\frac{4}{15}\chi_5 (\overline{p}\gamma)^5.
\end{align*}
So, we obtain that 
\begin{eqnarray*}
c_0& = &\hat{p}\quad
c_1 = \alpha\hat{p}-\chi_2\gamma e^{rT}\hat{p}^2 \quad
c_2 = \frac{1}{2}\alpha^2\hat{p}-\frac{1}{2}\chi_2\gamma(4\alpha-1)e^{rT}\hat{p}^2+\frac{2}{3}\chi_3\gamma^2e^{2rT}\hat{p}^3\\
c_3& = &\frac{1}{6}\alpha^3\hat{p}-\chi_2\gamma\alpha(2\alpha-1)e^{rT}\hat{p}^2+\frac{2}{3}\chi_3\gamma^2(3\alpha-1)e^{2rT}\hat{p}^3-\frac{1}{3}\chi_4\gamma^3 e^{3rT}\hat{p}^4\\
c_4& = & \frac{1}{24}\alpha^4\hat{p}-\frac{1}{3}\chi_2\gamma\alpha^2(4\alpha-3)e^{rT}\hat{p}^2+\chi_3\gamma^2\left(3\alpha^2-2\alpha+\frac{1}{6}\right)e^{2rT}\hat{p}^3-\frac{1}{6}\chi_4\gamma^3(8\alpha-3)e^{3rT}\hat{p}^4+\frac{2}{15}\chi_5\gamma^4e^{4rT}\hat{p}^5.
\end{eqnarray*}

\subsection{Proofs of Section \ref{secdecomp}}
\label{proof_1}
\subsubsection{Properties of the Lambert function}
In this section, we provides some useful properties of the Lambert function that will be used in the proofs of our results. By definition, $W$ has the following properties: 
\begin{eqnarray}
\label{eqlamb2}
\forall x>-\frac{1}{e} \quad  W(x)e^{W(x)}  =  x   \mbox{ and }
\forall x>-1\quad W(xe^{x})=x.
\end{eqnarray}
\begin{lemma}
\label{lemlamb}
The Lambert function $W$ is strictly increasing $C^{\infty}$ and \begin{equation}
\label{eqlambderiv}
\forall x>-\frac{1}{e},\; W'(x)=\frac{1}{e^{W(x)}+x}=\frac{W(x)}{x(1+W(x))}.
\end{equation}
Moreover, $W$ admits a Taylor expansion around $0$ : $W(x)=x-x^{2}+x^2\epsilon(x) \mbox{ with } \underset{x \rightarrow 0}{\lim}\; \epsilon(x)=0.$\\
Finally, $W(x)\underset{x \rightarrow +\infty}{\sim}\ln(x)$.
\end{lemma}
Note that Lemma \ref{lemlamb} and \eqref{eqlamb2} imply that  $
W(0)=0$ and thus that $W(x)>0, \quad \forall x>0. $
\begin{proof}[{Proof of Lemma \ref{lemlamb}}]
First, $W$ is clearly $C^{\infty}$, being the inverse of $x\mapsto x e^x$. Now, we differentiate \eqref{eqlamb2} and obtain that for $x>-\frac{1}{e}$,  
\begin{eqnarray}
\nonumber
(We^W)'(x)=1 & \iff & W'(x)e^{W(x)}+W'(x)W(x)e^{W(x)}=1  \\
\nonumber
& \iff & W'(x)e^{W(x)}\left(1+W(x)\right)=1 \\ 
\label{eqlamb3}
& \iff & W'(x)=\frac{1}{e^{W(x)}+x}=\frac{W(x)}{x(1+W(x))},
\end{eqnarray}
where we have used for the last equivalence the fact that $\forall x>-\frac{1}{e},\;W(x)>-1$ and \eqref{eqlamb2}.\\
As $\forall x>-\frac{1}{e},\; W'(x)>0$, $W$ is strictly increasing. Recalling that $W$ is $C^{\infty}$, we determine that 
$$W(x)=W(0)+W'(0)x+\frac{W''(0)}{2}x^2+x^2\epsilon(x)\quad \mbox{ with }\quad \underset{x \rightarrow 0}{\lim}\; \epsilon(x)=0.$$
Using \eqref{eqlamb2} and \eqref{eqlambderiv}, $W(0)=0$ and $W'(0)=1$.
Now differentiating \eqref{eqlamb3}, we determine that for $x>-\frac{1}{e}$ $$W''(x)=-\frac{1+W'(x)e^{W(x)}}{(x+e^{W(x)})^{2}}$$ and $W''(0)=-2$. Finally, as $\lim_{x\to +\infty} W(x)=+\infty$, (\ref{eqlamb2}) provides that $$\lim_{x\to +\infty}\frac{\ln(x)}{W(x)}=1+\lim_{x\to +\infty}\frac{\ln(W(x))}{W(x)}=1.$$ $\;\square$
\end{proof}

\subsubsection{Proof of Theorem \ref{propdecompo}}
\label{proof_decompo_annexe}
Theorem \ref{propdecompo} provides a multiplicative decomposition of the Laplace transform of a function of a log-normal random variable. For that, 
we need to find the minimal value of $k_\beta$ (see \eqref{defkbeta}). This is the purpose of Lemma \ref{variation_k}. 
Fix $\theta, v>0.$ Let $m$ be defined on $[0,\infty]$ by 
$$
m(\beta) = \left\{
    \begin{array}{ll}
       -\frac{W(\theta v  \eta^2 T )}{\eta\sqrt{T}} & \mbox{if}\;\; \beta \in \left[ \frac1{\theta v} \left(\frac{W(\theta v  \eta^2 T)}{\eta^{2}T}
+ \frac{W^2(\theta v  \eta^2 T)}{2\eta^{2} T}\right),+\infty \right]  \\
        0 & \mbox{elsewhere}.
    \end{array}
\right.
$$

\begin{lemma}
\label{variation_k}
Let $\beta\in [0,+\infty]$ and $\theta, v>0$. Let $u$ be any real number if $\beta=+\infty$ and $u=-v\beta$ otherwise. Then, $k_\beta$ reaches its minimum value in $m(\beta)$. 
Moreover, 
\begin{eqnarray}
\label{min_k_beta}
k_\beta(m(\beta)) & = &-\left(\theta v \beta -\frac{W(\theta v  \eta^2 T)}{\eta^{2}T}
- \frac{W^2(\theta v  \eta^2 T)}{2\eta^{2} T}\right)_+\\
k_{+\infty}(m(+\infty)) & = & 
\frac{W(\theta v  \eta^2 T)}{\eta^{2}T}
+ \frac{W^2(\theta v  \eta^2 T)}{2\eta^{2}T}+\theta u.
\label{min_k_inf}
\end{eqnarray}
\end{lemma}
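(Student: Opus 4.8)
The plan is to find the global minimizer of $k_\beta$ by exploiting the fact that, away from the threshold $z_\beta:=\frac{\ln\beta}{\eta\sqrt T}$, the function $k_\beta$ coincides with one of two smooth functions: on $(-\infty,z_\beta]$ it equals $k$, and on $(z_\beta,+\infty)$ it equals $z\mapsto z^2/2$. First I would treat $k$ on its own, which is exactly the case $\beta=+\infty$ (there $k_{+\infty}=k$). Writing $k'(z)=\theta v\eta\sqrt T e^{\eta\sqrt T z}+z$ and substituting $y=-\eta\sqrt T z$, the equation $k'(z)=0$ becomes $\theta v\eta^2 T=y e^{y}$; since $\theta v\eta^2 T>0$, identity \eqref{eqlamb2} yields $y=W(\theta v\eta^2 T)=:w>0$, hence the unique critical point $z^\star=-w/(\eta\sqrt T)$. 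The sign analysis carried out just before \eqref{eqf} gives $k'<0$ on $(-\infty,z^\star)$ and $k'>0$ on $(z^\star,+\infty)$, so $k$ attains its global minimum at $z^\star$. Using the Lambert identity $w e^{w}=\theta v\eta^2 T$ to replace $v e^{\eta\sqrt T z^\star}=v e^{-w}=w/(\theta\eta^2 T)$, a direct evaluation of $k(z^\star)=\theta(u+v e^{\eta\sqrt T z^\star})+(z^\star)^2/2$ produces \eqref{min_k_inf} and settles $\beta=+\infty$ with $m(+\infty)=z^\star$.

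For finite $\beta$ (so $u=-v\beta$ and $k_\beta$ is continuous) I would glue the two branches. The quantitative input is the comparison of $\beta$ with the threshold $\beta_0:=\frac{1}{\theta v}\left(\frac{w}{\eta^2 T}+\frac{w^2}{2\eta^2 T}\right)$ from the definition of $m$. Rewriting $\beta_0=e^{-w}(1+w/2)$ by the Lambert identity, the elementary inequalities $e^{w}>1+w>1+w/2$ (valid for $w>0$) give $e^{-w}<\beta_0<1$. Consequently, in the regime $\beta<\beta_0$ one automatically has $\beta<1$, hence $z_\beta<0$ and $0\in(z_\beta,+\infty)$. I would also record the evaluation $k(z^\star)=\theta v(\beta_0-\beta)$, which follows from \eqref{min_k_inf} together with $u=-v\beta$ and $\theta v\beta_0=\frac{w}{\eta^2 T}+\frac{w^2}{2\eta^2 T}$.

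I then split on $\beta\ge\beta_0$ versus $\beta<\beta_0$. When $\beta\ge\beta_0>e^{-w}$ one has $z^\star<z_\beta$, so $z^\star$ lies inside the region where $k_\beta=k$ and minimizes $k$ there; since $k(z^\star)=\theta v(\beta_0-\beta)\le 0$ while $k_\beta=z^2/2\ge 0$ on $(z_\beta,+\infty)$, the point $z^\star=m(\beta)$ is the global minimizer and $k_\beta(m(\beta))=-(\theta v\beta-\theta v\beta_0)$, which is \eqref{min_k_beta} as the bracket is nonnegative here. When $\beta<\beta_0$ one has $\beta<1$, hence $k_\beta(0)=0$; it remains to check that $k=k_\beta$ stays $\ge 0$ on $(-\infty,z_\beta]$. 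If $\beta\le e^{-w}$ then $z_\beta\le z^\star$, so $k$ is decreasing on $(-\infty,z_\beta]$ with infimum $k(z_\beta)=z_\beta^2/2\ge 0$ (continuity at $z_\beta$); if instead $e^{-w}<\beta<\beta_0$ then $z^\star<z_\beta$ and the minimum of $k$ on $(-\infty,z_\beta]$ is $k(z^\star)=\theta v(\beta_0-\beta)>0$. Either way the global minimum of $k_\beta$ is $0$, attained at $0=m(\beta)$, matching \eqref{min_k_beta} since the positive part vanishes.

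The main obstacle is bookkeeping the position of the interior critical point $z^\star$ relative to the gluing threshold $z_\beta$: the statement hinges on the two elementary but essential facts $e^{-w}<\beta_0<1$, which ensure that $z^\star$ sits strictly to the left of $z_\beta$ exactly when $\beta\ge\beta_0$, and that otherwise the parabolic branch $z^2/2$ takes over and pins the minimum to $0$ at the origin. Getting these comparisons right, and consistently invoking $w e^{w}=\theta v\eta^2 T$ to convert the transcendental expressions into the closed forms of \eqref{min_k_beta}--\eqref{min_k_inf}, is where the care lies; the monotonicity steps themselves are routine.
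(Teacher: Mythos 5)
Your proof is correct and follows essentially the same route as the paper's own: minimize $k$ via the Lambert identities $we^{w}=\theta v\eta^{2}T$ and $e^{-w}=w/(\theta v\eta^{2}T)$, glue the two branches of $k_\beta$ at $z_\beta=\frac{\ln\beta}{\eta\sqrt{T}}$, and compare the candidate minimizers $z^\star=-w/(\eta\sqrt{T})$ and $0$ according to whether $\beta$ exceeds the threshold $\beta_0$. Your upfront observation that $e^{-w}<\beta_0<1$, which settles the case $\beta\ge\beta_0$ by the single value comparison $k(z^\star)=\theta v(\beta_0-\beta)\le 0\le z^{2}/2$, mildly streamlines the paper's further sub-split into $\beta\ge 1$ versus $\beta<1$, but the substance of the argument is the same.
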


\begin{proof}[{Proof of Lemma \ref{variation_k}}]
We first study  $k : z\mapsto \theta \left(u+v e^{\eta\sqrt{T}z}\right)+\frac{z^2}{2}$. Then, we use \eqref{defkbeta} to link the minima of $k_\beta$ and  of $k$ according to the values of  $\beta$. 
As $v>0$, for $z\geq 0$, we have $k'(z)>0$. Letting $z<0$, as $W$ is strictly increasing and using \eqref{eqlamb2}, we obtain that 
\begin{eqnarray*}
k'(z) \ge 0 & \Leftrightarrow &  \theta v  \eta^2 T \ge - \eta\sqrt{T} z e^{-\eta\sqrt{T} z} 
 \Leftrightarrow  W(   \theta v  \eta^2 T) \ge W(- \eta\sqrt{T} z e^{-\eta\sqrt{T} z} )=-\eta\sqrt{T}z  \Leftrightarrow   z \geq - \frac{W(    \theta v  \eta^2 T)}{\eta\sqrt{T}},
\end{eqnarray*}
which gives the variations of $k$. Moreover, using \eqref{eqlamb2},
\begin{eqnarray*}
k \left(-\frac{W(\theta v  \eta^2 T )}{\eta\sqrt{T}} \right) &=&    \theta v e^{-W(\theta v  \eta^2 T) }
+ \frac{W^2(\theta v  \eta^2 T)}{2\eta^{2}T}+\theta u =  \frac{W(\theta v  \eta^2 T)}{\eta^{2}T}
+ \frac{W^2(\theta v  \eta^2 T)}{2\eta^{2} T}+\theta u.
\end{eqnarray*}
As $k_{\infty}=k$, we determine that $k_\beta$ reaches its minimum value in $m(+\infty)$ and that \eqref{min_k_inf} holds true.\\
We now study $k_\beta$ if $\beta<\infty$. As $u=-v\beta$, $k_\beta$ is continuous. 

Assume first that $\frac{W(\theta v  \eta^2 T)}{\eta^{2}T}
+ \frac{W^2(\theta v  \eta^2 T)}{2\eta^{2} T}\leq\theta v\beta$. 
We show that $k_\beta$ reaches its minimal value at $-\frac{W(\theta v \eta^2 T)}{\eta\sqrt{T}}.$ 
As $v,\, \theta>0,$ \eqref{eqlamb2} implies that 
\begin{eqnarray}
e^{-W(\theta v \eta^2 T)}=\frac{W(\theta v  \eta^2 T)}{\theta v \eta^2 T}\leq \beta.
\label{temp_min_1}
\end{eqnarray}
As \eqref{temp_min_1} is equivalent to $-\frac{W(\theta v \eta^2 T)}{\eta\sqrt{T}} \leq \frac{\ln \beta}{\eta\sqrt{T}}$, \eqref{defkbeta} and (\ref{min_k_inf}) show that
\begin{eqnarray}
\label{pulse}
k_\beta\left(-\frac{W(\theta v \eta^2 T)}{\eta\sqrt{T}}\right)&=&k\left(-\frac{W(\theta v \eta^2 T)}{\eta\sqrt{T}}\right)
 =  \frac{W(\theta v  \eta^2 T)}{\eta^{2}T}
+ \frac{W^2(\theta v  \eta^2 T)}{2\eta^{2} T}-\theta v \beta. 
\end{eqnarray} 
As $k_\beta=k$ on $(-\infty,\frac{\ln \beta }{\eta\sqrt{T}})$, it also shows that $k_\beta$ is decreasing on $(-\infty,-\frac{W(\theta v \eta^2 T)}{\eta\sqrt{T}})$ and increasing on $(-\frac{W(\theta v \eta^2 T)}{\eta\sqrt{T}},\frac{\ln \beta }{\eta\sqrt{T}})$.  
Now, on $(\frac{\ln \beta }{\eta\sqrt{T}},+\infty)$ $k_\beta(z)=\frac{z^2}2$, we distinguish two cases. First, assume that $\beta\geq 1$. Then, $-\frac{W(\theta v \eta^2 T)}{\eta\sqrt{T}} \leq 0 \leq \frac{\ln \beta }{\eta\sqrt{T}}$ and 
$k_\beta$ is increasing on $(\frac{\ln \beta }{\eta\sqrt{T}},+\infty)$. Thus, $k_\beta$ reaches its minimum value in $-\frac{W(\theta v \eta^2 T)}{\eta\sqrt{T}}.$  
If now $\beta<1$, using (\ref{temp_min_1}) again, we find that $-\frac{W(\theta v \eta^2 T)}{\eta\sqrt{T}} \leq\frac{\ln \beta }{\eta\sqrt{T}}\leq 0$. 
Therefore, $k_\beta$ is 
decreasing on $(\frac{\ln \beta }{\eta\sqrt{T}},0)$ and increasing on $(0,+\infty)$. Thus, $0$ and $-\frac{W(\theta v \eta^2 T)}{\eta\sqrt{T}}$ are two potential minima. However, using (\ref{pulse}), we determine that  
$ k_\beta(-\frac{W(\theta v \eta^2 T)}{\eta\sqrt{T}}) \leq  0 =k_\beta(0),$
and we conclude that $k_\beta$ reaches its minimal value at $-\frac{W(\theta v \eta^2 T)}{\eta\sqrt{T}}.$ \\
Suppose now that $\theta v\beta<\frac{W(\theta v  \eta^2 T)}{\eta^{2}T}
+ \frac{W^2(\theta v  \eta^2 T)}{2\eta^{2} T}$. This implies that $\beta < 1$. 
Indeed, using \eqref{eqlamb2} and $1+x\leq e^x$,
\begin{eqnarray*}
\beta < 
\frac{W(\theta v  \eta^2 T)}{\theta v\eta^{2}T}\left(1
+ \frac{W(\theta v  \eta^2 T)}{2}\right)
= e^{-W(\theta v \eta^2 T)}\left(1
+ \frac{W(\theta v  \eta^2 T)}{2}\right) \leq 1.
\end{eqnarray*}
We want to show that $k_\beta$ reaches its minimum value at $0.$ Then, as $\beta <1$, $k_\beta(0)=0.$ 

Assume first that $\beta\leq \frac{W(\theta v  \eta^2 T)}{\theta v\eta^{2}T}=e^{-W(\theta v \eta^2 T)}$, then $\frac{\ln \beta }{\eta\sqrt{T}}\leq -\frac{W(\theta v \eta^2 T)}{\eta\sqrt{T}}$. 
As $k_\beta=k$ on $(-\infty,\frac{\ln \beta }{\eta\sqrt{T}})$, $k_\beta$ is decreasing on $(-\infty,\frac{\ln \beta }{\eta\sqrt{T}})$. Moreover, on $(\frac{\ln \beta }{\eta\sqrt{T}},+\infty),$ $k_\beta(z)=\frac{z^2}2$ and $k_\beta$ is decreasing on $(\frac{\ln \beta }{\eta\sqrt{T}},0)$ and increasing on $(0,\infty).$
Thus, $k_\beta$ reaches its minimum value at $0$. \\
Assuming now that $\beta> \frac{W(\theta v  \eta^2 T)}{\theta v\eta^{2}T}=e^{-W(\theta v \eta^2 T)}$, we determine that $-\frac{W(\theta v \eta^2 T)}{\eta\sqrt{T}}<\frac{\ln \beta }{\eta\sqrt{T}}< 0$. Thus, as $k_\beta=k$ on $(-\infty,\frac{\ln \beta }{\eta\sqrt{T}}),$ $k_\beta$ is decreasing on $(-\infty,-\frac{W(\theta v \eta^2 T)}{\eta\sqrt{T}})$ and increasing on $(-\frac{W(\theta v \eta^2 T)}{\eta\sqrt{T}}, \frac{\ln \beta }{\eta\sqrt{T}}).$ Moreover, on $(\frac{\ln \beta }{\eta\sqrt{T}},\infty),$ $k_\beta(z)=\frac{z^2}2$, and $k_\beta$ is decreasing on $(\frac{\ln \beta }{\eta\sqrt{T}},0)$ and increasing on $(0,+\infty)$. Thus, $-\frac{W(\theta v \eta^2 T)}{\eta\sqrt{T}}$ and $0$ are two potential minima. Using \eqref{pulse}, we determine that $k_\beta(-\frac{W(\theta v \eta^2 T)}{\eta\sqrt{T}})>k_\beta(0)$
and we conclude that $k_\beta$ reaches its minimum in $0$. $\;\square$
\end{proof}

We are now in position to prove Theorem \ref{propdecompo}. 
\begin{proof}[{Proof of Theorem \ref{propdecompo}}]
We perform the Laplace method on (\ref{decompo_f}) and thus, get out the maximum of $e^{-k_\beta}$ of the integral. This gives the deterministic part of the decomposition. Then, an affine change of variable is applied to the remaining integral in order to retrieve the required expression.\\
Let $\beta\in [0,+\infty]$ and $\theta, v>0$. Let $u$ be any real number if $\beta=+\infty$ and $u=-v\beta$ otherwise. Assume that ($\ref{domaine}$) holds true. 
We set $\hat{\eta}=\eta\sqrt{T}$ and $\hat w=W(\theta v \hat{\eta}^2)$. 
Using (\ref{decompo_f}), we have that 
\begin{eqnarray}
L_{f,\beta}(\theta)=\exp(-k_\beta(m(\beta)))
\int_{\mathbb{R}}
\exp\left[
-\theta(f(e^{\hat{\eta} z})-u -v e^{\hat{\eta} z})1_{z\leq \frac{\ln(\beta)}{\hat{\eta}}} -  (k_{\beta}(z)-k_\beta(m(\beta)))
\right]
\frac{dz}{\sqrt{2\pi}}.
\label{integ}
\end{eqnarray}
Recall that when $\beta<+\infty$, we have chosen $u=-v\beta$. Thus, (\ref{domaine}) and Lemma \ref{variation_k} imply that $m(\beta)=-\hat w/\hat{\eta}.$ Using (\ref{min_k_beta}) and (\ref{min_k_inf}), we determine that 
\begin{eqnarray}
\exp\left[-k_\beta(m(\beta))\right] & = & \exp{\left[-\left(\theta u + \frac{\hat w}{\hat{\eta}^2}+\frac{\hat w^2}{2\hat{\eta}^2}\right)\right]}={L}_\beta(\theta).
\label{equality_ek}
\end{eqnarray}
Then, (\ref{integ}) and (\ref{equality_ek}) with the change of variable $z=y-m(\beta)=y+\hat w/\hat{\eta}$ and the fact that $e^{-\hat w}=\hat  w/(\theta v \hat{\eta}^2)$ (see (\ref{eqlamb2})) 
imply that 
\begin{align}
L_{f,\beta}(\theta) 
&= \begin{multlined}[t][\textwidth-2.5cm]{L}_\beta(\theta)\int_{\mathbb{R}}\exp\bigg[-\theta\left(f\left(\frac{\hat w}{\theta v \hat{\eta}^2}e^{\hat{\eta} y}\right)-u -v \frac{\hat w}{\theta v\hat{\eta}^2}e^{\hat{\eta} y}\right)1_{y\leq \frac{\ln \beta}{\hat{\eta}}+\frac{\hat w}{\hat{\eta}}}-\left(k_{\beta}\left(y-\frac{\hat w}{\hat{\eta}}\right)-k_\beta\left(-\frac{\hat w}{\hat{\eta}}\right)\right)\bigg]\frac{dz}{\sqrt{2\pi}}. \end{multlined}
\label{integ2}
\end{align}
Letting $y\in \mathbb{R}$, we find that
\begin{eqnarray*}
k_\beta\left(y-\frac{\hat w}{\hat{\eta}}\right) 
&=& \theta(u+ ve^{\hat{\eta}y}e^{-\hat w})1_{y\leq \frac{\ln \beta}{\hat{\eta}}+\frac{\hat w}{\hat{\eta}}}+\frac{(y-\frac{\hat w}{\hat{\eta}})^2}{2}= \theta u 1_{y\leq \frac{\ln \beta}{\hat{\eta}}+\frac{\hat w}{\hat{\eta}}} +\frac{\hat w}{\hat{\eta}^2}e^{\hat{\eta}y}1_{y\leq \frac{\ln \beta}{\hat{\eta}}+\frac{\hat w}{\hat{\eta}}}+\frac{\hat w^2}{2\hat{\eta}^2}-\frac{\hat w}{\hat{\eta}}y+\frac{y^2}{2}.
\end{eqnarray*}
We remark that $\frac{\ln \beta}{\hat{\eta}}+\frac{\hat w}{\hat{\eta}} \geq 0$. This is trivially true if $\beta=+\infty$. If $\beta<+\infty,$ as $u=-v\beta$ 
\begin{eqnarray*}
\frac{\ln \beta}{\hat{\eta}}+ \frac{\hat w}{\hat{\eta}} \geq 0 & \Leftrightarrow & \beta e^{\hat w}= \beta \frac{\theta v  \hat{\eta}^2}{\hat w}  \geq 1
 \Leftrightarrow  \theta v \beta \geq \frac{\hat  w}{\hat{\eta}^2},
\end{eqnarray*}
which is true by (\ref{domaine}). Thus, $k_\beta(-\frac{\hat w}{\hat{\eta}})   =  \theta u  + \frac{\hat w}{\hat{\eta}^2}+ \frac{\hat w^2}{2\hat{\eta}^2}$ and
\begin{eqnarray*}
k_\beta\left(y-\frac{\hat w}{\hat{\eta}}\right)-k_\beta\left(-\frac{\hat w}{\hat{\eta}}\right) 
&=& \frac{\hat w}{\hat{\eta}^2}(e^{\hat{\eta} y}-1-\hat{\eta} y) -\left(\theta u+\frac{\hat w}{\hat{\eta}^2}e^{\hat{\eta} y}\right) 1_{y>\frac{\ln \beta}{\hat{\eta}}+\frac{\hat w}{\hat{\eta}}}+\frac{y^2}{2}.\nonumber
\end{eqnarray*}
Assume that $\beta<+\infty.$ Then, 
\begin{eqnarray*}
y>\frac{\ln \beta}{\hat{\eta}}+ \frac{\hat w}{\hat{\eta}} & \Leftrightarrow & e^{\hat{\eta}y}> \beta e^{\hat w}= \beta \frac{\theta v  \hat{\eta}^2}{\hat w} 
 \Leftrightarrow  \frac{\hat  w}{\hat{\eta}^2}e^{\hat{\eta}y}> \theta v \beta. 
\end{eqnarray*}
Thus, as $u=-v\beta$,
$$\left(\theta u+\frac{\hat w}{\hat{\eta}^2}e^{\hat{\eta}y}\right)1_{y>\frac{\ln \beta}{\hat{\eta}}+\frac{\hat w}{\hat{\eta}}}
= \left(\frac{\hat w}{\hat{\eta}^2}e^{\hat{\eta}y}-\theta v \beta\right)1_{\frac{\hat w}{\hat{\eta}^2}e^{\hat{\eta}y}- \theta v \beta>0}
=\left(\frac{\hat w}{\hat{\eta}^2}e^{\hat{\eta}y}-\theta v \beta\right)_+.$$
The same obviously holds true when $\beta=+\infty,$ with the convention $(-\infty)_+=0$. Thus, we get that 
\begin{eqnarray*}
\exp \left[-\left(k_\beta \left(y-\frac{\hat w}{\hat{\eta}}\right)-k_\beta\left(-\frac{\hat w}{\hat{\eta}}\right)\right) \right]&=& \phi_{\beta}(y,\theta)e^{-\frac{y^2}{2}}.
\end{eqnarray*}
Therefore, (\ref{integ2}) implies that 
\begin{eqnarray*}
L_{f,\beta}(\theta)&=&{L}_\beta(\theta) \int_{\mathbb{R}}\exp\left[-\theta\left(f\left(\frac{\hat w}{\theta v \hat{\eta}^2}e^{\hat{\eta}y}\right)-u-\frac{\hat w}{\theta  \hat{\eta}^2}e^{\hat{\eta} y}\right)1_{y\leq \frac{\ln \beta}{\hat{\eta}}+\frac{\hat w}{\hat{\eta}}}\right] \phi_\beta(y,\theta)e^{-\frac{y^2}{2}} \frac{dy}{\sqrt{2\pi}}={L}_\beta(\theta)I_{f,\beta}(\theta). \quad\square
\end{eqnarray*}
\end{proof}

\subsection{Proofs of Section \ref{seclongstock}}
\label{proof_2}
For ease of notation, we introduce the functions $\theta : \rho\in (-1,1)\mapsto \lambda\gamma(1-\rho^2),$  
$\widetilde{\theta} : \gamma \in (0,+\infty)\mapsto \lambda\gamma(1-\rho^2)$ 
and 
\begin{align}
w  \;:\;&  \rho\in (-1,1)\mapsto W\left(s_0 \eta^2 Te^{\left(\nu-\eta\rho s_R -\frac{\eta^2}{2}\right)T}  \theta (\r)\right) \label{eqw} \\
\widetilde w  \;:\; & \gamma\in (0,+\infty)\mapsto W\left(s_0 \eta^2 Te^{\left(\nu-\eta\rho s_R -\frac{\eta^2}{2}\right)T}  \widetilde{\theta} (\gamma)\right).
 \end{align} 
 Then, $A=a(\rho)$, $B=b(\rho)$, $D=d(\rho)$, $G=g(\rho)$ and $D=\widetilde{d}(\gamma)$, where
 \begin{align}
 d\; :\;& \rho\in (-1,1) \mapsto 
 \frac{  e^{-rT}}{\gamma\eta^2 T(1-\rho^2)} \left( {w}(\rho)+\frac{w^2 (\r)}{2}\right) \label{eqpetitd} \\
 a\; :\;&\rho\in\ (-1,1)\mapsto -\frac{ e^{-rT}}{\gamma(1-\rho^2)}\ln  \mathbb{E} \left(\exp\left[- \frac{{w}(\rho)}{\eta^2 T}\left(e^{\eta\sqrt{T}N}-1-\eta\sqrt{T}N\right)\right]\right)
\nonumber\\
 b\; :\;&\rho\in\ (-1,1)\mapsto\frac{e^{-rT}}{\gamma(1-\rho^2)}\frac{{w}(\rho)}{\eta^2 T} \left(e^{\frac{\eta^2}2T}-1\right)\label{B}\\
 g\; :\;& \rho\in (-1,1) \mapsto d(\rho)+b(\rho)= \frac{ e^{-rT}}{\gamma(1-\rho^2)}\frac{w(\rho)}{\eta^2 T}\left(e^{\frac{\eta^2}{2} T}+\frac{w(\rho)}{2}\right) \label{g}\\
  \widetilde d\; :\;& \gamma\in (0,+\infty) \mapsto 
 \frac{ \lambda e^{-rT}}{\widetilde{\theta}(\gamma)\eta^2 T} \left( \widetilde w (\g)+\frac{\widetilde w^2 (\g)}{2}\right)
 \end{align} 

\subsubsection{Proofs of Subsection \ref{subappp}}
\label{sub_52}

We start with the proof of Theorem \ref{Evry}. Proposition \ref{propa} gives some asymptotics of $d$ and $g$, while Proposition \ref{sensi} gives some greeks related to $d$.

\begin{proof}[Proof of Theorem \ref{Evry}]
Assume for a moment that we have proved that  for all $\rho\in(-1,1),$ 
\begin{eqnarray} \label{eqab}\left(b(\r)-\frac{ e^{-r T}}{\gamma(1-\rho^2)}\frac{{w}(\rho)^2}{2\eta^4 T^2}e_2\right)_+\le a(\r) \le b(\r)\end{eqnarray} 
Then, as $p=D+A$, Remark \ref{lower_bound_D}, \eqref{g} and \eqref{eqab} show that $D \leq p \leq d(\rho) +b(\rho)=G.$
Then, $V_D(x_0,\l)\leq V(x_0,\l) \leq V_G(x_0,\l)$ follows from \eqref{hat_D_value} and \eqref{value_pe}.  

We prove now \eqref{eqab}. Using the Jensen inequality for the convex function $x \mapsto e^{-x}$, we obtain that 
\begin{eqnarray*}
 \mathbb{E}\left(\exp \left[-\frac{{w}(\rho)}{\eta ^2 T} \left(e^{\eta \sqrt T N}-1-\eta \sqrt T N\right) 
\right]\right) & \geq &  
  \exp \left[-\frac{{w}(\rho)}{\eta ^2 T}\left(e^{\frac{\eta^2}2T}-1\right)
\right].
\end{eqnarray*}
And so, using Remark \ref{lower_bound_D}, we determine that 
 $0 \leq a(\r) \leq  \frac{ \lambda e^{-rT}}{\theta(\rho)}\frac{{w}(\rho)}{\eta^2 T} \left(e^{\frac{\eta^2}2T}-1\right)=b(\r).$
Then, as $e^{-x}\leq 1-x+\frac{x^2}{2}$ for all $x\geq 0$, we have that
\begin{eqnarray*}
 \mathbb{E}\left(\exp \left[-\frac{{w}(\rho)}{\eta ^2 T} \left(e^{\eta \sqrt T N}-1-\eta \sqrt T N\right) 
\right]\right)\leq 1-\frac{{w}(\rho)}{\eta^2 T}\left(e^{\frac{\eta^2 T}{2}}-1\right)+\frac{{w}(\rho)^2}{2\eta^4 T^2}e_2,
\end{eqnarray*} 
where 
$e_2=\mathbb{E}\left(\left(e^{\eta\sqrt{T}N}-1-\eta\sqrt{T}N\right)^2\right) = e^{2\eta^2 T}-2(1+\eta^2  T)e^{\frac{\eta^2 T}{2}}+\eta^2 T +1.$
Composing with the logarithm function and using the inequality $\ln(1+x)\leq x$ yields
\begin{eqnarray}
 a(\r) \ge b(\r)-\frac{\lambda e^{-r T}}{\theta(\rho)}\frac{{w}(\rho)^2}{2\eta^4 T^2}e_2,
\label{ineq_A}
\end{eqnarray}
which combined with $a(\r)\geq 0$ yields in turn the lower bound for $a(\r)$.\\
Now, we prove \eqref{ineq1} and \eqref{ineq2}. 
As $ D\leq p=D+A \le D+B,$ we find that 
\begin{eqnarray*}
1\leq \frac{p}{D} \le 1+ \frac{e^{\frac{\eta^2T}{2}}-1}{1+\frac{{w}(\rho)}{2}}&\iff &  \frac{1+\frac{{w}(\r)}{2}}{e^{\frac{\eta^2T}{2}}+\frac{{w}(\r)}{2}}\le\frac{D}{p}\le 1,\\
\frac{1+\frac{{w}(\r)}{2}}{e^{\frac{\eta^2T}{2}}+\frac{{w}(\r)}{2}} & = & e^{-\frac{\eta^2T}{2}}\frac{e^{\frac{\eta^2T}{2}}+\frac{{w}(\r)}{2}e^{\frac{\eta^2T}{2}}}{e^{\frac{\eta^2T}{2}}+\frac{{w}(\r)}{2}}\geq   e^{-\frac{\eta^2T}{2}},
\end{eqnarray*}
as ${w}(\r)\geq 0$. Moreover,  using \eqref{g} and (\ref{ineq_A}), 
$p=G+A-B \geq G-\frac{\lambda e^{-r T}}{\theta(\rho)}\frac{{w}(\rho)^2}{2\eta^4 T^2}e_2.$
As $p\geq D$, using (\ref{eqpetitd}), we obtain that 
\begin{eqnarray*}
\frac{G}{p} \leq 1+ \frac{\lambda e^{-r T}}{\theta(\rho)}\frac{{w}(\rho)^2}{2\eta^4 T^2 p}e_2 \leq 1+ \frac{\lambda e^{-r T}}{\theta(\rho)}\frac{{w}(\rho)^2}{2\eta^4 T^2 D}e_2 = 1+\frac{{w}(\rho)e_2}{2\eta^2 T+\eta^2 T{w}(\rho)}\leq 1+\frac{e_2}{\eta^2 T}. \quad\square
\end{eqnarray*}
\end{proof}

\begin{proof}[{Proof of Proposition \ref{propa}}] 
The proof relies on the properties of the Lambert function given in Lemma \ref{lemlamb}. 
As ${w}(\r) > 0$, we find that $d>0$ and $g>0$. Lemma \ref{lemlamb} imply that $d$ and $g$ are $C^1$ and that 
\begin{eqnarray}
\label{lim}
\frac{w (\r)}{\gamma (1-\rho^2) \eta^2T} & \underset{\rho\rightarrow 1^-}{\sim} & 
  s_0 e^{\left(\nu-\eta s_R -\frac{\eta^2}{2}\right)T}. 
\end{eqnarray}
As $\lim_{\rho\mapsto 1^-}{w}( \r)=0$, the first limits in \eqref{eqd0} and \eqref{eqg0} are proved. The case when $\rho$ goes to $-1^+$ is treated similarly. We deduce that $d$ and $g$ are bounded on $(-1,1)$.

Recalling (\ref{lim}), we obtain that
\begin{eqnarray*}
\lim_{\rho\rightarrow 1^-}b(\rho)& = &\lambda e^{-rT} s_0 e^{\left(\nu-\eta s_R-\frac{\eta^2}{2}\right)T}\left(e^{\frac{\eta^2T}{2}}-1\right) \quad \lim_{\rho\rightarrow -1^+}b(\rho)  =  \lambda e^{-rT} s_0 e^{\left(\nu+\eta s_R-\frac{\eta^2}{2}\right)T}\left(e^{\frac{\eta^2T}{2}}-1\right).
\end{eqnarray*}
Since $b$ is continuous on $(-1,1)$ (recall that $W$ is continuous), $b$ is bounded, and since $0\leq a \leq b$, $a$ is also bounded.

Using Lemma \ref{lemlamb}, we determine that $\widetilde w(\gamma) \underset{\gamma\to 0^+}{\sim}  s_0 \eta^2 T e^{(\nu-\eta\rho s_R-\frac{\eta^2}{2})T}\widetilde{\theta}(\g)$ and 
$\widetilde w(\gamma) \underset{\gamma\to +\infty}{\sim}  \ln(\gamma).$ Thus,
\begin{eqnarray*}
\widetilde d(\gamma) &\underset{\gamma\to 0^+}{\sim} & \l e^{-rT}  s_0 e^{(\nu-\eta\rho s_R-\frac{\eta^2}{2})T}\left(1+\frac{\widetilde{w}(\gamma)}{2}\right) 
  \underset{\gamma\to 0^+}{\sim}  \l e^{-rT} s_0 e^{(\nu-\eta\rho s_R-\frac{\eta^2}{2})T}  \underset{\gamma\to +\infty}{\sim} \frac{ \lambda e^{-rT}\ln^2( \g)}{2\widetilde{\theta}(\gamma)\eta^2 T} \underset{\gamma\to +\infty}{\to} 0. \quad\square
\end{eqnarray*}

\end{proof}

\begin{proof}[Proof of Proposition  \ref{sensi}]
The proof consists in the differentiation of $d$ and $\widetilde{d}$ together with Lemma \ref{lemlamb} in order to simplify the equalities involving the Lambert function. 
Let $\delta : \rho\in[-1,1]\mapsto \nu-\eta\rho s_R$. We easily determine that for all $\rho\in (-1,1)$, 
 \begin{eqnarray*}
d'(\rho) & = & 
\frac{ \l e^{-rT} }{\eta^2 T \theta(\r)}\left(   w' (\r)(1+ {w (\r)})  
- \frac{\theta'(\r)}{\theta(\r)} w (\r)\left( 1+ \frac{w (\r)}2\right)\right). 
\end{eqnarray*}
As $s_0 \eta^2 Te^{(\delta(\rho) -\frac{\eta^2}{2})T}  \theta (\r)  \geq 0$,  \eqref{eqlambderiv} and \eqref{eqw} imply that 
\begin{align*}
w'(\r) & =  s_0 \eta^2 Te^{\left(\delta(\rho) -\frac{\eta^2}{2}\right)T}  \theta (\r) W'\left(s_0 \eta^2 Te^{\left(\delta(\rho) -\frac{\eta^2}{2}\right)T}  \theta (\r) \right) \left(\frac{\theta'(\r)}{\theta(\r)} - \eta s_R T\right)\\
& =  \frac{\frac{\theta'(\r)}{\theta(\r)} -\eta s_R T }{1+ \frac1{W\left( s_0 \eta^2 Te^{(\delta(\rho) -\frac{\eta^2}{2})T}  \theta (\r) \right)}}=\frac{\frac{\theta'(\r)}{\theta(\r)} -\eta s_R T }{1+ \frac1{w(\r)}}. 
\end{align*}
It follows that
\begin{eqnarray}
\label{sign_delta_rho}
\Delta_\rho = d'(\rho) & = & 
\frac{ \l e^{-rT} w(\r)}{\eta^2 T \theta(\r)}\left(  \frac{\theta'(\r)}{\theta(\r)} -\eta s_R T 
- \frac{\theta'(\r)}{\theta(\r)} \left( 1+ \frac{w (\r)}2\right)\right) = -\frac{ \l e^{-rT} w(\r)}{\eta^2 T \theta(\r)}\left(  \eta s_R T - \frac{\rho}{1-\rho^2}{w}(\rho)\right) .
\end{eqnarray}
Using (\ref{eqlambderiv}), we see that
$\widetilde w'(\gamma)=\frac{\widetilde{\theta}'(\g)}{\widetilde{\theta}(\g)}\frac{\widetilde w(\gamma)}{1+\widetilde w(\gamma)}.$
The function $\widetilde d$ is $C^{1}$ (see Lemma \ref{lemlamb}) and we easily ascertain that
\begin{eqnarray*}
\Delta_\gamma=\widetilde{d}'(\g) &=& \frac{\l e^{-rT}}{\eta^2 T}\left(\frac{\widetilde w'(\g)}{\widetilde{\theta}(\g)}(1+\widetilde w(\g))-\frac{\widetilde{\theta}'(\g)}{\widetilde{\theta}^2(\g)}\left(\widetilde w(\g)+\frac{\widetilde w^2(\g)}{2}\right) \right) = -\frac{\l e^{-rT}}{2\widetilde \theta^2(\gamma)\eta^2 T}\widetilde w^2(\gamma)\widetilde{\theta}'(\gamma)\\
&=&  -\frac{ e^{-rT}}{2 \eta^2 T\gamma^2(1-\rho^2)}\widetilde w^2(\gamma). \quad\square
\end{eqnarray*}
\end{proof}

\subsubsection{Proofs of Subsection \ref{subvard}}
\label{proof_3}
 Proposition  \ref{variation} gives the expression of the minimum of $d$ in $\rho$ as well as the value of $d$ at that minimum. The resulting minimum is then expanded in Lemma \ref{closed} using the Taylor expansion of the Lambert function given in Lemma \ref{lemlamb}.
Proposition \ref{rho_star_strat} highlights the link between the deterministic strategy given in (\ref{new_strat}) and $\rho^*$. Proposition \ref{short_mat_strat} gives the expression of the correlation between $S$ and $P$ for which an agent would not invest much in $P$ in the case of a short maturity.  \newpage

\begin{proof}[Proof of Proposition \ref{variation}] 

As ${w}(\rho)> 0$ and $\theta(\rho)> 0$ for all $\rho \in (-1,1)$, (\ref{sign_delta_rho}) shows that the sign of $d'(\rho)$ is the sign of $\frac{\rho}{1-\rho^2} {w}(\rho)-\eta s_R T$.  Supposing first that $\mu\geq r$, then for $\rho\in(-1,0],$ $d'(\rho)\leq 0$.
If now $\rho\in (0,1)$
\begin{eqnarray*}
d'(\rho)\geq  0 & \iff & W\left(s_0\eta^2 Te^{\left(\delta(\rho) -\frac{\eta^2}{2}\right)T}  \theta (\r) \right) \geq  
 \eta  s_R T \frac{1-\rho^2}{\r}\\
  & \iff & s_0 \eta^2 Te^{\left(\nu-\eta\rho s_R-\frac{\eta^2}{2}\right)T}  \theta (\r)  \geq \eta  s_R T \frac{1-\rho^2}{\r} e^{\eta s_R \frac{1-\rho^2}{\r}T}\\
   & \iff &  \l \g    s_0 \eta^2  Te^{\left(\n  -\frac{\eta^2}{2}\right)T} \geq \eta \frac{s_R}{\r}T  e^{\eta \frac{s_R}{ \r}T }\\
    & \iff & W\left( \l  \g s_0 \eta^2   Te^{\left(\n  -\frac{\eta^2}{2}\right)T}  \right)\geq \eta \frac{s_R}{\r}T \\
& \iff & \rho\ge\rho^*. 
\end{eqnarray*}
For the second equivalence, we have composed  by the reciprocal function of $W$ 
and for the fourth one by $W$ (see \eqref{eqlamb2}). 

Suppose now that $\mu\leq r$. Then, for $\rho\in[0,1),$ $d'(\rho)\geq 0$.
If now $\rho\in (-1,0)$, we prove as before that $d'(\rho)\geq 0$ if and only if $\rho\ge\rho^*$.  
The results for the variation of $d$ are as follow. 

Setting $\k= \l \gamma s_0 \eta^2 Te^{(\nu -\frac{\eta^2}{2})T}>0,$ 
then $\r^*=(\eta s_R T)/W(\k).$ Assume now that $-1<\rho^*<1$,
\begin{eqnarray}
w(\r^*)
& = & W\left(\k e^{-\eta\rho^* s_R T}(1-\r^{* 2})\right)
=  W\left(e^{-\frac{\left(\eta s_R T\right)^2}{W(\k)}} \k\frac{W^2(\k)-\left(\eta s_R T\right)^2}{W^2(\k)}\right) \nonumber\\
& = &W\left(e^{-\frac{\left(\eta s_R T\right)^2}{W(\k)} + W(\k)} \left(-\frac{\left(\eta s_R T\right)^2}{W(\k)} + W(\k)\right)\right) = -\frac{\left(\eta s_R T\right)^2}{W(\k)} + W(\k)\nonumber\\ &=& W(\k)(1-\r^{* 2})= {w}(0)(1-\rho^{* 2})\label{wstar}, 
\end{eqnarray}
where we have used \eqref{eqlamb2} for the third equality (as $\k>0$) and for the fourth one as $-1<\rho^*<1$, and thus 
$-\frac{\left(\eta  s_R T\right)^2}{W(\k)} + W(\k)>0$. 
It follows that 
\begin{eqnarray*}
d(\rho^*)& = & \frac{e^{-rT}}{\gamma \eta^2 T (1-\rho^{* 2})} w (\r^*)\left(1 +\frac{w (\r^*)}{2}\right)
 = \frac{ e^{-rT}}{ \g \eta^2 T } W(\k)\left(1 +\frac{1}{2}W(\k)\left(1-\frac{\left(\eta s_R T\right)^2}{W^2(\k)}\right)\right) \\
& = & \frac{ e^{-rT}}{ \g \eta^2 T } \left(W(\k) +\frac{W^2(\k)}{2} -\frac{1}{2}\left(\eta s_R T\right)^2\right) 
 =  d(0)-\frac{e^{-rT}}{2\gamma} s_R^2 T.\end{eqnarray*}

Using \eqref{B}, \eqref{g} and (\ref{wstar}), we obtain that 
\begin{align*}
b(\rho^*) &=  \frac{e^{-rT}}{\gamma (1-\rho^{* 2})}\frac{{w}(\rho^*)}{\eta^2 T} \left(e^{\frac{\eta^2}2T}-1\right)=\frac{ e^{-rT}}{\gamma}\frac{{w}(0)}{\eta^2 T} \left(e^{\frac{\eta^2}2T}-1\right)=b(0)\\
g(\r^*)  &= d(\r^*)+b(\r^*) = d(0)+b(0)-\frac{e^{-rT}}{2\gamma} s_R^2 T = g(0)-\frac{e^{-rT}}{2\gamma} s_R^2 T.
\end{align*}

So, using Remark \ref{lower_bound_D}, we determine that 
\begin{eqnarray*}
p=d(\r)+a(\r) \geq  d(\r)\geq 
\left\{
    \begin{array}{ll}
        d(\rho^*)= d(0)-\frac{e^{-rT}}{2\g}s_R^2 T & \mbox{if } -1<\rho^*<1 \\
        \lim_{\rho\to 1^-}d(\rho) & \mbox{if}\; \rho^* \geq 1\\
        \lim_{\rho\to -1^+}d(\rho) & \mbox{if}\; \rho^* \leq -1.
    \end{array}
\right.
\end{eqnarray*}
and \eqref{unifp} follows from (\ref{eqd0}). Then, \eqref{value_pe} implies \eqref{unifv}. $\;\square$
\end{proof}

%
%
%

\begin{proof}[{Proof of Proposition \ref{rho_star_strat}}]
Assume that $-1<\rho^*<1$. Using (\ref{new_strat}) and (\ref{wstar}), we get that
\begin{eqnarray*}
{\Pi}^{D,\lambda}_{\rho^*}(0,s_0)&=& e^{-r T}\left(\frac{s_R}{\gamma\sigma}-\frac{\rho^*\; w(\rho^*)}{\sigma\eta\g (1-\rho^{* 2})T}\right)= e^{-r T}\left(\frac{s_R}{\gamma\sigma}-\frac{\rho^*\; w(0)}{\sigma\eta\g T}\right)= e^{-rT}\left(\frac{s_R}{\gamma\sigma}-\frac{\eta s_R T\;}{\sigma\eta\g T}\right)= 0. \quad\square
\end{eqnarray*}
\end{proof} 

\begin{proof}[{Proof of Proposition \ref{short_mat_strat}}]
Using (\ref{sup_strat}), we have to prove that $\lim_{T\to 0^+}\frac{\partial p}{\partial s}(\rho,s_0)=\lambda$. Differentiating in (\ref{indiffdef_gen}) with $h=id$, we get that
\begin{eqnarray*}
\frac{\partial p}{\partial s}(\rho,s_0) &=& \lambda e^{-rT} \frac{\mathbb{E}\left(e^{\left(\nu-\eta\rho s_R-\frac{\eta^2}{2}\right)T+\eta\sqrt{T}N} \exp\left[-\l\g(1-\r^2)s_0 e^{\left(\nu-\eta\rho s_R-\frac{\eta^2}{2}\right)T+\eta\sqrt{T}N}\right]\right)}{\mathbb{E}\left(\exp\left[-\l\g(1-\r^2)s_0 e^{\left(\nu-\eta\rho s_R-\frac{\eta^2}{2}\right)T+\eta\sqrt{T}N}\right]\right)}\underset{T\to 0^+}{\longrightarrow} \lambda,
\end{eqnarray*}
where we have used the dominated convergence theorem. The second assertion is trivial. $\;\square$
\end{proof}

\begin{proof}[{Proof of Lemma \ref{closed}}] 
We use the notation $\epsilon$ for any function vanishing when $T\to 0^+$. Using Lemma \ref{lemlamb}, we obtain that 
\begin{align*}
W\left(\l\g\eta^2 T s_0 e^{\left(\nu-\frac{\eta^2}{2}\right)T}\right)&= \l\g\eta^2 T s_0 e^{(\nu-\frac{\eta^2}{2})T} \left(1-\l\g\eta^2 T s_0 e^{\left(\nu-\frac{\eta^2}{2}\right)T}+T\epsilon(T)\right)\\
&= \l\g\eta^2 T s_0 \left(1+\left(\nu-\frac{\eta^2}{2}\right)T\right) \left(1-\l\g\eta^2 T s_0\right)+T\epsilon(T)\\
&= \l\g\eta^2 T s_0 \left(1-\left(\l \g\eta^2 s_0 -\left(\nu-\frac{\eta^2}{2}\right)\right)T+T\epsilon(T)\right).
\end{align*}
Using \eqref{minimum}, we thus obtain that 
\begin{eqnarray*}
\rho^* 
&=& s_R\frac{1}{\l\g\eta s_0}\left(1+\left(\l \g\eta^2 s_0 -\left(\nu-\frac{\eta^2}{2}\right)\right)T+T\epsilon(T)\right).
\end{eqnarray*}
Now, as $\nu> \frac{\eta^2}{2},$ Lemma \ref{lemlamb} implies again that 
\begin{eqnarray*}
W\left(\l\g\eta^2 T s_0 e^{\left(\nu-\frac{\eta^2}{2}\right)T}\right) \underset{T\to +\infty}{\sim}&\left(\nu-\frac{\eta^2}{2}\right)T \quad \mbox{and} \quad 
\rho^* \underset{T\to +\infty}{\sim}&s_R \frac{\eta}{\nu-\frac{\eta^2}{2}}. \quad \square
\end{eqnarray*}
\end{proof}

\section{Declarations}
\subsection{Author’s Contribution}
All authors contributed equally to this document. The numerical analysis was performed by the corresponding author. All authors read and approved the final manuscript.

\subsection{Competing interests}
The authors have no competing interests or other interests that might be perceived to influence the results and the discussion reported in this paper.

\subsection{Availability of Data and Materials}
The datasets analysed during the current study are available at the web address : finance.yahoo.com

\subsection{Funding}
The authors thank De Vinci Research Center  for its funding.

\end{document}